\documentclass[11pt]{article}
\usepackage[utf8]{inputenc}
\usepackage[T1]{fontenc}
\usepackage{lmodern}
\usepackage[margin=1in]{geometry}
\usepackage{amsmath,amssymb,amsfonts,amsthm}
\usepackage{graphicx}
\usepackage{booktabs}
\usepackage{xcolor}
\usepackage{tikz}
\usetikzlibrary{bayesnet,positioning,calc}
\usepackage[round,authoryear]{natbib}
\usepackage{hyperref}
\usepackage{url}
\usepackage[section]{placeins}
\graphicspath{{./}{plots/}{Fig/}}

\newcommand{\Var}{\operatorname{Var}}
\newcommand{\Cov}{\operatorname{Cov}}
\newcommand{\E}{E}
\newcommand{\eb}{\mathrm{EB}}

\newcommand{\rev}[1]{#1}
\newcommand{\bl}[1]{\textcolor{blue}{#1}}

\theoremstyle{plain}
\newtheorem{theorem}{Theorem}
\newtheorem{proposition}[theorem]{Proposition}
\newtheorem{lemma}{Lemma}
\theoremstyle{definition}
\newtheorem{example}{Example}

\theoremstyle{remark}

\title{Shrinkage through multiple identifiability}
\author{%
Carlos Garc\'ia Meixide\textsuperscript{1,2}\qquad David R\'ios Insua\textsuperscript{1}\\[1.2ex]
{\small \textsuperscript{1}Instituto de Ciencias Matem\'aticas, CSIC, Madrid, Spain}\\
{\small \textsuperscript{2}Departamento de Matem\'aticas, Universidad Aut\'onoma de Madrid, Spain}%
}
\date{ }

\newif\ifmerged\mergedtrue
\begin{document}
\maketitle
\begin{abstract}
We propose an empirical Bayes framework for combining estimators obtained from multiple identification functionals associated with the same estimand. We adaptively pool a collection of asymptotically linear estimators, each of which may target a different parameter because of violations of their identification assumptions. Although all estimators are computed from the same sample and are therefore dependent, we show that a working independence construction preserves consistency of the posterior mean under centered heterogeneity. Inference is driven by a latent heterogeneity parameter governing the dispersion of the different identification images. When this parameter is zero, the functionals share a common estimand and we construct frequentist confidence intervals using either sandwich variance estimation or subsampling. When it is positive, the functionals are interpreted as exchangeable draws from a latent population of causal effects, and we construct asymptotically valid Bayesian prediction intervals for the latent target of a new identification functional. These inferential procedures answer different questions, rely on distinct assumptions, and are therefore complementary rather than competing. We illustrate the framework by augmenting evidence from randomized controlled trials with observational studies.
\end{abstract}

\noindent\textbf{Keywords:} causal inference, heterogeneity, misspecification, empirical Bayes.

	\section{Introduction}

Randomized controlled trials (RCTs) are widely regarded as the gold standard for internal validity. However they are costly, slow, and operationally constrained. Recruitment is one bottleneck: if the experimental treatment can be obtained outside the trial, patients may be reluctant to enrol for fear of being allocated to the control arm, which inflates variance and erodes power even when identification is clear. When the treatment is available only within the trial, this reluctance disappears. Real-world data (RWD) can help, but mainly on the control side: records of standard care are often more plentiful than trial controls, which motivates augmenting RCTs with external controls \citep{viele2014use,schmidli2014robust,galwey2017supplementation,schmidli2020beyond,kojima2023dynamic,collignon2020clustered}. Such data are rarely available for a new intervention, which has little track record in routine practice. External and trial populations differing in region, measurement, or care pathway creates bias, so precision gains must be weighed against robustness to cross study discrepancies \citep{burger2021use,busgang2022selecting,collignon2020clustered}. Handling such discrepancies calls for causal inference methodology that accounts for hidden confounding, where a given causal parameter defined in the counterfactual full data world is typically linked to the observable data-generating distribution through more than one functional. 

{Examples of multiple identifying functionals for the same causal parameter arise routinely: through back-door versus front-door formulas \citep{henckel2019graphical,gorbach2023contrasting,mohammadtaheri2023optimal}; through several pre-treatment cutoffs in regression discontinuity \citep{thistlethwaite1960regression,hahn2001identification,lee2010regression}; through several candidate control groups in difference-in-differences; and through instrumental variables indexing several data environments \citep{peters2016invariant,meixide2025domainadaptation,meixide2025,meixide2025causalinferenceimpliedinterventions}. Under regular asymptotic linearity each functional admits an efficient influence curve, and different functionals generally attain different semiparametric efficiency bounds, so no single choice is uniformly optimal \citep{bickel1993efficient,robins1994estimation,hahn1998role}. Recent empirical Bayes proposals borrow across asymptotically linear estimators from related populations \citep{law2023}; closer to us, \citet{wu2026illusion} combine one experiment with many \emph{independent} observational studies whose bias distribution is identified through auxiliary zero-effect calibration studies. In contrast, our estimators arise from multiple identifying \emph{functionals} on a \emph{single} sample, so it is the dispersion among them---not an external calibration channel---that identifies the heterogeneity hyperparameter. }

{To wit, the starting point of this paper is a collection $\{\widehat\psi_j\}_{j=1}^{J}$ of asymptotically linear estimators corresponding to the empirical versions of $J$ identification functionals   $\{\psi_j\}_{j=1}^{J}$ evaluated on the probability distribution $P$ of the data, targeting a common causal parameter $\psi^*$. We model their latent targets through a Gaussian hierarchical working model, which is an algebraic device for deriving the estimators rather than an inferential assumption (Section~\ref{sec:identifiability}),
\begin{equation}\label{eq:working-model}
\psi_j \;=\; \psi^* + \varepsilon_j,\qquad \varepsilon_j\sim N(0,\tau^2),\qquad \widehat\psi_j\mid\psi_j\sim N(\psi_j,v_j),\qquad j=1,\ldots,J,
\end{equation} where $\tau^2\ge 0$ is a hyperparameter codifying heterogeneity. When every functional is exactly valid, $\psi_j(P)=\psi^*$ for all $j$ and $\tau^2=0$. When identification assumptions fail, the deviations $\varepsilon_j=\psi_j(P)-\psi^*$ induce $\tau^2>0$. EB estimation of $(\psi^*,\tau^2)$ yields a precision weighted combined estimator $\widehat\psi_{\mathrm{EB}}$, and the hyperparameter $\tau^2$ acquires a direct identification meaning by quantifying the dispersion of the $J$ identifying targets around $\psi^*$. The $J$ estimators are computed on the same sample, so they are strongly dependent. We handle this by adopting a {working independence device}, showing that the resulting point estimator remains consistent nevertheless, and correcting inference afterwards.}

{A particularly transparent instance arises when the functionals index an exhaustive partition of the covariate space into mutually exclusive strata, so that $\psi_j=\E[Y(1)-Y(0)\mid W\in s_j]$ is the conditional average treatment effect (CATE) of cell $s_j$---say $\{\text{woman, large tumor}\}$, $\{\text{man, large tumor}\}$, and so on. The law of total expectation makes the global average treatment effect the population-weighted average of these cell effects, $\psi^*=\E[\mathrm{CATE}(W)]$, so the deviations $\varepsilon_j=\psi_j-\psi^*$ are automatically centered: $\E[\varepsilon_j]=0$. This is precisely the centered-heterogeneity regime of Theorem~\ref{thm:route1} in this paper, and the EB estimator then delivers the partial pooling practitioners want \citep{hahn2018regularization}: it reads the global dispersion $\tau^2$ of the conditional effects and shrinks accordingly, borrowing strength toward the consensus when the cells agree ($\tau^2\approx0$) and protecting each cell's signal when they do not, so that no single populous stratum dominates. This recasts the classical \emph{reference-class problem}---how narrowly to condition before a subgroup becomes too sparse to estimate---as a matter of pooling rather than choice: cells too thin to estimate on their own, such as age $\times$ tumor size, jointly pin down $\psi^*$, while the conformal interval of Section~\ref{sec:bayesian} predicts $\psi_{J+1}$ for a patient whose stratum was never observed.}

{The paper's contributions are fourfold. \emph{First}, we establish consistency of the estimator $\widehat\psi_{\mathrm{EB}}$ for $\psi^*$ in two non-nested regimes: exact identifiability with fixed $J$, and a centered-bias regime in which the functionals are biased but the biases are mean zero across $j$ and $J$ grows with $n$, where consistency follows from a law of large numbers over the functional index. \emph{Second}, we show that the {working independence device} delivers consistent inference through an observation-level sandwich variance. \emph{Third}, we connect the $\tau^2>0$ regime to a prediction problem: by de Finetti's theorem the latent targets are conditionally i.i.d., making $\psi_{J+1}$ a genuine prediction target, and we construct split conformal intervals \citep{lei2018distribution} whose asymptotic validity rests on the correlated sampling noise being $O_p(n^{-1/2})$ while the latent heterogeneity is $O_p(1)$. \emph{Fourth}, we relate the framework to the minimax RCT/observational fusion literature \citep{LiGilbertDuanLuedtke2025,minimaxRCTobs}, recovering the low-bias pooling recommendation and extending it beyond strict exchangeability. The work treats the causal parameter itself as the object of the hierarchical model rather than as a byproduct of an outcome-level prior; see \citet{hahn2018regularization} for a related Bayesian treatment of average treatment effects under confounding.}

{The phase transition at $\tau^2=0$ organizes inference, as Table~\ref{tab:regimes} summarizes. At $\tau^2=0$ every functional targets the same $\psi^*$ and the relevant object is a confidence interval for it. For $\tau^2>0$ the latent target $\psi_{J+1}$ of a new functional becomes genuinely random and a split conformal prediction interval is available; the centered-bias regime reports both at once, a frequentist guarantee for the fixed $\psi^*$ and a Bayesian one for the random $\psi_{J+1}$ (Section~\ref{sec:bayesian}). Appendix~\ref{sec:catalogue} catalogues five canonical design families in which such functional multiplicity arises---regression discontinuity with multiple cutoffs, staggered difference-in-differences, difference-in-differences with multiple control groups, instrumental variables indexing $q$ environments (where $J= 2^q - q - 1$), and inverse probability weighting versus outcome regression---and all proofs are collected in Appendix~\ref{app:proofs-main}.}

\begin{table}[t]
\centering
\caption{The phase transition at $\tau^2=0$ organizes available inferential outputs. \emph{Estimation} means consistency of $\widehat\psi_{\mathrm{EB}}$ for the fixed target $\psi^*$; \emph{CI} means a confidence interval for $\psi^*$ (the sandwich of Proposition~\ref{prop:sandwich} requires exact identifiability); \emph{PI} means validity of prediction interval for the latent target $\psi_{J+1}$ of a new functional (conformal). At $\tau^2=0$ the prediction target is degenerate ($\psi_{J+1}=\psi^*$).}
\label{tab:regimes}
\renewcommand{\arraystretch}{1.3}
\setlength{\tabcolsep}{5pt}
\footnotesize
\begin{tabular}{lccc}
\toprule
Regime & Estimation of $\psi^*$ & CI for $\psi^*$ & PI for $\psi_{J+1}$ \\
\midrule
Exact identifiability ($\tau^2=0$)                              & $\checkmark$ (Thm~\ref{thm:consistency}) & $\checkmark$ sandwich / subsamp. & $\times$ \\
Centered heterogeneity ($\tau^2>0$, $E\varepsilon_j=0$) & $\checkmark$ (Thm~\ref{thm:route1})      & $\times$ & $\checkmark$ (Thm~\ref{thm:conformal}) \\
Uncentered ($\tau^2>0$)                                         & $\times$                                 & $\times$     & $\checkmark$ (Thm~\ref{thm:conformal}) \\
\bottomrule
\end{tabular}
\end{table}

	%==============================================================================
	\section{Maximum marginal likelihood}
	\label{sec:identifiability}

{Consider a collection of $J$ estimators $\widehat\psi_1,\ldots,\widehat\psi_J$, each arising from a distinct identifying functional of the same target $\psi$ and computed on a common sample. Because all of them are evaluated on the same observations, their joint sampling distribution obviously has non diagonal covariance. To obtain a tractable shrinkage rule, we adopt Gaussian hierarchical working model~\eqref{eq:working-model}, which treats the $\widehat\psi_j$ as conditionally independent given their latent targets $\psi_j$, with known marginal variances $v_j$. {This \emph{working independence device} is, at the end, an assumption---generally false, since the $\widehat\psi_j$ share one sample and are correlated---used only to derive the closed-form $\widehat\psi_{\mathrm{EB}}$, whose consistency does not require it.}

	{In the working model~\eqref{eq:working-model}, $\psi^*$ is the actual scalar causal target, $\psi_j=\psi_j(P)$ is the identifying target of the $j$-th functional, and $\varepsilon_j$ is its identification bias. We restrict attention to estimators $\widehat\psi_j=\psi_j(P_n)$ that are regular and asymptotically linear (RAL) \citep{tsiatis2006semiparametric}, i.e.\ there exists an efficient influence curve $D_j^*(O,P)$, with $E(D_j^*(O,P))=0$, such that
		\[
		\sqrt{n}\,(\widehat\psi_j-\psi_j(P))=\frac{1}{\sqrt{n}}\sum_{i=1}^n D_j^*(O_i,P)+o_p(1).
		\]
		By the central limit theorem, $\sqrt{n}\,(\widehat\psi_j-\psi_j(P))
    \xrightarrow{d}
N\!\big(0,\Var(D_j^*(O,P))\big)$ with the first order sampling variance of $\widehat\psi_j$ being
		$v_j=\frac{\Var(D_j^*(O,P))}{n}$.
}

{The Gaussian structure of~\eqref{eq:working-model}---both the prior $\varepsilon_j\sim N(0,\tau^2)$ and the likelihood $\widehat\psi_j\mid\psi_j\sim N(\psi_j,v_j)$---is a working device whose sole role is to fix the algebraic \emph{form} of the estimators, namely the inverse-variance (generalized least squares) combination $\widehat\psi_{\mathrm{EB}}$ and the pairwise $\widehat\tau^2$; none of the guarantees that follow inherit it. \emph{(i)} Consistency (Theorems~\ref{thm:consistency}--\ref{thm:route1}) uses only the convex-combination structure of $\widehat\psi_{\mathrm{EB}}$, so an arbitrary---skewed, multimodal, or discrete---prior $G$ leaves it intact. \emph{(ii)} The confidence interval (Proposition~\ref{prop:sandwich}) is a nonparametric, model-robust sandwich built from the observation-level influence curves, bypassing the parametric working model. \emph{(iii)} The conformal interval (Section~\ref{sec:bayesian}) is distribution-free in the continuous prior $G$ (Assumptions~(A1),~(A3)): the unknown limits of the training hyperparameters enter every score as a common shift and scale that the rank-based quantiles cancel. Normality therefore governs the \emph{efficiency} of the pooling, never the \emph{validity} of inference. The status of $\tau^2$ is itself regime-dependent: under the exchangeability Assumption~(A1) the $\varepsilon_j$ are i.i.d.\ from $G$ and $\tau^2=\operatorname{Var}_G(\varepsilon_j)=\int \varepsilon^2\,\mathrm{d}G(\varepsilon)$ is its variance functional, whereas in the centered heterogeneity regime of Theorem~\ref{thm:route1} it is just the uniform upper bound of condition~(C2).}

	{The three rows of Table~\ref{tab:regimes} correspond to three regimes of the mixing law. When every functional is valid, that is, $\psi_j(P)=\psi^*$ for all $j$, the mixing distribution in~\eqref{eq:working-model} degenerates at $\psi^*$ and $\tau^2=0$, so the $\widehat\psi_j$ differ only by sampling error. When, on the other hand, some functionals are invalid, then some deviations $\varepsilon_j=\psi_j(P)-\psi^*$ are nonzero and~\eqref{eq:working-model} becomes a random-effects structure with $\tau^2>0$ measuring identification heterogeneity, paralleling the between-study variance of random-effects meta-analysis and historical controls \citep{dersimonian2007random,collignon2020clustered}. A canonical example is multiple valid instruments $Z_j$, each identifying a local average treatment effect $\psi_j(P)=\E[Y(1)-Y(0)\mid\text{Complier}(Z_j)]$ over its own complier subpopulation under monotonicity and exclusion: relative to the population effect $\psi^*=\E[Y(1)-Y(0)]$ the deviations $\varepsilon_j=\psi_j(P)-\psi^*$ need not vanish even when all instruments are valid, giving $\tau^2>0$---the last row of Table~\ref{tab:regimes}.

Between this operating mode and every functional identifying the causal parameter lies the centered heterogeneity regime, $\tau^2>0$ with $E[\varepsilon_j]=0$: individual functionals remain biased, yet their precision-weighted average stays consistent for $\psi^*$ as $J$ grows (Section~\ref{sec:consistency}).}

	%==============================================================================
	%==============================================================================

 {Under model~\eqref{eq:working-model}, i.e. assuming normality and ignoring the cross functional dependence induced by shared data, the marginal log likelihood is}
	\[
	\ell(\psi,\tau^2)
	=
	-\frac12
	\sum_{j=1}^J
	\left\{
	\log(v_j+\tau^2)
	+
	\frac{(\widehat\psi_j -\psi)^2}{v_j+\tau^2}
	\right\}.
	\]

	\noindent The score equation in $\psi$, $\frac{\partial}{\partial \psi} \ell(\psi,\tau^2)=\sum_{j=1}^J (\widehat\psi_j-\psi)/(v_j+\tau^2)=0$, gives the maximum marginal likelihood estimator (MMLE) in closed form,
	\begin{equation}
\label{eq:eb-estimator}
\widehat\psi_{\mathrm{EB}}
=
\frac{\sum_{j=1}^J (v_j+\widehat\tau^2)^{-1}\,\widehat\psi_j}
{\sum_{j=1}^J (v_j+\widehat\tau^2)^{-1}}.
\end{equation}
	Differentiating $\ell(\psi,\tau^2)$ with respect to $\tau^2$ and setting the result to zero, the strict MMLE $\widehat\tau^2$ must satisfy the implicit equation
	\begin{equation}\label{eq:tau2_mmle_implicit}
	\sum_{j=1}^J
	\frac{(\widehat\psi_j -\widehat\psi_{\mathrm{EB}})^2 - (v_j+\widehat\tau^2)}{(v_j+\widehat\tau^2)^2}
	= 0,
		\end{equation}
	\label{implicit}

	\noindent where $\widehat\psi_{\mathrm{EB}}$ is the precision weighted mean in (\ref{eq:eb-estimator}).
	{Because $\widehat\tau^2$ appears in every denominator and the $v_j$ differ, this equation has no closed-form solution. We retain the closed-form pairwise estimator below for analytic transparency, since $\widehat\psi_{\mathrm{EB}}$ is consistent for any non-negative $\widehat\tau^2$ by Theorem~\ref{thm:consistency}.}
 
	{We bypass the implicit equation \eqref{implicit} via pairwise differences. Under the {working independence device}, the marginal law of $D_{jk}=\widehat\psi_j-\widehat\psi_k$ is $N(0,2\tau^2+v_j+v_k)$, so $E[(\widehat\psi_j-\widehat\psi_k)^2-(v_j+v_k)]=2\tau^2$. Averaging over the $\binom{J}{2}$ pairs of estimators yields an estimator algebraically equivalent to the standard unweighted variance estimator in random effects models \citep{hedges1983random},}

	\begin{equation}\label{tau2-pairwise}
	\widehat\tau^2 = \max\!\left\{
	\frac{1}{2\binom{J}{2}}
	\sum_{j<k}
	\Big[
	(\widehat\psi_j - \widehat\psi_k)^2 - (v_j + v_k)
	\Big],
	\;0
	\right\}.
	\end{equation}
	
	{The estimator \eqref{tau2-pairwise} differs from the strict MMLE characterized by~\eqref{eq:tau2_mmle_implicit}. Indeed, under the true joint law of $(\widehat\psi_1,\ldots,\widehat\psi_J)$, computed on the same sample,
	\begin{equation}\label{misv}
	E\!\left[(\widehat\psi_j-\widehat\psi_k)^2\right]
	= v_j+v_k+2\tau^2-2\Cov(\widehat\psi_j,\widehat\psi_k).
	\end{equation}
	Thus, ignoring the cross covariances makes the moment equation (\ref{eq:tau2_mmle_implicit}) identify a different population quantity rather than $\tau^2$ itself. The resulting $\widehat\tau^2$ is therefore a working nuisance parameter, not a consistent estimate of the structural heterogeneity variance. However, as Section~\ref{sec:consistency} will show, this does not invalidate consistency of the EB point estimator for $\psi^*$. This working-plug-in status is the canonical concern of the empirical Bayes critique tradition \citep{polson2026oldlook}; we defuse it through Proposition~\ref{prop:sandwich} and Theorem~\ref{thm:conformal} below by design rather than by hyperprior elicitation.}

	{Note that the framework does not require a formal pre-inference test for $\tau^2=0$: the sandwich confidence interval of Section~\ref{sec:frequentist} targets $\psi^*$, the conformal prediction interval of Section~\ref{sec:bayesian} targets $\psi_{J+1}$, and both can be reported jointly. The magnitude of $\widehat\tau^2$ relative to the typical $v_j$ then serves as an informal diagnostic of which output is the more informative. The exchangeability assumption is imposed on the latent targets $\psi_j$, not on the estimators $\widehat\psi_j$, whose variances $v_j$ are structurally distinct across $j$ and hence non-exchangeable. Appendix~\ref{app:extended-discussion} elaborates this diagnostic, the connection to the empirical Bayes $g$-modeling framework \citep{Efron2014}, and the two readings---epistemic-exchangeable versus deterministic-functional---of the hierarchical layer \citep{french2000statistical,wu2024bayesian}.}

	%==============================================================================

	\section{Point consistency}
	\label{sec:consistency}
	%==============================================================================
Let us first establish the convergence of the heterogeneity estimator under the working model and then show that the EB point estimator remains consistent for $\psi^*$ despite the misspecification originated from the working independence device. 
	
	\begin{proposition}[Convergence of the pairwise difference estimator $\widehat\tau^2$]
		\label{prop:tau2-consistency}
		{Under the working model~\eqref{eq:working-model}, the pairwise difference estimator $\widehat\tau^2$ converges in probability to some limit $\tau_0^2\ge 0$ as $n\to\infty$. If all identifying functionals share the same target, i.e.\ $\psi_j(P)=\psi^*$ for every $j$, then $\tau_0^2=0$.}
	\end{proposition}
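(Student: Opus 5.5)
The plan is to treat $\widehat\tau^2$ as a continuous function of the finite vector $(\widehat\psi_1,\ldots,\widehat\psi_J,v_1,\ldots,v_J)$ with $J$ fixed, and then apply the continuous mapping theorem. Write
\[
T_n=\frac{1}{2\binom{J}{2}}\sum_{j<k}\Big[(\widehat\psi_j-\widehat\psi_k)^2-(v_j+v_k)\Big],
\qquad \widehat\tau^2=\max\{T_n,0\}.
\]
Since $x\mapsto\max\{x,0\}$ is continuous, it suffices to show that $T_n$ converges in probability to a deterministic limit $T_0$. We then set $\tau_0^2=\max\{T_0,0\}\ge 0$.

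First, by the RAL expansion stated in Section~\ref{sec:identifiability},
\[
\widehat\psi_j-\psi_j(P)=n^{-1/2}\cdot n^{-1/2}\sum_i D_j^*(O_i,P)+o_p(n^{-1/2})=O_p(n^{-1/2}),
\]
so $\widehat\psi_j\to\psi_j(P)$ in probability for each $j$. Second, $v_j=\Var(D_j^*)/n\to0$ provided each influence curve has finite variance, which is implicit in the CLT step. If instead $v_j$ is replaced by a plug-in $\widehat v_j$, consistency of $n\widehat v_j$ for $\Var(D_j^*)$ still gives $\widehat v_j=o_p(1)$. Combining these, the continuous mapping theorem applied to the finitely many pairs yields
\[
T_n\xrightarrow{p}T_0=\frac{1}{2\binom{J}{2}}\sum_{j<k}\big(\psi_j(P)-\psi_k(P)\big)^2 .
\]
Because $T_0\ge 0$, we get $\tau_0^2=T_0$: the empirical pairwise dispersion of the fixed identification targets. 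Under the working model this is the natural finite-$J$ analogue of $\tau^2$, since with i.i.d.\ $\varepsilon_j$ its expectation is exactly $\tau^2$. The cross-covariances in~\eqref{misv} are $O(n^{-1})$ and so do not affect the limit. This explains why the statement asserts only convergence to \emph{some} $\tau_0^2$ rather than to the structural $\tau^2$.

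For the second claim, if $\psi_j(P)=\psi^*$ for every $j$, then every term $(\psi_j(P)-\psi_k(P))^2$ vanishes, so $T_0=0$ and $\tau_0^2=0$. One can sharpen this to a rate: $(\widehat\psi_j-\widehat\psi_k)^2=O_p(n^{-1})$ and $v_j+v_k=O(n^{-1})$, so $\widehat\tau^2=O_p(n^{-1})$.

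The main obstacle is conceptual rather than technical: pinning down what ``converges'' means given that the $\psi_j(P)$ are treated as fixed functionals of $P$ while the working model describes them as random. I would condition on the realized $\varepsilon_j$, that is, on $P$, so that all randomness comes from sampling and $J$ is fixed; the limit $\tau_0^2$ is then a deterministic function of $P$. The only regularity to check is finite variance of each $D_j^*$, together with consistency of any estimated $v_j$ if plug-ins are used. Both follow from the RAL assumption.
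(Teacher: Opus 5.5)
Your proof is correct, but it takes a different route from the paper's.

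\textbf{The paper's route.} The paper treats $\widehat\tau^2$ as the positive part of a U-statistic over the $\binom{J}{2}$ pairs of functionals. It regards the differences $D_{jk}$ as independent under the working model and invokes a U-statistic law of large numbers plus the continuous mapping theorem. It writes the limit as $\frac{1}{2\binom{J}{2}}\sum_{j<k}\{E[(\widehat\psi_j-\widehat\psi_k)^2]-(v_j+v_k)\}$, and obtains $\tau_0^2=0$ by setting $\Cov(\widehat\psi_j,\widehat\psi_k)=0$ via working independence.

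\textbf{Your route.} You keep $J$ fixed and use only coordinatewise consistency, $\widehat\psi_j\xrightarrow{p}\psi_j(P)$ and $v_j\to 0$. The continuous mapping theorem then acts on finitely many coordinates.

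\textbf{What your route buys.} For fixed $J$, convergence comes from consistency of each $\widehat\psi_j$ as $n\to\infty$, not from averaging over pairs. So you need no independence of the $D_{jk}$, which share indices and are dependent even under working independence. You also never need the moments $E[(\widehat\psi_j-\widehat\psi_k)^2]$, which asymptotic linearity alone does not guarantee. Your limit is explicit, $\tau_0^2=\frac{1}{2\binom{J}{2}}\sum_{j<k}(\psi_j(P)-\psi_k(P))^2$. It agrees with the paper's expression, computed given $P$, as $n\to\infty$, because the covariance term in \eqref{misv} is $O(n^{-1})$. For the same reason you prove $\tau_0^2=0$ under the true, dependent joint law rather than only under working independence, and you get the rate $\widehat\tau^2=O_p(n^{-1})$ as a bonus. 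Your conditioning on $P$ is what makes $\tau_0^2$ deterministic. With random $\varepsilon_j$ and fixed $J$, the limit would be the random variable $\frac{1}{2\binom{J}{2}}\sum_{j<k}(\varepsilon_j-\varepsilon_k)^2$, whose mean is $\tau^2$, as you note.

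\textbf{What the paper's framing buys.} The U-statistic view becomes the operative mechanism only when $J$ also grows, as in the regimes of (C3) and (A3). There, averaging over pairs of i.i.d.\ $\varepsilon_j$ is what turns this realized dispersion into a deterministic quantity tied to $\tau^2$.
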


	{ Under the true joint law, even under exact identifiability the pairwise moment satisfies $E[(\widehat\psi_j-\widehat\psi_k)^2]<v_j+v_k$ whenever $\mathrm{Cov}(\widehat\psi_j,\widehat\psi_k)>0$ as displayed in (\ref{misv}), which is the general situation for identifying functionals evaluated on a common sample. The unadjusted pairwise difference estimator therefore systematically underestimates the actual heterogeneity. The next theorem shows that this inconsistency of the nuisance parameter does not actually compromise the consistency of the target estimator.}{ In what follows, $\psi$ denotes the free argument of the score equations.}

	\begin{theorem}[Consistency under exact identifiability]
		\label{thm:consistency}
		{Suppose that $\widehat\psi_j$ is a consistent estimator of $\psi^*$ for all $j=1, \ldots , J$. Let $\widehat\tau^2$ be any estimator, rule, or fixed non negative constant with $\widehat\tau^2\xrightarrow{p}\tau_0^2$ for some $\tau_0^2\ge 0$. Then $\widehat\psi_{\mathrm{EB}}\xrightarrow{p}\psi^*$, regardless of $\tau_0^2$ and of the omitted cross functional dependence among the $\widehat\psi_j$. }
	\end{theorem}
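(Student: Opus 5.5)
The plan is to use only the convex-combination structure of $\widehat\psi_{\mathrm{EB}}$ in~\eqref{eq:eb-estimator}. Write the weights as
\[
\widehat w_j=\frac{(v_j+\widehat\tau^2)^{-1}}{\sum_{k=1}^J (v_k+\widehat\tau^2)^{-1}} .
\]
These are nonnegative and sum to one. Subtracting $\psi^*$ gives the identity
\[
\widehat\psi_{\mathrm{EB}}-\psi^*=\sum_{j=1}^J \widehat w_j\,(\widehat\psi_j-\psi^*),
\]
and hence the bound
\[
|\widehat\psi_{\mathrm{EB}}-\psi^*|\le \max_{1\le j\le J}|\widehat\psi_j-\psi^*|.
\]
Since $J$ is fixed and each $\widehat\psi_j\xrightarrow{p}\psi^*$ by hypothesis, the maximum of finitely many $o_p(1)$ terms is $o_p(1)$. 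A union bound makes this explicit: $P(\max_j|\widehat\psi_j-\psi^*|>\epsilon)\le\sum_j P(|\widehat\psi_j-\psi^*|>\epsilon)\to0$. This concludes the argument.

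The remaining task is to check that the weights are well defined, i.e.\ that the denominators are finite and positive. I will treat the $v_j$ as positive, as in the RAL setup with $\Var(D_j^*)>0$, or use their plug-in estimates, which are positive with probability tending to one. Since $\widehat\tau^2\ge0$, each $(v_j+\widehat\tau^2)^{-1}$ is then finite and positive, so every $\widehat w_j\in[0,1]$.

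The weights need not converge. The case to watch is $\tau_0^2=0$ with $v_j\to0$ at rate $1/n$, where ratios like $v_j/v_k$ may oscillate. The convexity bound sidesteps this entirely, because it never requires the weights to stabilize. This is also why the hypothesis $\widehat\tau^2\xrightarrow{p}\tau_0^2$ is used only to guarantee nonnegativity and finiteness, and why the conclusion holds regardless of $\tau_0^2$. Likewise, the joint dependence among the $\widehat\psi_j$ never enters: the union bound uses only their marginal consistency.

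I expect the main obstacle to be this bookkeeping about degenerate weights. If some $v_j$ can be zero, or are replaced by estimates that could vanish, then $(v_j+\widehat\tau^2)^{-1}$ may be infinite. In that case I would define the weights by continuity, putting all mass on the indices with $v_j+\widehat\tau^2=0$. Those weights still form a convex combination of the $\widehat\psi_j$, so the same max bound goes through unchanged.
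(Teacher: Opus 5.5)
Your proof is correct and follows the paper's argument essentially verbatim. Both write $\widehat\psi_{\mathrm{EB}}-\psi^*$ as a convex combination $\sum_j \widehat w_j(\widehat\psi_j-\psi^*)$ with $\widehat w_j\ge 0$, $\sum_j\widehat w_j=1$, bound it by $\max_{j}|\widehat\psi_j-\psi^*|$, and use that a maximum of $o_p(1)$ terms over a fixed finite $J$ is $o_p(1)$. Your added checks (positive $v_j$, degenerate weights) and your remark that only nonnegativity of $\widehat\tau^2$, not its convergence, is used are harmless refinements, consistent with the paper's closing note that the conclusion holds for any nonnegative $\widehat\tau^2$.
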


	{Consistency of the point estimator follows from the convex combination representation. Valid standard errors do not follow automatically though, since the inverse Fisher information computed under the {working independence device} ignores the positive covariance between the $\widehat\psi_j$ and is anti conservative. Section~\ref{sec:frequentist} develops appropriate uncertainty quantification.}
    
Theorem~\ref{thm:route1} below does not require $\psi_j(P)=\psi^*$ for any individual $j$. It replaces exact identifiability with correct location of the prior and growth in the number of functionals.

\begin{theorem}[Consistency under centered heterogeneity]
\label{thm:route1}
{Let $J=J_n$ denote the number of identifying functionals. Assume:}
\begin{enumerate}
    \item[(C1)] Uniform estimator consistency: $\widehat\psi_j-\psi_j(P)\xrightarrow{p}0$ for each $j$, with $\sup_{j\le J_n} v_j=O(n^{-1})$, where $v_j=\operatorname{Var}(D_j^*(O,P))/n$ is the first order variance of $\widehat\psi_j$.\rev{\ In addition, the efficient influence curves are uniformly bounded in $L^2(P)$, $\sup_{j}\|D_j^*(O,P)\|_{L^2(P)}\le C<\infty$, and the number of functionals grows sub exponentially in the sample size, $\log J_n=o(n)$.}
    \item[(C2)] Correctly centered deviations: the identifying limits satisfy $\psi_j(P)=\psi^*+\varepsilon_j$, with $\varepsilon_1,\varepsilon_2,\ldots$ independent, $E[\varepsilon_j]=0$, and $\sup_j E[\varepsilon_j^2]\le\tau^2<\infty$ (here $\tau^2$ is a uniform bound on the second moments, not a common variance).
    \item[(C3)] Convergent nuisance parameter: $\widehat\tau^2\xrightarrow{p}\tau_0^2$ for some $\tau_0^2\ge 0$.
    \item[(C4)] Non dominant weights: $\max_{1\le j\le J_n} \phantom{s} w_j\xrightarrow{J_n \to \infty } 0$, where $w_j=(v_j+\tau_0^2)^{-1}/\sum_{k=1}^{J_n}(v_k+\tau_0^2)^{-1}$.
\end{enumerate}
{Then $\widehat\psi_{\mathrm{EB}}\xrightarrow{p}\psi^*$. }
\end{theorem}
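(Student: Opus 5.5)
The plan is to decompose the error of the EB estimator around its convex-combination form. Write $\widehat w_j=(v_j+\widehat\tau^2)^{-1}/\sum_k(v_k+\widehat\tau^2)^{-1}$, so that $\sum_j\widehat w_j=1$. Then
\[
\widehat\psi_{\mathrm{EB}}-\psi^*=\sum_{j=1}^{J_n}\widehat w_j(\widehat\psi_j-\psi_j(P))+\sum_{j=1}^{J_n}\widehat w_j\varepsilon_j=:A_n+B_n .
\]
It suffices to show $A_n\xrightarrow{p}0$ and $B_n\xrightarrow{p}0$.

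\emph{The sampling term $A_n$.} By convexity, $|A_n|\le\max_{j\le J_n}|\widehat\psi_j-\psi_j(P)|$, so I need a uniform bound over $j\le J_n$. From asymptotic linearity, $\widehat\psi_j-\psi_j(P)=n^{-1}\sum_i D_j^*(O_i,P)+r_{j,n}$. The $L^2$ bound $\|D_j^*\|\le C$ alone only gives $E\max_j|\bar D_j|\le C\sqrt{J_n/n}$ via a union/second-moment bound. I will therefore use a maximal inequality and read (C1) as supplying the needed tail control, e.g.\ sub-Gaussian or bounded influence curves. That yields $\max_j|\bar D_j|=O_p(\sqrt{\log J_n/n})=o_p(1)$ because $\log J_n=o(n)$. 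I will also take the remainders to be $o_p(1)$ uniformly in $j\le J_n$, which is the intended content of ``uniform estimator consistency''.

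This uniformity is the main obstacle. Pointwise consistency for each fixed $j$ does not control a maximum over a growing family. The rigorous route is to state explicitly the uniform remainder condition plus a Bernstein/sub-Gaussian maximal inequality. The fallback, if only $L^2$ bounds are available, is to avoid the maximum altogether. One bounds $E|A_n|$ directly when the weights are replaced by deterministic limits $w_j$: $E|\sum_j w_j\bar D_j|\le\sum_j w_j\|D_j^*\|/\sqrt n\le C/\sqrt n$, by convexity and Cauchy--Schwarz. This needs no condition on $J_n$ at all, and it handles the linear part cleanly.

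\emph{Replacing $\widehat w_j$ by $w_j$.} By (C3), $\widehat\tau^2\to\tau_0^2$, and $\sup_j v_j=O(n^{-1})$. Consider first the case $\tau_0^2>0$. Then $(v_j+\widehat\tau^2)^{-1}=(v_j+\tau_0^2)^{-1}(1+o_p(1))$ uniformly in $j$, so $\widehat w_j=w_j(1+o_p(1))$ uniformly, with a common multiplicative error. Consequently $|\sum_j(\widehat w_j-w_j)x_j|\le o_p(1)\sum_j w_j|x_j|$ for any $x_j$, which reduces everything to the deterministic weights. Now consider the case $\tau_0^2=0$. All weights are of order $n$ and the ratio argument is replaced by working directly with the convex bound, which still applies to $A_n$.

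\emph{The heterogeneity term $B_n$.} With deterministic weights, (C2) gives $E[\sum_j w_j\varepsilon_j]=0$ and, by independence,
\[
\Var\Big(\sum_j w_j\varepsilon_j\Big)=\sum_j w_j^2E\varepsilon_j^2\le\tau^2\max_j w_j\sum_j w_j=\tau^2\max_j w_j\to0
\]
by (C4). Chebyshev's inequality then gives $\sum_j w_j\varepsilon_j\xrightarrow{p}0$. Here I use that $v_j$ and $\tau_0^2$ are nonrandom, so the $w_j$ are independent of the $\varepsilon_j$. The perturbation $\sum_j(\widehat w_j-w_j)\varepsilon_j$ is bounded by $o_p(1)\sum_j w_j|\varepsilon_j|$, and $E\sum_j w_j|\varepsilon_j|\le\tau$, so this term is $o_p(1)$. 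Combining the bounds on $A_n$ and $B_n$ gives $\widehat\psi_{\mathrm{EB}}\xrightarrow{p}\psi^*$.
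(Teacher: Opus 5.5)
\textbf{Where your proof matches the paper.} Your decomposition, the convex bound $|A_n|\le\max_j|\widehat\psi_j-\psi_j(P)|$ with a maximal inequality, and the Chebyshev step with $\sum_j w_j^2\le\max_j w_j$ are the paper's proof. Two of your additions improve on it.
\begin{itemize}
\item You are right that an $L^2$ bound does not make $n^{-1/2}\sum_i D_j^*(O_i,P)$ sub-Gaussian, and that pointwise remainders do not control a maximum over $J_n$ indices. Lemma~\ref{lem:maximal-xi} takes both for granted.
\item Replacing $\widehat w_j$ by the deterministic $w_j$ is what makes Chebyshev legitimate. The paper instead writes $\operatorname{Var}(T_2)\le\tau^2\max_j w_j(\widehat\tau^2)$ as if the data-dependent weights were fixed.
\end{itemize}

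\textbf{The gap: the bias term $B_n$ when $\tau_0^2=0$.} Your uniform comparison $\widehat w_j=w_j(1+o_p(1))$ uses $|\widehat\tau^2-\tau_0^2|/\widehat\tau^2=o_p(1)$, which needs $\tau_0^2>0$. At $\tau_0^2=0$ the relevant ratio is $v_j/(v_j+\widehat\tau^2)$. It tends to one only if $\widehat\tau^2=o_p(\min_j v_j)$, and nothing in (C1) or (C3) gives that: there is no rate for $\widehat\tau^2$ and no lower bound on $v_j$. So ``all weights are of order $n$'' is unsupported. Even bounded ratios would only make $\sum_j(\widehat w_j-w_j)\varepsilon_j$ of order $O_p(1)$, not negligible.

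Your $\tau_0^2=0$ paragraph treats only $A_n$. There is no convex-bound substitute for $B_n$, because $\max_j|\varepsilon_j|$ does not vanish. What remains is $\sum_j w_j(\widehat\tau^2)\varepsilon_j$, whose weights depend on the $\varepsilon_j$ through $\widehat\tau^2$, so Chebyshev does not apply. The case is not vacuous: (C2) allows $E[\varepsilon_j^2]\to0$ along $j$, so the pairwise $\widehat\tau^2$ can tend to $0$ while the $\varepsilon_j$ stay nondegenerate.

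\textbf{A repair: control the bias term uniformly in the scalar $t$ indexing the weights.} Relabel the functionals by increasing $v_j$. This permutation is deterministic, so the relabelled $\varepsilon_j$ remain independent and centred with $E[\varepsilon_j^2]\le\tau^2$. For every $t\ge0$ the vector $w(t)$ is then nonincreasing in $j$. Moreover, $\max_j w_j(t)=w_1(t)=1/\sum_k\{(v_1+t)/(v_k+t)\}$ is nonincreasing in $t$.

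Set $S_m=\sum_{j\le m}\varepsilon_j$ and $w_{J+1}:=0$. Abel summation, together with $\sum_m(w_m-w_{m+1})=w_1$ and $\sum_m m(w_m-w_{m+1})=1$, gives for any $1\le M\le J$
\[
\Bigl|\sum_{j=1}^{J}w_j(t)\varepsilon_j\Bigr|\le w_1(0)\max_{m\le M}|S_m|+\max_{M\le m\le J}\frac{|S_m|}{m}\qquad\text{for all }t\ge0.
\]
Kolmogorov's inequality makes the first maximum $O_p(\tau\sqrt M)$, and the H\'ajek--R\'enyi inequality makes the second $O_p(\tau/\sqrt M)$; neither depends on $t$. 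Taking $M\asymp 1/w_1(0)$ yields
\[
\sup_{t\ge0}\Bigl|\sum_j w_j(t)\varepsilon_j\Bigr|=O_p\bigl(\tau\sqrt{w_1(0)}\bigr),
\]
which vanishes by (C4) at $\tau_0^2=0$. The dependence of $\widehat\tau^2$ on the $\varepsilon_j$ then no longer matters. Restricting to $t\ge\tau_0^2/2$, where $\max_j w_j(t)\le(\sup_k v_k+t)/(Jt)$, covers $\tau_0^2>0$ the same way.

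The same device bounds the linear part of $A_n$ uniformly in $t$ by $\sum_j|\bar D_j|/j=O_p\bigl((1+\log J_n)/\sqrt n\bigr)$, using only the $L^2$ bound. This would let you drop the sub-Gaussian strengthening of (C1), at the cost of requiring $\log J_n=o(\sqrt n)$. The uniform remainder condition you flagged is still needed.
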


{Condition (C4) requires that no single functional asymptotically absorbs all the magnitude of the precision weights; it holds whenever $J\to\infty$ and the $v_j$ are uniformly bounded away from zero, as with Wald/IV estimators where $v_j\ge c/n$ under non-weak instruments (then $w_j\le \mathrm{const}\cdot(J\tau_0^2)^{-1}\to 0$ when $\tau_0^2>0$, and $(\max_j v_j^{-1})/(\sum_k v_k^{-1})\to 0$ at the boundary $\tau_0^2=0$). The structural primacy of $J\to\infty$ over $n\to\infty$ here echoes \citet{collignon2020clustered}: when replication sits at the level of design rather than observation, no within-unit sample size substitutes for replication across units, and the irreducible between-unit variance bounds the variance of any combined estimator from below.}

{Importantly, note that both theorems are non-nested: Theorem~\ref{thm:consistency} requires $\psi_j(P)=\psi^*$ for all $j$, but permits any fixed $J$, whereas Theorem~\ref{thm:route1} relaxes exact identifiability to centered biases at the cost of $J\to\infty$ and non-dominant weights. Moreover, their inferential implications differ under exact identifiability the asymptotic cross-partial $\sum_j-(\psi_j(P)-\psi^*)/(v_j+\tau^2)^2$ vanishes, global $P$-orthogonality holds \citep{battey2024roleparametrizationmodelsmisspecified}, and the sandwich variance of Section~\ref{sec:frequentist} is valid. In turn, under centered biases $P$-orthogonality fails and the appropriate outputs are subsampling intervals for $\psi^*$ or split conformal intervals for $\psi_{J+1}$ (Section~\ref{sec:bayesian}).}

%==============================================================================
\section{Prediction intervals for a new identifying target}
\label{sec:bayesian}
%==============================================================================

{This section addresses a different inferential target: a prediction interval for the latent target $\psi_{J+1}$ of a new identifying functional drawn from the same population as $\psi_1,\ldots,\psi_J$. Therefore we aim for a random future quantity rather than the fixed $\psi^*$. Let us recall the structural basis of the hierarchical prior and derive the split conformal procedure.}

{The latent targets $\psi_1,\psi_2,\ldots$ being assumed to be i.i.d.\ draws from a prior $G$ follows from a symmetry judgment on the identifying functionals. Treating them as exchangeable before observing the data commits one, via de Finetti's theorem \citep{french2000statistical}, to a mixture representation $P(\psi_1,\ldots,\psi_J)=\int\prod_{j=1}^J P(\psi_j\mid\theta)\,\mathrm{d}\pi(\theta)$, of which~\eqref{eq:working-model} is the Gaussian member.}

{Here the paper's two kinds of guarantee meet, and the distinction is one of \emph{evaluation criterion}, not of method: a frequentist criterion averages over datasets at a fixed parameter, a Bayesian one also averages over the parameter. The confidence interval for $\psi^*$ (Section~\ref{sec:frequentist}) is frequentist---its target is fixed and the probability runs over the data alone---whereas the conformal interval below is Bayesian, since according to Theorem~\ref{thm:conformal}, the coverage probability is taken over the latent law $(\psi_1,\dots,\psi_{J+1})\sim G^{J+1}$ as well as the data, averaging over the distribution of the random target it covers. The frequentist interval for $\psi^*$ holds in the exact identifiability regime---by the sandwich of Proposition~\ref{prop:sandwich} under exact identifiability---and the Bayesian guarantee for $\psi_{J+1}$ when $\tau^2>0$.}

{At $\tau^2=0$, only the correlated sampling noise $\xi_{jn}=n^{-1}\sum_i D_j^*(O_i,P)$ makes the $\widehat\psi_j$ vary and no i.i.d.\ signal remains, whereas at $\tau^2>0$ the i.i.d.\ latent deviations ($O_p(1)$) dominate the shrinking noise ($O_p(n^{-1/2})$) and conformal prediction becomes viable, as the score decomposition (\ref{eq:score-decomp}) makes precise.}

{Let us therefore adopt a split conformal construction that separates hyperparameter estimation from calibration. }{For this, partition randomly the $J$ indices into two disjoint sets: a training set $\mathcal{T}$ of size $J_{\mathrm{train}}=\lfloor J/2\rfloor$ and a calibration set $\mathcal{C}$ of size $J_{\mathrm{cal}}=J-J_{\mathrm{train}}$. The procedure is as follows:}

\begin{enumerate}
	\item[(i)] \textit{Training step.} Estimate the EB hyperparameters from $\mathcal{T}$ alone:
    \[
    \widehat\tau^2_{\mathrm{train}} = \max\!\left\{
    \frac{1}{2\binom{J_{\mathrm{train}}}{2}}\sum_{\substack{j < k \\ j,k \in \mathcal{T}}}
    \bigl[(\widehat\psi_j - \widehat\psi_k)^2 - (v_j + v_k)\bigr],\;0\right\},
    \]
    and $\widehat\psi_{\mathrm{EB}}^{\mathrm{train}}=\sum_{j \in \mathcal{T}}(v_j + \widehat\tau^2_{\mathrm{train}})^{-1}\widehat\psi_j\big/\sum_{j \in \mathcal{T}}(v_j + \widehat\tau^2_{\mathrm{train}})^{-1}$ as in~\eqref{eq:eb-estimator} and \eqref{tau2-pairwise} restricted to $\mathcal{T}$.
    \item[(ii)] \textit{Calibration step.} For each $j\in\mathcal{C}$, compute the split conformal score
    \[
    \widehat S_j^{\mathrm{split}} = \frac{\widehat\psi_j - \widehat\psi_{\mathrm{EB}}^{\mathrm{train}}}{\sqrt{\widehat\tau^2_{\mathrm{train}} + v_j}}.
    \]
	\item[(iii)] \textit{Prediction step.} A $(1-\alpha)$ prediction interval for $\psi_{J+1}$ is constructed by inverting the conformity score. Let $\widehat{q}_{\alpha/2}$ and $\widehat{q}_{1-\alpha/2}$ be the empirical $\alpha/2$ and $1-\alpha/2$ quantiles of the calibration scores $\{\widehat S_j^{\mathrm{split}}\}_{j\in\mathcal{C}}$. The two-sided prediction interval is
	\begin{equation}\label{eq:conformal-interval}
	C_{1-\alpha} := \left[\widehat\psi_{\mathrm{EB}}^{\mathrm{train}} + \widehat{q}_{\alpha/2}\sqrt{\widehat\tau^2_{\mathrm{train}} + v_{J+1}}, \quad \widehat\psi_{\mathrm{EB}}^{\mathrm{train}} + \widehat{q}_{1-\alpha/2}\sqrt{\widehat\tau^2_{\mathrm{train}} + v_{J+1}}\right].
	\end{equation}
	This interval has asymptotic coverage $\geq 1-\alpha$ for $\psi_{J+1}$ under the conditions of Theorem~\ref{thm:conformal} (the quantiles are formally the $\lceil (1-\alpha)(J_{\mathrm{cal}}+1)\rceil$-th order statistics, asymptotically equivalent to the empirical ones). As shown in its proof (see Appendix~\ref{app:proofs-main}), the scores converge in distribution to $(\varepsilon_j - (\mu-\psi^*))/\sigma$, with $\mu,\sigma^2$ the probability limits of $\widehat\psi_{\mathrm{EB}}^{\mathrm{train}},\widehat\tau^2_{\mathrm{train}}$; the common shift and scale do not affect coverage.
\end{enumerate}

\noindent {Sample splitting removes the self-influence of hyperparameter fit: since $\widehat\psi_{\mathrm{EB}}^{\mathrm{train}}$ and $\widehat\tau^2_{\mathrm{train}}$ depend only on $\mathcal{T}$, the calibration scores are not contaminated by the fit. It does not, however, erase the shared-data correlation: distinct calibration indices are still evaluated on the same $n$ observations, so the validity argument relies on asymptotic dominance of the i.i.d.\ component over the correlated noise, not on exact finite-sample exchangeability.}

{Combining~\eqref{eq:working-model} with the asymptotic linearity of each $\widehat\psi_j$, write $\widehat\psi_j=\psi^*+\varepsilon_j+\xi_{jn}$, where $\varepsilon_j=\psi_j-\psi^*$ is the latent deviation and $\xi_{jn}=\widehat\psi_j-\psi_j=n^{-1}\sum_{i=1}^{n} D_j^*(O_i,P)$ is the shared data sampling noise for $j$. As a consequence, for a calibration index $j\in\mathcal{C}$, the split score $S_j^{\mathrm{split}}$ admits the decomposition}
\begin{equation}
\label{eq:score-decomp}
\widehat S_j^{\mathrm{split}} = \underbrace{\frac{\varepsilon_j}{\sqrt{\widehat\tau^2_{\mathrm{train}} + v_j}}}_{\text{(I) oracle signal}}
+ \underbrace{\frac{\xi_{jn}}{\sqrt{\widehat\tau^2_{\mathrm{train}} + v_j}}}_{\text{(II) shared-data noise}}
+ \underbrace{\frac{\psi^* - \widehat\psi_{\mathrm{EB}}^{\mathrm{train}}}{\sqrt{\widehat\tau^2_{\mathrm{train}} + v_j}}}_{\text{(III) training error}},
\end{equation}
This decomposition is algebraically exact: all three terms share the denominator $\sqrt{\widehat\tau^2_{\mathrm{train}}+v_j}$, so no cross-term remainder arises.

{Importantly, the three terms behave very differently asymptotically. Term~(I), the \emph{oracle signal}, converges to $\varepsilon_j/\sigma$ with the $\varepsilon_j$ i.i.d.\ from $G$, since $v_j=O(n^{-1})\to0$ and $\widehat\tau^2_{\mathrm{train}}\xrightarrow{p}\sigma^2>0$ by (A3): this is the exchangeable component, and any finite positive limit $\sigma$ suffices. Term~(II), the \emph{shared-data noise}, is $O_p(n^{-1/2})$ and correlated across $j$ through $\operatorname{Cov}(\xi_{jn},\xi_{kn})=n^{-1}\operatorname{Cov}(D_j^*,D_k^*)$: it is the non-exchangeable component, required to be uniformly negligible by Assumption~(A2) below. Term~(III), the \emph{training error}, converges to the constant $c:=(\psi^*-\mu)/\sigma$, identical for every calibration index and for the test index $J+1$, therefore acting as a common shift not affecting coverage.}

{Hence, when $\tau^2>0$ the $O_p(1)$ i.i.d.\ signal dominates the $O_p(n^{-1/2})$ correlated noise and the scores become asymptotically approximately exchangeable, whereas when $\tau^2=0$ term~(I) vanishes and only the correlated term~(II) survives, leaving no dominance argument. This formalizes the phase transition: the conformal interval is asymptotically valid if and only if $\tau^2>0$. The assumptions governing asymptotic marginal validity of $C_{1-\alpha}$ as defined in (\ref{eq:conformal-interval}) are as follows.}

\begin{enumerate}
    \item[(A1)] {\textit{Latent i.i.d.\ heterogeneity}: $\psi_j=\psi^*+\varepsilon_j$ with $\varepsilon_j\stackrel{\mathrm{i.i.d.}}{\sim} G$, $E[\varepsilon_j]=0$, $\operatorname{Var}(\varepsilon_j)=\tau^2>0$, and $G$ has a continuous strictly increasing cdf $F_G$. Here $G$ is otherwise arbitrary---in particular, need not be Gaussian---and $\tau^2=\int\varepsilon^2\,\mathrm{d}G(\varepsilon)$ is the variance functional of $G$, not a free parameter.}
    \item[(A2)] {\textit{Uniform asymptotic negligibility of shared data noise}: as $n\to\infty$,
    \[
    \max_{j\in\mathcal{C}\cup\{J+1\}} \frac{|\xi_{jn}|}{\tau} = o_p(1).
    \]
    A sufficient condition is $\sup_j \|D_j^*(O,P)\|_{L^2(P)}\le C<\infty$ together with $n^{-1}\log J_{\mathrm{cal}}\to 0$. A sub Gaussian union bound then yields $\max_j |\xi_{jn}|=O_p(\sqrt{\log J_{\mathrm{cal}}/n})=o_p(\tau)$ (Lemma~\ref{lem:maximal-xi}, Appendix~\ref{app:maximal}).}
	\item[(A3)] \textit{Convergent nuisance parameter:} As $n,J_{\mathrm{train}}\to\infty$, the training-set hyperparameters converge in probability to fixed limits: $\widehat\tau^2_{\mathrm{train}}\xrightarrow{p}\sigma^2$ for some $\sigma^2>0$ and $\widehat\psi_{\mathrm{EB}}^{\mathrm{train}}\xrightarrow{p}\mu$ for some $\mu\in\mathbb{R}$. Neither consistency for the true parameters ($\sigma^2=\tau^2$, $\mu=\psi^*$) is required.
\end{enumerate}

\noindent {The strict inequality $\tau^2>0$ in (A1) is the content of the phase transition, not a technical artifact: at $\tau^2=0$ the scores reduce to the correlated term~(II) of~\eqref{eq:score-decomp} and the empirical distribution of calibration scores cannot be controlled by a Glivenko--Cantelli type of argument.}

\begin{theorem}\label{thm:conformal}
Suppose Assumptions (A1)--(A3) hold. Then, as $n, J_{\mathrm{cal}} \to \infty$:
\[
\liminf_{n,J_{\mathrm{cal}}\to\infty}\,P \bigl(\psi_{J+1} \in C_{1-\alpha}\bigr) \;\geq\; 1-\alpha,
\]
where the probability is over the joint law $(\psi_1,\ldots,\psi_{J+1})\sim G^{J+1}$, the data $(O_1,\ldots,O_n)\sim P^n$, and the random training/calibration split.
\end{theorem}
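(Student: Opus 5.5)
The plan is to pass to the oracle scores and use the fact that, in the limit, the calibration and test scores are i.i.d.\ draws from the continuous law of $(\varepsilon-(\mu-\psi^*))/\sigma$ with $\varepsilon\sim G$. Coverage of $C_{1-\alpha}$ is equivalent to the test score
\[
\widehat S_{J+1}=\frac{\psi_{J+1}-\widehat\psi_{\mathrm{EB}}^{\mathrm{train}}}{\sqrt{\widehat\tau^2_{\mathrm{train}}+v_{J+1}}}
\]
lying in $[\widehat q_{\alpha/2},\widehat q_{1-\alpha/2}]$. Note that $\widehat S_{J+1}$ is built from the latent target $\psi_{J+1}$, not from an estimate of it, so its decomposition has no term~(II). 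Using the decomposition~\eqref{eq:score-decomp}, write each calibration score as
\[
\widehat S_j^{\mathrm{split}}=\frac{\varepsilon_j+\psi^*-\mu}{\sigma}+R_{jn},
\]
and similarly $\widehat S_{J+1}=(\varepsilon_{J+1}+\psi^*-\mu)/\sigma+R_{J+1,n}$. The first step is to show $\max_{j\in\mathcal C\cup\{J+1\}}|R_{jn}|=o_p(1)$. Term~(II) is uniformly negligible by (A2), since $\sqrt{\widehat\tau^2_{\mathrm{train}}+v_j}\ge \widehat\tau_{\mathrm{train}}\to\sigma>0$. For terms~(I) and~(III), note that $\sup_j v_j\to0$ and (A3) give $\max_j|(\widehat\tau^2_{\mathrm{train}}+v_j)^{-1/2}-\sigma^{-1}|=o_p(1)$. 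Term~(III) then converges uniformly to the constant $c$. Term~(I) needs a bound on $\max_{j}|\varepsilon_j|$, which could grow with $J_{\mathrm{cal}}$. I would avoid this issue rather than bound it: I only need that the \emph{fraction} of indices with a large multiplicative error is small, which follows because $|\varepsilon_j|\cdot o_p(1)$ is small for all $j$ outside a set of empirical proportion at most $P_G(|\varepsilon|>M)$, and this can be made arbitrarily small by choosing $M$.

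The second step is a Glivenko--Cantelli argument for the oracle scores $T_j=(\varepsilon_j+\psi^*-\mu)/\sigma$. These are i.i.d.\ over $j\in\mathcal C$ with continuous strictly increasing cdf $F_T$, by (A1). A subtle point is that $\mu$ and $\sigma$ are nonrandom limits, so $T_j$ is a fixed transformation of $\varepsilon_j$. The random split is independent of everything, so $\mathcal C$ is a uniformly random subset and the $\varepsilon_j$, $j\in\mathcal C$, remain i.i.d.\ $G$, independent of $\varepsilon_{J+1}$. By Glivenko--Cantelli and continuity of $F_T$, the empirical quantiles of $\{T_j\}_{j\in\mathcal C}$ converge to $F_T^{-1}(\alpha/2)$ and $F_T^{-1}(1-\alpha/2)$. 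The perturbation from step one moves every empirical cdf value by at most $\delta$ in argument, except on a vanishing fraction of indices. A sandwich of the form $F_{n}^{T}(x-\delta)-\eta\le \widehat F_n(x)\le F_n^T(x+\delta)+\eta$, combined with strict monotonicity of $F_T$, then gives $\widehat q_{\alpha/2}\xrightarrow{p}F_T^{-1}(\alpha/2)$ and likewise for the upper quantile. The order-statistic convention $\lceil(1-\alpha)(J_{\mathrm{cal}}+1)\rceil$ changes the quantile level by only $O(1/J_{\mathrm{cal}})$.

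The final step runs as follows. Because $\widehat S_{J+1}=T_{J+1}+o_p(1)$ and the quantiles converge in probability to constants, Slutsky's lemma gives
\[
P\bigl(\widehat q_{\alpha/2}\le \widehat S_{J+1}\le \widehat q_{1-\alpha/2}\bigr)\to P\bigl(F_T^{-1}(\alpha/2)\le T_{J+1}\le F_T^{-1}(1-\alpha/2)\bigr)=1-\alpha.
\]
The boundary of the limiting event has probability zero by continuity of $F_T$, so the portmanteau theorem applies; this yields the $\liminf$ statement, indeed with equality. I expect the main obstacle to be step one combined with the quantile perturbation. The remainders are not uniformly small in absolute terms when $G$ has unbounded support, and the noise $\xi_{jn}$ is correlated across $j$, so nothing beyond the uniform bound (A2) is available. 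This is why I would route everything through the ``all but an $\eta$-fraction'' empirical-cdf sandwich rather than through exchangeability of the perturbed scores.
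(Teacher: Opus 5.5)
Your proposal is correct and follows the paper's proof essentially step for step. Like the paper, you approximate the calibration and test scores by the oracle scores $\varepsilon_j/\sigma+c$, use Glivenko--Cantelli and strict monotonicity of $F_G$ to get $\widehat q_p\xrightarrow{p}F_G^{-1}(p)/\sigma+c$, and pass to the limit in the coverage probability for the test score evaluated at $\psi_{J+1}$. You are more careful than the paper on term~(I). There the paper simply asserts $\max_{j\in\mathcal{C}}|\widehat S_j^{\mathrm{split}}-\varepsilon_j/\sigma-c|=o_p(1)$, even though $\max_{j\in\mathcal{C}}|\varepsilon_j|$ diverges for unbounded $G$ and (A3) supplies no rate. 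Your ``all but an $\eta$-fraction'' empirical-cdf sandwich closes that step rigorously.
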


\noindent {This guarantee is asymptotic and marginal. Exact finite-sample validity would require the calibration and test scores to be jointly exchangeable given the training fit, which fails because the calibration indices share the same $n$ observations: it is attainable only when each functional is evaluated on an independent dataset. The marginal statement is also unconditional over the latent targets and the split. A weaker training-conditional version, attainable at a $J_{\mathrm{cal}}^{-1/2}$ rate, is given by Proposition~\ref{prop:train-cond} in Appendix~\ref{app:train-cond}.}

{Note that our conformal interval and the frequentist interval of Section~\ref{sec:frequentist} address different quantities---the random draw $\psi_{J+1}$ versus the fixed $\psi^*$. Theorem 3 requires centered biases and growing $J$, the latter i.i.d.\ $\varepsilon_j$ with $\tau^2>0$ and a convergent positive-limit scale (A3). Notably, conformal validity does \emph{not} require consistency of $\widehat\psi_{\mathrm{EB}}^{\mathrm{train}}$ or $\widehat\tau^2_{\mathrm{train}}$: the limits $\mu,\sigma^2$ enter only as a common shift and scale that cancel, so the interval covers $\psi_{J+1}$ with probability $\geq 1-\alpha$ asymptotically even under uncentered bias or misspecified heterogeneity---covering, in that case, a draw from the same (possibly misspecified) population of functionals.}

	%==============================================================================
	\section{Confidence intervals for the causal effect}
	\label{sec:frequentist}

{This section works under exact identifiability, where $\psi_j(P)=\psi^*$ for all $j=1,\ldots,J$. The estimator $\widehat\psi_{\mathrm{EB}}$ solves $h(\psi,\widehat\tau^2)=0$ and is consistent for $\psi^*$ by Theorem~\ref{thm:consistency}, regardless of the limit $\tau_0^2$ of $\widehat\tau^2$. Standard M estimation theory \citep{godambe1960optimum,vaart1998,vandegeer2000empirical} delivers asymptotic normality of $n^{1/2}(\widehat\psi_{\mathrm{EB}}-\psi^*)$. Both of the procedures developed below---a sandwich variance and a subsampling interval---produce valid confidence intervals for $\psi^*$. The centered bias setting of Theorem~\ref{thm:route1}, with $\psi_j(P)\neq\psi^*$ but $E[\varepsilon_j]=0$ and $J\to\infty$, is addressed through subsampling. The uncentered bias case is left as an open problem; see Section~\ref{sec:conclusion}.}

%==============================================================================
\label{subsec:sandwich}

Our estimator $\widehat\psi_{\mathrm{EB}}$ has been defined as the root of the estimating equation $h(\psi,\tau^2)
	\;=\;
	\sum_{j=1}^J \frac{\widehat\psi_j - \psi}{v_j + \tau^2}
	\;=\; 0.$ The general sandwich variance \citep{huber1967behavior,white1982maximum,vaart1998} is $V_{\mathrm{sandwich}} = A^{-1}BA^{-1}$, where the bread and meat are (in our case, all scalars)
\begin{equation}
	\label{eq:sandwich_components}
	A
	= \sum_{j=1}^J \frac{1}{v_j + \tau^2_0},
	\qquad
	B
	= \operatorname{Var}_P\!\left[\sum_{j=1}^J\frac{D_j^*(O,P)}{v_j+\tau^2_0}\right]
	= \sum_{j,k=1}^J
	\frac{\operatorname{Cov}_P(\widehat\psi_j,\widehat\psi_k)}{(v_j + \tau^2_0)(v_k + \tau^2_0)},
\end{equation}
with $\tau_0^2$ denoting the probability limit of $\widehat\tau^2$. The second expression for $B$ uses the asymptotic linearity of each $\widehat\psi_j$ together with $\operatorname{Cov}_P(\widehat\psi_j,\widehat\psi_k)=n^{-1}\operatorname{Cov}_P(D_j^*(O,P),D_k^*(O,P))$. As the estimators are evaluated on the same sample, the actual $B$ exceeds $A$, so the canonical inverse Fisher standard errors are anti conservative. The sandwich variance corrects this nonparametrically.

%==============================================================================

\begin{proposition}[Sandwich robustness under nuisance inconsistency]
	\label{prop:sandwich}
	Suppose $\psi_j(P)=\psi^*$ for all $j$, and let $\widehat\tau^2 \xrightarrow{p} \tau^2_0$ for some $\tau^2_0 \ge 0$. Then:
	\begin{enumerate}
		\item[\emph{(i)}] The asymptotic distribution of $n^{1/2}(\widehat\psi_{\mathrm{EB}} - \psi^*)$
		      is the same for any probability limit $\tau^2_0 \ge 0$ of $\widehat\tau^2$.
		\item[\emph{(ii)}] The sandwich variance estimator $\widehat V_{\mathrm{sandwich}} = \widehat A^{-1}\widehat B\widehat A^{-1}$,
		      with $\widehat A$ and $\widehat B$ evaluated at $(\widehat\psi_{\mathrm{EB}}, \widehat\tau^2)$,
		      is consistent for $V_{\mathrm{sandwich}}$.
	\end{enumerate}
\end{proposition}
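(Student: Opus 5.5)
The plan is to linearize $\widehat\psi_{\mathrm{EB}}$ around $\psi^*$, separate the randomness of $\widehat\tau^2$ from that of the $\widehat\psi_j$, and show that $\tau_0^2$ enters only through deterministic weights whose limit does not depend on $\tau_0^2$. By \eqref{eq:eb-estimator}, with $\widehat w_j=(v_j+\widehat\tau^2)^{-1}/\sum_k(v_k+\widehat\tau^2)^{-1}$ and $\psi_j(P)=\psi^*$,
\[
n^{1/2}(\widehat\psi_{\mathrm{EB}}-\psi^*)=\sum_{j=1}^J \widehat w_j\, n^{1/2}(\widehat\psi_j-\psi^*)
=\sum_{j=1}^J \widehat w_j\,\frac{1}{\sqrt n}\sum_{i=1}^n D_j^*(O_i,P)+o_p(1),
\]
since $J$ is fixed and $\widehat w_j\in[0,1]$. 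The vector $Z_n=(n^{-1/2}\sum_i D_j^*(O_i,P))_{j\le J}$ converges jointly to $N(0,\Sigma)$ with $\Sigma_{jk}=\Cov_P(D_j^*,D_k^*)$ by the multivariate CLT. By Slutsky, the limit law is $N(0,\bar w^\top\Sigma\bar w)$ whenever $\widehat w\xrightarrow{p}\bar w$ deterministically.

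\emph{Step 1: the fluctuation of $\widehat\tau^2$ is first-order irrelevant.} This is the exact-identifiability orthogonality noted after Theorem~\ref{thm:route1}: the cross-partial $\partial_{\tau^2}h(\psi^*,\tau^2)=-\sum_j(\psi_j(P)-\psi^*)/(v_j+\tau^2)^2$ vanishes. Concretely, $\widehat w\to\bar w$ in probability, and this multiplies an $O_p(1)$ vector $Z_n$. Hence $(\widehat w-\bar w)^\top Z_n=o_p(1)$ by the continuous mapping theorem, with no rate needed for $\widehat\tau^2$.

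\emph{Step 2: the limit weights do not depend on $\tau_0^2$.} Since $v_j=\Var(D_j^*)/n\to0$, for every $\tau_0^2>0$ we get $(v_j+\tau_0^2)^{-1}\to\tau_0^{-2}$ uniformly in $j$. Thus $\bar w_j=1/J$ for every positive $\tau_0^2$. The limit is then $N\bigl(0,J^{-2}\sum_{j,k}\Sigma_{jk}\bigr)$, which is the same for all $\tau_0^2>0$. This is statement (i) on that range, and the sandwich $A^{-1}BA^{-1}$ of \eqref{eq:sandwich_components} reproduces it, because $n\,A^{-1}BA^{-1}\to J^{-2}\sum\Sigma_{jk}$.

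\emph{Step 3: the boundary $\tau_0^2=0$ is the main obstacle.} There the weights become $\Var(D_j^*)^{-1}/\sum_k\Var(D_k^*)^{-1}$. These coincide with $1/J$ only if the influence-curve variances are equal, so the unqualified claim needs care at this point. My first attempt would be to show that, under exact identifiability, Proposition~\ref{prop:tau2-consistency} forces $\widehat\tau^2=O_p(n^{-1})$. I would then split into two sub-cases. If $n\widehat\tau^2\to0$, the weights are the inverse-variance weights and the limit law is $N(0,\bar w^\top\Sigma\bar w)$ with those weights. If instead $\widehat\tau^2$ is a fixed positive constant, Step 2 applies. If the two weight vectors do not agree, I would state (i) in the form the argument actually supports. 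That form is: the asymptotic law is the sandwich-normal law determined by $\tau_0^2$ alone, and is invariant to how $\widehat\tau^2$ reaches that limit, in particular to its inconsistency for the structural $\tau^2$.

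\emph{Part (ii).} Take $\widehat A=\sum_j(v_j+\widehat\tau^2)^{-1}$, which is a continuous function of $\widehat\tau^2$. Take $\widehat B=n^{-1}\sum_i\bigl(\sum_j \widehat D_j(O_i)/(v_j+\widehat\tau^2)\bigr)^2$, built from estimated influence curves. I would use a continuous-mapping argument for the weights, combined with $n^{-1}\sum_i\widehat D_j\widehat D_k\xrightarrow{p}\Sigma_{jk}$. The latter follows from the LLN plus $\|\widehat D_j-D_j^*\|_{L^2(P_n)}=o_p(1)$, which I would assume as a standard regularity condition. The ratio $\widehat V/V_{\mathrm{sandwich}}\to1$ then follows, after rescaling both by $n$ so that they converge to the limit identified in Steps 2 and 3.
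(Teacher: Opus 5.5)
Your Steps 1--2 are correct for $\tau_0^2>0$, and they take a cleaner route than the paper's. The paper Taylor-expands the score $h(\psi,\tau^2)$ around $(\psi^*,\tau_0^2)$. It then uses a rate $\widehat\tau^2-\tau_0^2=O_p(n^{-1/2})$ and $P$-orthogonality to discard the $\tau^2$ cross term, and reads off $N(0,A^{-1}BA^{-1})$. Your exact representation $n^{1/2}(\widehat\psi_{\mathrm{EB}}-\psi^*)=\widehat w^\top Z_n+o_p(1)$ exhibits the same orthogonality directly: since $\psi_j(P)=\psi^*$, there is no term $n^{1/2}(\widehat w-\bar w)^\top\varepsilon$ that would demand a rate. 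Your argument therefore needs only $\widehat w\xrightarrow{p}\bar w$.

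Your Step~3 objection is also right, and the paper's proof does not resolve it. Let $\sigma_j^2:=\Var(D_j^*)$ be unequal. The rule $\widehat\tau^2\equiv0$ gives limit variance $\bar w^\top\Sigma\bar w$ with $\bar w_j=\sigma_j^{-2}/\sum_k\sigma_k^{-2}$, while $\widehat\tau^2\equiv1$ gives $J^{-2}\mathbf{1}^\top\Sigma\mathbf{1}$. For $\Sigma=\mathrm{diag}(1,4)$ these are $0.8$ versus $1.25$, so (i) is false as stated. The paper's last step infers invariance \emph{across} values of $\tau_0^2$ from the fact that the linearization depends on $\widehat\tau^2$ only through $\tau_0^2$. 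That gives invariance only to the fluctuations of $\widehat\tau^2$. Its claim that $\nabla_{\tau^2}h(\psi^*,\tau_0^2)$ has variance $O(n^{-1})$ also fails at $\tau_0^2=0$, where $(v_j+\tau_0^2)^{-2}=n^2/\sigma_j^4$.

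The genuine gap is in your repair of the boundary case. There $\widehat w_j=(\sigma_j^2+n\widehat\tau^2)^{-1}/\sum_k(\sigma_k^2+n\widehat\tau^2)^{-1}$, so the limit is governed by $n\widehat\tau^2$, not by $\tau_0^2$. The rules $\widehat\tau^2=c/n$ all have $\tau_0^2=0$ but give different limit weights $(\sigma_j^2+c)^{-1}/\sum_k(\sigma_k^2+c)^{-1}$. Your fallback ``the law is determined by $\tau_0^2$ alone'' is therefore still false.

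More importantly, your two sub-cases ($n\widehat\tau^2\to0$, or $\widehat\tau^2$ a positive constant) miss what the pairwise estimator actually does under exact identifiability. Your bound $O_p(n^{-1})$ is sharp, and $n\widehat\tau^2\xrightarrow{d}T:=\max\{\frac{1}{2\binom{J}{2}}\sum_{j<k}[(Z_j-Z_k)^2-\sigma_j^2-\sigma_k^2],\,0\}$. Here $Z\sim N(0,\Sigma)$ is the same limit as $Z_n$, and $T$ is nondegenerate with an atom at $0$. Consequently $\widehat w$ has no deterministic limit and your Slutsky step fails. Joint convergence instead gives the limit $\sum_j W_j(T)Z_j$ with $W_j(t)=(\sigma_j^2+t)^{-1}/\sum_k(\sigma_k^2+t)^{-1}$.

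For $J=2$ this limit equals $\bar Z+\tfrac12(\sigma_2^2-\sigma_1^2)\,D/\max\{D^2,\sigma_1^2+\sigma_2^2\}$, with $D=Z_1-Z_2$ and $\bar Z=(Z_1+Z_2)/2$. This is non-Gaussian whenever $\sigma_1\neq\sigma_2$. Likewise $n\widehat V_{\mathrm{sandwich}}$ converges only in distribution, so (ii) fails in ratio form too.

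What your argument does establish is the following:
\begin{itemize}
\item (i)--(ii) for every $\tau_0^2>0$, with common limit $N(0,J^{-2}\mathbf{1}^\top\Sigma\mathbf{1})$;
\item at the boundary, (i)--(ii) only under the extra hypothesis that $n\widehat\tau^2$ converges in probability to some $c\in[0,\infty]$, with the limit law indexed by $c$ rather than by $\tau_0^2$;
\item (i)--(ii) unconditionally when all $\sigma_j^2$ coincide.
\end{itemize}
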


\noindent {Since the asymptotic law of $n^{1/2}(\widehat\psi_{\mathrm{EB}}-\psi^*)$ does not depend on the probability limit $\tau_0^2$ of the plug-in, the failure to propagate hyperparameter uncertainty---the source of finite-sample undercoverage in classical EB constructions \citep{laird1987empirical,carlin1990approaches}---has no asymptotic cost on inference for $\psi^*$ under exact identifiability.}

%==============================================================================
\subsection{Practical implementation of the sandwich}
\label{subsec:sandwich-impl}

{Each $\widehat\psi_j$ is asymptotically linear, so the score decomposes at the observation level as $h(\psi^*,\tau^2_0)=n^{-1}\sum_{i=1}^n \tilde g_i + o_p(n^{-1/2})$, with $\tilde g_i:=\sum_{j=1}^J D_j^*(O_i,P)/(v_j+\tau^2_0)$. The meat $B=\operatorname{Var}_P[\tilde g(O)]$ is consistently estimated by the observation-level outer product}
\begin{equation}
	\label{eq:Bhat}
	\widehat B
	\;=\;
	\frac{1}{n}\sum_{i=1}^n
	\left(
		\sum_{j=1}^J \frac{\widehat D_j^*(O_i)}{v_j + \widehat\tau^2}
	\right)^{\!2},
\end{equation}
where $\widehat D_j^*(O_i)$ is the estimated efficient influence curve of functional $j$, available in closed form for all RAL estimators \citep{tsiatis2006semiparametric}; equivalently $\widehat B=\sum_{j,k}\widehat{\operatorname{Cov}}_n(\widehat\psi_j,\widehat\psi_k)/[(v_j+\widehat\tau^2)(v_k+\widehat\tau^2)]$ is the full $J\times J$ cross-functional covariance contraction, with $\widehat{\operatorname{Cov}}_n(\widehat\psi_j,\widehat\psi_k)=n^{-2}\sum_i \widehat D_j^*(O_i)\widehat D_k^*(O_i)$. Writing $\widehat A = \sum_{j} (v_j + \widehat\tau^2)^{-1}$, the practical variance estimator is $\widehat V_{\mathrm{sandwich}}=\widehat A^{-2}\widehat B$ and the Wald-type confidence interval is $\widehat\psi_{\mathrm{EB}}\pm z_{1-\alpha/2}\,\widehat V_{\mathrm{sandwich}}^{1/2}/n^{1/2}$.

{Under exact identifiability, $\widehat\tau^2$ lands on the boundary $\widehat\tau^2=0$ with positive probability, a case Proposition~\ref{prop:sandwich} covers directly: $P$-orthogonality makes the limit law of $n^{1/2}(\widehat\psi_{\mathrm{EB}}-\psi^*)$ invariant to $\tau_0^2$ (including $\tau_0^2=0$) and $\widehat V_{\mathrm{sandwich}}$ remains consistent. The boundary non-regularity of \citet{andrews2000} affects inference on $\tau^2$ and the bootstrap of $\widehat\tau^2$, not sandwich inference on $\psi^*$.}

	%==============================================================================
	
	%==============================================================================

	%==============================================================================

%==============================================================================
\section{Simulation studies}
\label{sec:simulations-main}
%==============================================================================

{We validate the framework in two simulation studies, summarized in Figure~\ref{fig:sim-main} and reported in full in Appendix~\ref{sec:simulations}. The first one isolates the exact-identifiability regime and confirms that $\widehat\psi_{\mathrm{EB}}$ is consistent at the parametric $n^{-1/2}$ rate even though the model~\eqref{eq:working-model} discards the cross-functional covariance (Theorem~\ref{thm:consistency}), and that only the sandwich and subsampling intervals of Section~\ref{sec:frequentist} retain nominal coverage. The second verifies the conformal guarantee of Theorem~\ref{thm:conformal} for the latent target $\psi_{J+1}$ of a new functional. Both findings are distribution-free in $G$.}

\begin{figure}[t!]
  \centering
  \begin{minipage}[t]{0.49\textwidth}\centering
    \includegraphics[width=\textwidth,trim={0 0 466pt 0},clip]{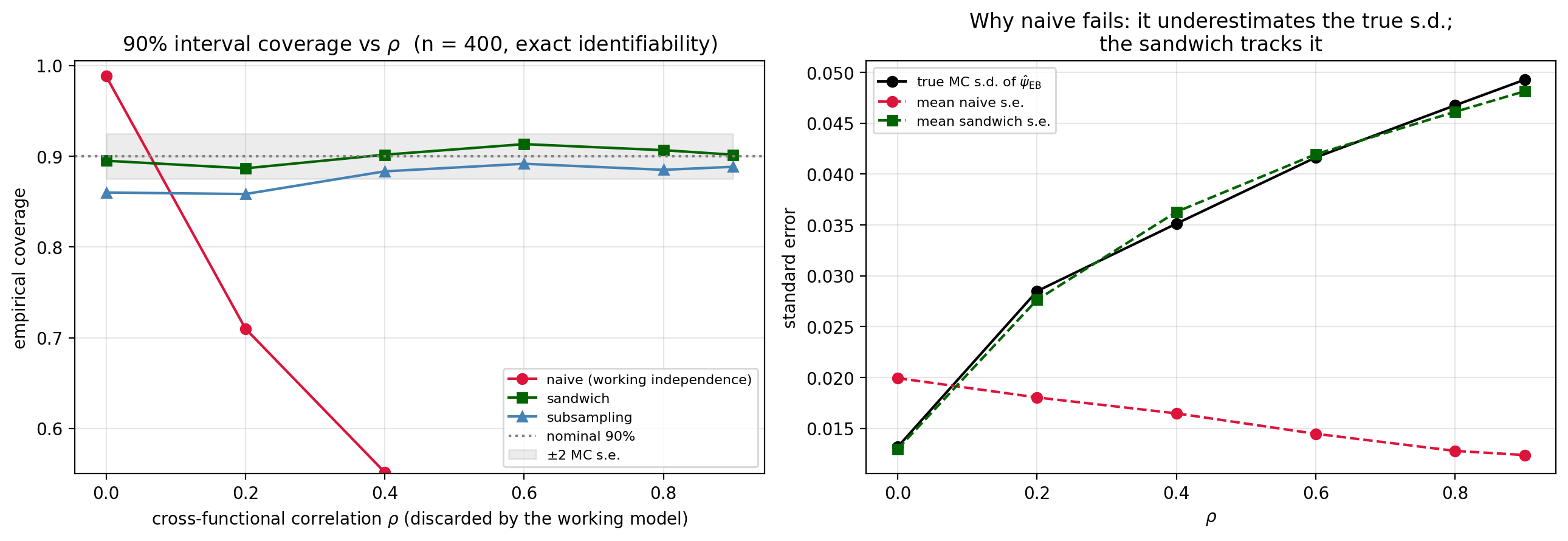}\\[2pt]
    {\small (a) Exact identifiability: coverage vs.\ $\rho$}
  \end{minipage}\hfill
  \begin{minipage}[t]{0.49\textwidth}\centering
    \includegraphics[width=\textwidth,trim={0 0 465pt 0},clip]{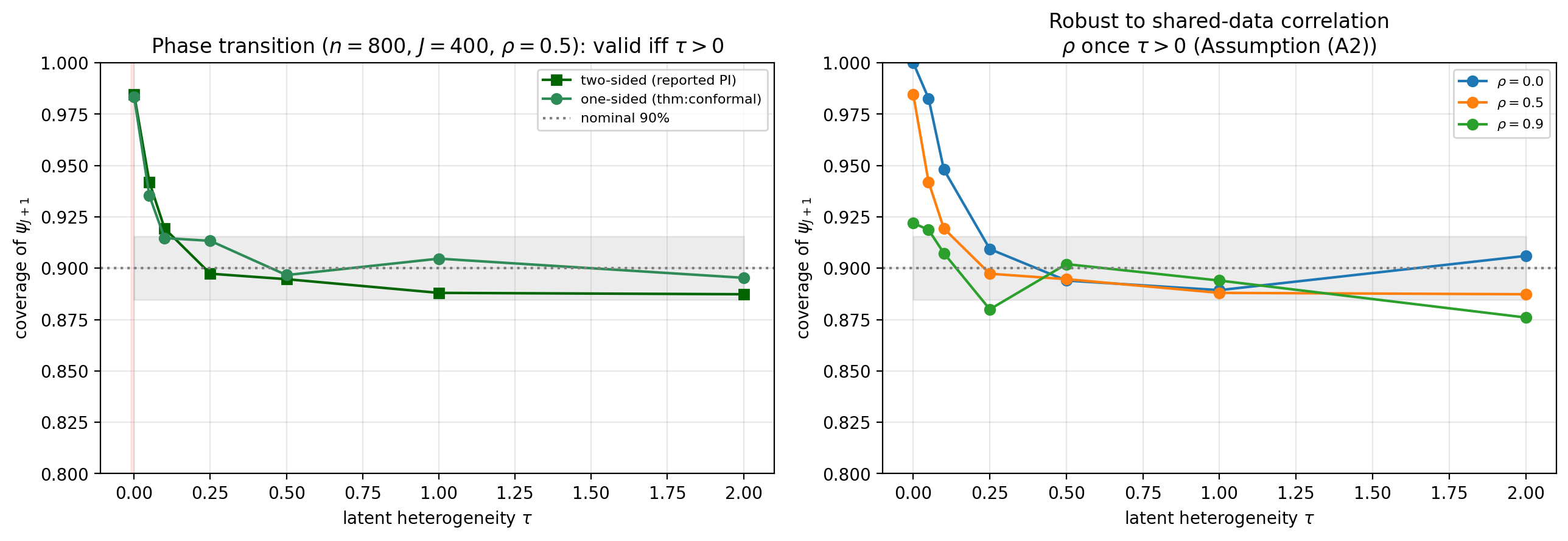}\\[2pt]
    {\small (b) Heterogeneous regime: coverage vs.\ $\tau$}
  \end{minipage}
  \caption{Headline simulation results (full study in Appendix~\ref{sec:simulations}). \emph{(a)} Empirical coverage of the three nominal $90\%$ confidence intervals for $\psi^*$ at $n=400$, against the discarded cross-functional correlation $\rho$: the canonical inverse-Fisher interval collapses toward $0.33$ coverage as $\rho$ grows, while the observation-level sandwich and the subsampling interval hold the nominal level. \emph{(b)} Coverage of the conformal prediction interval for $\psi_{J+1}$ against the latent heterogeneity $\tau$ (at $\rho=0.5$): the guarantee holds for every $\tau>0$ and fails exactly at $\tau=0$ (red), the empirical signature of the phase transition. Grey bands are $\pm 2$ Monte Carlo standard errors around the nominal level.}
  \label{fig:sim-main}
\end{figure}

{Both panels make the regime split concrete. In panel~(a) the inverse-Fisher interval does not merely undercover---it \emph{narrows} as $\rho$ grows, becoming more confident precisely as the omitted dependence inflates the true variance of $\widehat\psi_{\mathrm{EB}}$, whereas the sandwich tracks that variance and holds the nominal line (Proposition~\ref{prop:sandwich}). In panel~(b) the conformal interval stays flat across the whole $\tau>0$ range, insensitive to $\rho=\operatorname{Cor}(\widehat\psi_j,\widehat\psi_k)$ (Assumption~(A2)), and collapses only at the boundary $\tau=0$.}

%==============================================================================
\section{Applications}
\label{sec:applications}
%==============================================================================

{We illustrate the framework on two canonical datasets that stress its two regimes in opposite directions. The National Supported Work (NSW) data \citep{lalonde1986,dehejia1999} combine a randomized trial with several observational comparison groups: when the pooled identification functionals using external controls are generated by the conditional version of mean untreated potential outcome exchangeability, $\widehat\tau^2$ collapses to the boundary and the sandwich confidence interval contracts sharply around the experimental benchmark. 
	
	Second, the \citep{card1994} minimum-wage study, in contrast, presents multiple Pennsylvania control subgroups whose discrepancy has been debated for three decades \citep{neumark2000minimum,dube2010minimum}. There, $\widehat\tau^2$ is large relative to the typical $v_j$, the shrinkage flattens, and the natural output is the conformal prediction interval for a new functional target. 
	
	The framework is therefore self-auditing: the magnitude of $\widehat\tau^2$ identifies which regime applies and which inferential object is informative. Full estimator construction and per-functional tables are in Appendices~\ref{app:lalonde}--\ref{app:ck94}; here we report the regime-level summaries.}

%------------------------------------------------------------------------------
\subsection{National Supported Work: pooling when borrowing is safe}
\label{subsec:app-lalonde}
%------------------------------------------------------------------------------

{The target is the average treatment effect on the treated (ATT). We construct $J=15$ identifying functionals: three RCT only ($j=1,2,3$), eight covariate adjusted external control augmentations across four CPS/PSID survey cohorts (CPS-2, CPS-3, PSID-2, PSID-3) ($j=5,6,8,9,11,12,14,15$), and four crude unadjusted external comparisons (EC) ($j=4,7,10,13$). Three nested sub-analyses are ran---RCT only ($J=3$), covariate adjusted ($J=11$), full pool ($J=15$)---and report $\widehat\tau$ and the EB output in Table~\ref{tab:app-lalonde}; full construction is in Appendix~\ref{app:lalonde}.}

\begin{table}[t!]
\centering
\caption{EB output for the two applications. \emph{(a)} National Supported Work, three nested sub-analyses (experimental benchmark \$1{,}794). \emph{(b)} New Jersey minimum wage, heterogeneity $\widehat\tau$ by estimation method; the conformal prediction interval for $\psi_{J+1}$ at $1-\alpha\approx 92.3\%$ is $[-2.60,\,7.85]$ FTE, with $\widehat\psi_{\mathrm{EB}}=2.54$ FTE against the \citet{card1994} benchmark of $2.75$ FTE.}
\label{tab:app-lalonde}\label{tab:app-ck}
\renewcommand{\arraystretch}{1.15}
\begin{minipage}[t]{0.48\textwidth}\centering
{\small (a) National Supported Work}\\[3pt]
\begin{tabular}{lcc}
\toprule
Sub-analysis & $J$ & $\widehat\psi_{\mathrm{EB}}$ (SE) \\
\midrule
RCT-only             & 3  & \$1{,}770\ (\$389) \\
Covariate adjusted   & 11 & \$1{,}552\ (\$228) \\
Full pool            & 15 & \$589\ (\$525) \\
\bottomrule
\end{tabular}
\end{minipage}\hfill
\begin{minipage}[t]{0.48\textwidth}\centering
{\small (b) Card--Krueger}\\[3pt]
\begin{tabular}{lcc}
\toprule
Sub-analysis & $J$ & $\widehat\tau$ (FTE) \\
\midrule
DIM only             & 9  & 2.14 \\
OM only              & 8  & 2.51 \\
IPW only             & 8  & 2.46 \\
Full pool            & 25 & 2.38 \\
\bottomrule
\end{tabular}
\end{minipage}
\end{table}

{Two regimes coexist within the same dataset. Across the first two sub-analyses $\widehat\tau$ collapses to the boundary and EB shrinkage compresses the standard error substantially: within the RCT-only pool ($J=3$) it falls from \$671 to \$389, a $42\%$ reduction; within the covariate-adjusted pool ($J=11$) it falls further to \$228, a $66\%$ reduction relative to the trial-only benchmark. The sample size is identical across sub-analyses: \textit{the precision gain comes entirely from identification multiplicity, not more data}. Covariate-adjusted external controls add genuine information without detectable identification bias. Including the four crude EC functionals drives $\widehat\tau$ to \$1{,}852, an order of magnitude larger than the individual $\sqrt{v_j}$. The pooled estimate collapses to \$589 with the heterogeneity estimate transparently flagging the inclusion of invalid functionals. The failure of marginal exchangeability in the crude EC pool was anticipated a priori: NSW participants were recruited from a severely disadvantaged population---former drug addicts, ex-convicts, and the long-term unemployed---whose pre-enrollment earnings were a fraction of those recorded for CPS and PSID survey respondents \citep{lalonde1986,smith2005}, so the marginal distributions of untreated potential outcomes were unlikely to be comparable without covariate adjustment. \textit{The same machinery thus delivers precision when borrowing is safe and an automatic flag when it is not.}

%------------------------------------------------------------------------------
\subsection{New Jersey minimum-wage increase: honest uncertainty when controls disagree}
\label{subsec:app-ck}
%------------------------------------------------------------------------------

{The target is the ATE of the 1992 New Jersey (NJ) minimum-wage increase on fast-food employment. The dataset of \citet{card1994} contains 309 NJ restaurants and 75 Pennsylvania (PA) restaurants surveyed before ($t=0$) and after ($t=1$) the increase. Let $Y_t(d)$ denote potential FTE employment and $A\in\{0,1\}$ the state indicator (NJ vs.\ PA). For a PA subgroup $G_s$, the \emph{parallel-trends assumption} asserts
\begin{equation}\label{eq:pt}
\E[Y_1(0)-Y_0(0)\mid A=1] \;=\; \E[Y_1(0)-Y_0(0)\mid A=0,\,G_s],
\end{equation}
identifying the ATE via $\psi_s(P)=\E[Y_1-Y_0\mid A=1]-\E[Y_1-Y_0\mid A=0,G_s]$. Outcome Modelling (OM) and Inverse Probability Weighting (IPW) replace~\eqref{eq:pt} with its conditional analogue given restaurant-level covariates $X$ (chain, ownership, region). The nine subgroupings provide each a different substantive rationale for~\eqref{eq:pt}:}

\begin{itemize}
  \item \emph{Regional} (PA-all, PA-north, PA-east): geographic proximity justifies trend similarity.
  \item \emph{Chain} (Burger King, KFC, Roy Rogers, Wendy's): chain-level shocks cancel across the state boundary.
  \item \emph{Ownership} (corporate, franchise): organizational structure controls for wage-setting discretion.
\end{itemize}

{For each subgroup we compute Difference in Means (DIM), OM, and IPW estimators (Burger King excluded from OM/IPW: no covariate variation), giving $J=9+8+8=25$ functionals with estimates spanning $-1.83$ to $+7.10$ FTE. The literature has documented persistent disagreement: \citet{neumark2000minimum} report negative effects under payroll-based controls, \citet{dube2010minimum} null effects under contiguous-county controls. This disagreement---across analyses that all invoke~\eqref{eq:pt}---is exactly the object $\tau^2$ is designed to quantify (Table~\ref{tab:app-ck}).}

\begin{figure}[!htb]
  \centering
  \begin{minipage}[t]{0.49\textwidth}\centering
    \includegraphics[width=\textwidth]{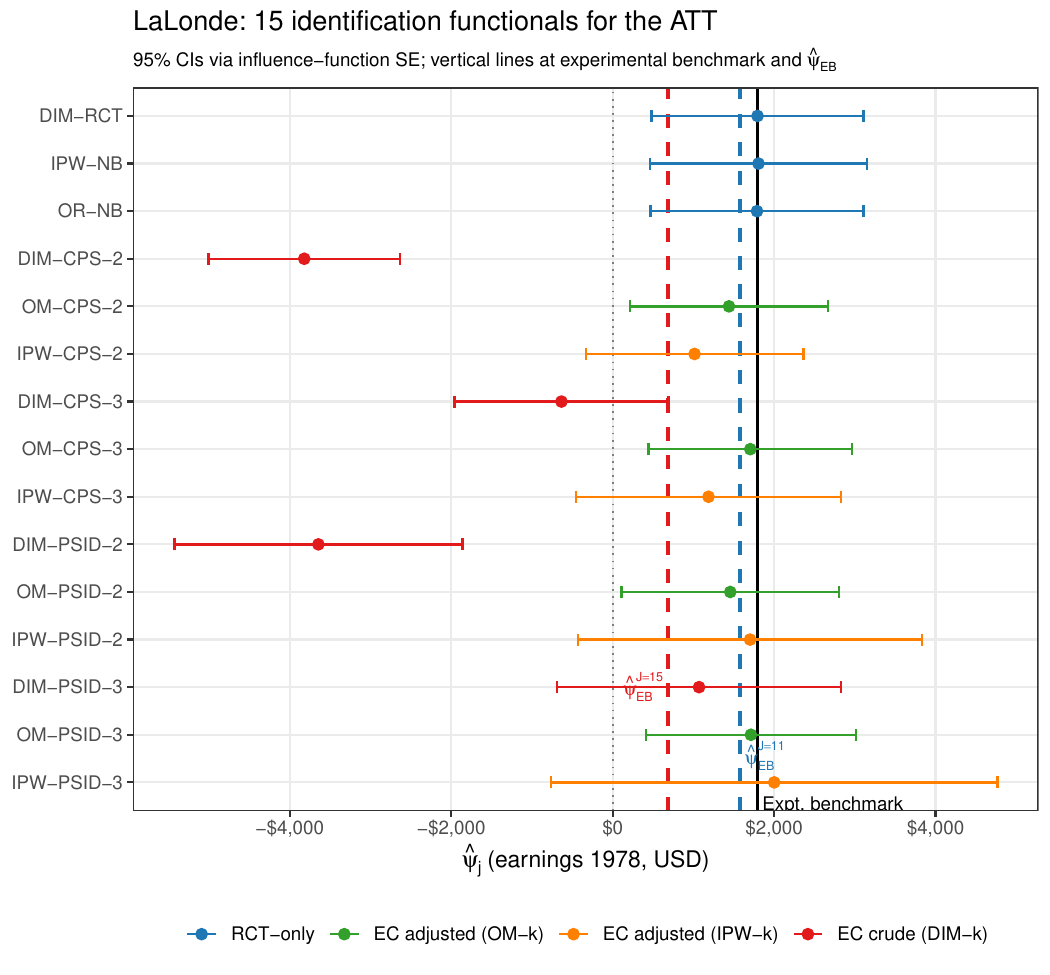}
  \end{minipage}\hfill
  \begin{minipage}[t]{0.49\textwidth}\centering
    \includegraphics[width=\textwidth]{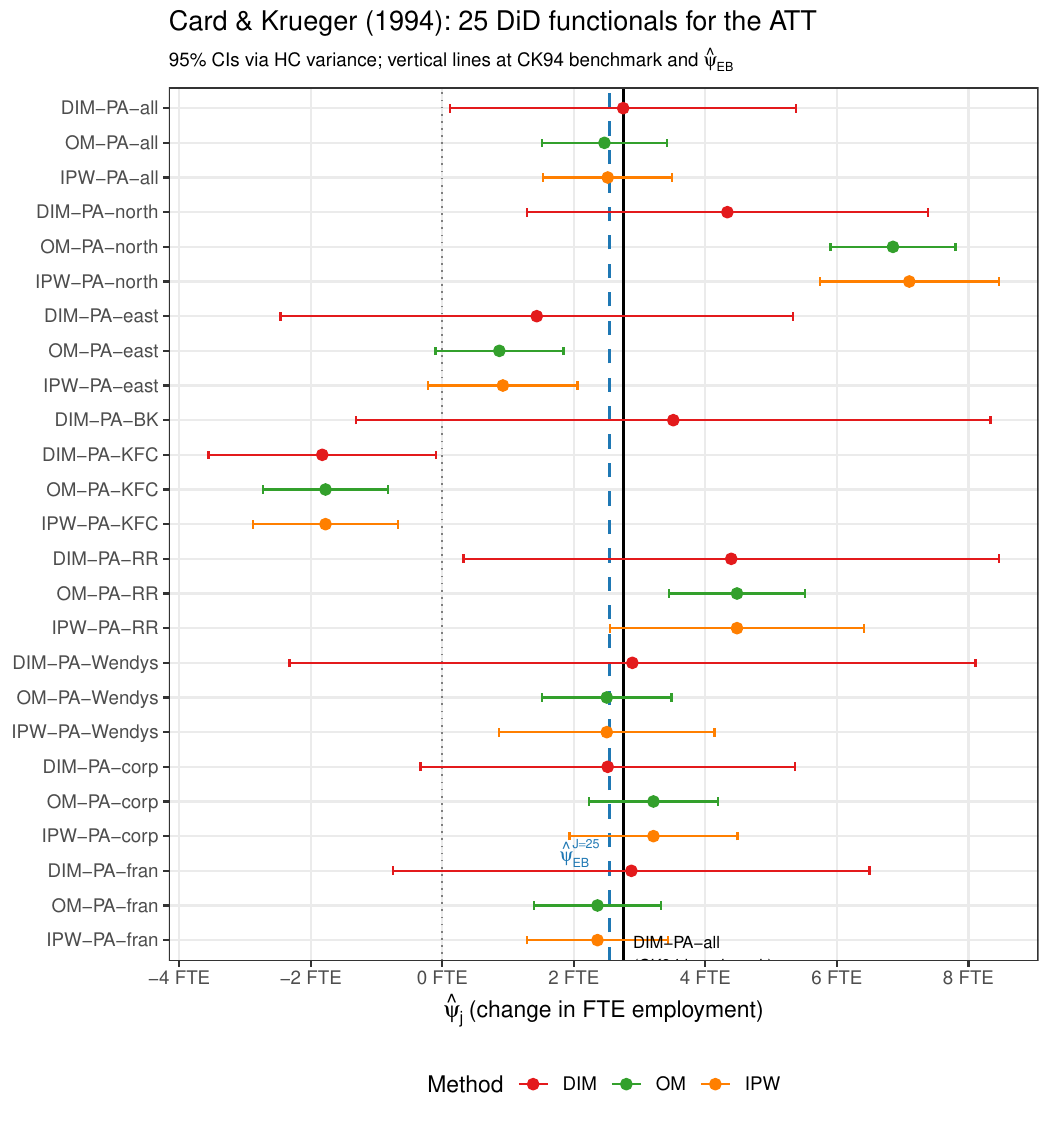}
  \end{minipage}
  \caption{Forest plots for both applications. \emph{Left} (NSW): solid black, experimental benchmark (\$1{,}794); dashed blue, $\widehat\psi_{\mathrm{EB}}$ for $J=11$ (\$1{,}578); dashed red, $\widehat\psi_{\mathrm{EB}}$ for $J=15$ (\$685). Covariate-adjusted estimates cluster around the benchmark while crude EC estimates fall sharply below and trigger $\widehat\tau^2>0$. \emph{Right} (Card--Krueger): estimates span $-1.83$ FTE (KFC subgroup) to $+7.10$ FTE (northern-PA IPW); the between-functional spread $\widehat\tau=2.38$ FTE dominates the individual sampling errors, colour indicating estimation method.}
  \label{fig:app-forests}
\end{figure}

{In every sub analysis $\widehat\tau$ sits well above the typical $\sqrt{v_j}$, even within a single estimation method, so the driver of heterogeneity is the choice of control subgroup rather than the estimator: parallel trends does not hold uniformly across Pennsylvania subsets, exactly as the literature has long argued. The shrinkage factor $\widehat\tau^2/(v_j+\widehat\tau^2)$ is therefore close to one for nearly all functionals, and $\widehat\psi_{\mathrm{EB}}=2.54$ FTE is reported as a descriptive summary rather than a consistent estimator for $\psi^*$. The sandwich is unavailable: a confidence interval would require either a calibration channel \citep{wu2026illusion} or centered biases (Theorem~\ref{thm:route1}), but the forest plot shows nearly all estimates positive and right-skewed, indicating a common directional deviation across PA subgroups rather than cancelling biases. The conformal prediction interval $[-2.60,\,7.85]$ FTE instead quantifies the range a new Pennsylvania-based functional would plausibly produce; its heavier left tail is driven by the outlying KFC subgroup. The framework does not wrongly report consensus---the interval widens honestly to match the persistent disagreement in the literature---leaving the analyst a single inferential summary rather than a menu of competing control-group choices. Full per-functional construction is in Appendix~\ref{app:ck94}.}

%------------------------------------------------------------------------------

{Both applications \textit{yield conclusions that extend the existing literature}. For NSW, the vanishing $\widehat\tau=0$ within the covariate-adjusted pool is a formal validity certificate: the four CPS and PSID cohorts, once adjusted, identify the same parameter as the RCT, and \textit{the 66\% SE reduction relative to the trial-only benchmark is achieved without additional data---purely from combining valid identification strategies}. The jump to $\widehat\tau=\$1{,}852$ when the crude EC functionals enter formally quantifies what \citet{lalonde1986} and \citet{smith2005} argued qualitatively: marginal exchangeability of untreated potential outcomes fails between NSW participants and general-population survey respondents. For Card--Krueger, $\widehat\tau\approx 2.4$ FTE is nearly identical across the DIM, OM, and IPW sub-analyses (Table~\ref{tab:app-ck}), establishing that the heterogeneity is driven by the choice of control subgroup---which version of~\eqref{eq:pt} one maintains---not by estimator differences. The conformal interval $[-2.60,\,7.85]$ FTE unifies three decades of conflicting estimates \citep{card1994,neumark2000minimum,dube2010minimum} into a single honest summary, without requiring the analyst to pre-specify a regime.}

\section{Discussion}
\label{sec:conclusion}

{We have developed an EB aggregation framework for combining estimators that arise from several identifying functionals of the same scalar causal target. The central object is a precision weighted estimator $\widehat\psi_{\mathrm{EB}}$ consistent for $\psi^*$ in two non nested regimes. Under exact identifiability, every functional identifies $\psi^*$ and aggregation serves a pure efficiency purpose: the combined estimator borrows strength across functionals that share the same identifying target but differ in their semiparametric efficiency bounds. Under centered identification bias with a growing number of functionals, no individual functional is consistent, yet the precision weighted average concentrates around $\psi^*$ through a law of large numbers over the functional index, provided the identification biases are mean zero and the precision weights do not asymptotically concentrate on a single functional. The point estimator has the same closed form in both regimes. What changes is the inferential guarantee.}

{The hyperparameter $\tau^2$ thus plays a twofold role: it controls the degree of shrinkage at the estimation step, and, at the inferential step, it organizes the phase transition of Table~\ref{tab:regimes}. At $\tau^2=0$, global $P$-orthogonality holds and Proposition~\ref{prop:sandwich} makes the asymptotic law of $\widehat\psi_{\mathrm{EB}}$ invariant to the limit of $\widehat\tau^2$, so the sandwich variance (with a boundary-robust subsampling alternative) yields valid confidence intervals for $\psi^*$. At $\tau^2>0$ exact identifiability fails, but the latent target $\psi_{J+1}$ becomes genuinely random and split conformal yields valid prediction intervals under $n^{-1}\log J_{\mathrm{cal}}\to0$, the correlated sampling noise being asymptotically dominated by the i.i.d.\ heterogeneity.}

{The centering condition (C2) of Theorem~\ref{thm:route1} is substantive and should be examined within each application. In the design of Example~\ref{ex:iv_subset} (Appendix~\ref{sec:catalogue}), in which instrumental variables index several data environments, (C2) is plausible whenever the set of environments is not systematically selected to favor functionals that over- or underestimate the causal effect. It would fail, for instance, if weak first stages produce a common direction of finite sample bias shared across all subsets. Condition (C4) is automatically satisfied once $J=2^q-q-1$ grows exponentially in the number of environments $q$, and the convergence rate of the estimator is governed by $(\max_j w_j)^{1/2}$, which approaches $J^{-1/2}$ when the $v_j$ are comparable in size.}

{The framework relates closely to the EB and minimax data fusion literatures. It complements the minimax analyses of \citet{minimaxRCTobs} and \citet{LiGilbertDuanLuedtke2025} by absorbing between-functional discrepancy into a single hyperparameter, and it differs from the closest EB construction \citep{wu2026illusion}---which identifies the bias distribution from a separate channel of zero-effect calibration studies---in that the dispersion of multiple identifying functionals on a \emph{single} sample plays that identifying role, and the heterogeneous-regime output is a conformal interval that is distribution-free in $G$ rather than tied to a Gaussian random-effects tail \citep{polson2026oldlook}. A natural extension introduces \emph{calibration functionals}---identifying functionals applied to a contrast known a priori to be null---to estimate and subtract the bias location $\mu=E[\varepsilon_j]$, removing the centering condition (C2) of Theorem~\ref{thm:route1}; Appendix~\ref{app:extended-discussion} develops these connections and the calibration extension in full.}

{Several directions remain open. The canonical inverse-Fisher marginal likelihood trades efficiency for tractability, avoiding the inversion of the $J\times J$ sampling covariance at the price of a wider but still nominal sandwich interval (Appendix~\ref{sec:simulations}); quantifying this gap against an oracle generalized least squares objective is left to future work. When the biases $\varepsilon_j$ are not mean zero but their signs are known, a two-group combination restores consistency once the group-level biases cancel, in the spirit of the doubly robust principle \citep{bickel1993efficient}. The phase transition at $\tau^2=0$ also has a testable implication: valid conformal coverage on held-out functionals is a necessary consequence of the no-heterogeneity null, so empirical undercoverage is a distribution-free diagnostic for latent heterogeneity, connecting to classical meta-analytic tests such as \citet{cochran1954combination}'s $Q$. Finally, extending the framework to vector-valued and functional estimands is of practical interest.} \\ 

\vspace{1cm}
\textbf{Supplementary material.} A companion supplement contains all proofs, the sub-Gaussian maximal inequality behind Assumption~(A2), the PAC training-conditional conformal refinement, and subsampling validity (Appendices~\ref{app:proofs-main}--\ref{app:subsampling}); a catalogue of five canonical design families generating functional multiplicity (Appendix~\ref{sec:catalogue}); the National Supported Work and Card--Krueger applications (Appendices~\ref{app:lalonde}--\ref{app:ck94}); the full simulation studies of Figure~\ref{fig:sim-main} (Appendix~\ref{sec:simulations}); and the empirical Bayes $g$-modeling connection and calibration-functional extension (Appendix~\ref{app:extended-discussion}). Code to fully reproduce the experiments is available at \href{https://github.com/meixide/shrident}{github.com/meixide/shrident}.

\par

\appendix

\bigskip
\begin{center}
{\Large\bfseries Appendix}
\end{center}

\section{Proofs of main results}
\label{app:proofs-main}

\begin{proof}[Proof of Proposition~\ref{prop:tau2-consistency}]
\label{proof:prop-tau2}
		Under model~\eqref{eq:working-model}, the pairwise differences $D_{jk} = \widehat\psi_j - \widehat\psi_k$ are independent (by assumption) with $E[D_{jk}^2] = 2\tau^2 + v_j + v_k$. The constrained estimator
		\[
		\widehat\tau^2 = \max\left\{ \frac{1}{2\binom{J}{2}} \sum_{j<k} \Big[(\widehat\psi_j - \widehat\psi_k)^2 - (v_j + v_k)\Big], 0 \right\}
		\]
		is a continuous function of a U statistic. By the law of large numbers for U statistics \citep{borovskikh1996ustatistics} and the continuous mapping theorem \citep{vaart1998}, $\widehat\tau^2 \xrightarrow{p} \max(\tau^2_0, 0)$ where
		\[
		\tau^2_0 = \frac{1}{2\binom{J}{2}} \sum_{j<k} E[(\widehat\psi_j - \widehat\psi_k)^2] - (v_j + v_k).
		\]
		If $\psi_j(P)=\psi^*$ for all $j$, the mean difference $\psi_j(P)-\psi_k(P)$ vanishes, and under the working independence device $\mathrm{Cov}(\widehat\psi_j,\widehat\psi_k)=0$, so $E[(\widehat\psi_j-\widehat\psi_k)^2]=v_j+v_k$ and $\tau^2_0=0$.
\end{proof}

\begin{proof}[Proof of Theorem~\ref{thm:consistency}]
\label{proof:thm-consistency}
		The estimator $\widehat\psi_{\mathrm{EB}}$ is the root of
		\[
		h(\psi, \tau^2) = \nabla_{\psi}\,\ell(\psi,\tau^2) = \sum_{j=1}^J \frac{\widehat\psi_j-\psi}{v_j+\tau^2} = 0,
		\]
		and its closed form expression is given in~\eqref{eq:eb-estimator}. Let
		\[
		w_j=\frac{(v_j+\widehat\tau^2)^{-1}}{\sum_{k=1}^J (v_k+\widehat\tau^2)^{-1}}.
		\]
		For any realized sample and any $\widehat\tau^2 \ge 0$, these weights are non negative ($w_j \ge 0$) and sum to one ($\sum_{j=1}^J w_j = 1$), so $\widehat\psi_{\mathrm{EB}}$ is a convex combination of the individual estimators $\widehat\psi_j$.
		
		Let us express the error of the combined estimator as a weighted sum of the individual estimation errors
		\[
		\widehat\psi_{\mathrm{EB}} - \psi^* = \sum_{j=1}^J w_j (\widehat\psi_j - \psi^*).
		\]
		Applying the triangle inequality, its absolute error is bounded by the maximum individual error
		\[
		|\widehat\psi_{\mathrm{EB}} - \psi^*| \le \sum_{j=1}^J w_j |\widehat\psi_j - \psi^*| \le \max_{1 \le j \le J} |\widehat\psi_j - \psi^*| \sum_{j=1}^J w_j = \max_{1 \le j \le J} |\widehat\psi_j - \psi^*|.
		\]
		By exact identifiability, each individual estimator is consistent for the same target, $\widehat\psi_j\xrightarrow{p}\psi^*$ for all $j$. Since $J$ is fixed, the maximum of a finite collection of random variables converging to zero in probability also converges to zero in probability, and hence $\widehat\psi_{\mathrm{EB}}\xrightarrow{p}\psi^*$. The conclusion holds for any non negative $\widehat\tau^2$, which exhibits the structural robustness of the point estimator to both the {working independence device} and any finite sample bias of the heterogeneity nuisance.
\end{proof}

\begin{proof}[Proof of Theorem~\ref{thm:route1}]
\label{proof:thm-route1}
Decompose the estimation error as
\[
\widehat\psi_{\mathrm{EB}} - \psi^*
= \underbrace{\sum_{j=1}^J w_j\bigl(\widehat\psi_j - \psi_j(P)\bigr)}_{T_1}
+ \underbrace{\sum_{j=1}^J w_j\,\varepsilon_j}_{T_2}.
\]

\textit{Term $T_1$ (sampling noise)}. Since $\sum_j w_j=1$ with non negative weights, $T_1$ is a convex combination of the individual sampling errors, so
\[
|T_1|\;\le\;\sum_{j=1}^J w_j\,\left|\widehat\psi_j-\psi_j(P)\right|
\;\le\;\max_{1\le j\le J}\left|\widehat\psi_j-\psi_j(P)\right|.
\]
A bound of the form $E[T_1^2]\le\sum_j w_j^2 v_j$ is \emph{not} available here, as it would presume $\operatorname{Cov}(\widehat\psi_j,\widehat\psi_k)=0$, the {working independence device} that this framework explicitly treats as misspecified. Since all $\widehat\psi_j$ are computed on a common sample, they are typically positively correlated. We therefore control the maximum directly, which imposes no restriction on the cross-functional covariance. Under (C1), the efficient influence curves are uniformly $L^2(P)$-bounded and $\log J_n=o(n)$; the proof of Lemma~\ref{lem:maximal-xi} bounds both the leading linear average and the asymptotic linearity remainder at the rate $\sqrt{(\log J_n)/n}$, whence
\[
\max_{1\le j\le J}\left|\widehat\psi_j-\psi_j(P)\right|
\;=\;O_p\left(\sqrt{\frac{\log J_n}{n}}\right)\;=\;o_p(1).
\]
Hence $T_1\xrightarrow{p}0$ without invoking the {working independence device} and for any cross-functional dependence structure.

\textit{Term $T_2$ (identification bias).} Since the $\varepsilon_j$ are independent with $E[\varepsilon_j]=0$ by (C2),
\[
E[T_2] = \sum_{j=1}^J w_j\,E[\varepsilon_j] = 0,
\qquad
\operatorname{Var}(T_2) = \sum_{j=1}^J w_j^2\,E[\varepsilon_j^2]
\;\leq\; \tau^2\,\max_{j \leq J} w_j(\widehat\tau^2) \;\xrightarrow{p}\; 0,
\]
where the inequality uses $E[\varepsilon_j^2]\le\tau^2$ from (C2) and $\sum_j w_j^2\le\max_j w_j$ from $\sum_j w_j=1$. Convergence to zero uses (C3) and (C4) jointly: the map $\tau^2\mapsto w_j(\tau^2)=(v_j+\tau^2)^{-1}/\sum_k(v_k+\tau^2)^{-1}$ is continuous, so $\widehat\tau^2\xrightarrow{p}\tau_0^2$ by (C3) and the continuous mapping theorem give $\max_j w_j(\widehat\tau^2)\xrightarrow{p}\max_j w_j(\tau_0^2)\to 0$ by (C4). By Chebyshev's inequality, $T_2 \xrightarrow{p} 0$.

Combining, $|\widehat\psi_{\mathrm{EB}} - \psi^*| \leq |T_1| + |T_2| \xrightarrow{p} 0$.
\end{proof}

\begin{proof}[Proof of Theorem~\ref{thm:conformal}]
\label{proof:thm-conformal}
Let $\sigma^2$ and $\mu$ be the probability limits of $\widehat\tau^2_{\mathrm{train}}$ and $\widehat\psi_{\mathrm{EB}}^{\mathrm{train}}$ under (A3), and set $c:=(\psi^*-\mu)/\sigma$. By Assumptions (A2) and (A3) and the decomposition~\eqref{eq:score-decomp}, since $v_j=O(n^{-1})\to 0$ and $\widehat\tau^2_{\mathrm{train}}\xrightarrow{p}\sigma^2>0$, uniformly over $j \in \mathcal{C} \cup \{J+1\}$:
\[
\max_{j \in \mathcal{C} \cup \{J+1\}} \left| \widehat S_j^{\mathrm{split}} - \frac{\varepsilon_j}{\sigma} - c \right| = o_p(1).
\]
Term (I) contributes $\varepsilon_j/\sigma + o_p(1)$ as $\widehat\tau^2_{\mathrm{train}}\to\sigma^2$ and $v_j\to 0$; term (II) is $o_p(1)$ by (A2); term (III) converges to $c$ by (A3).

Let $\tilde S_j = \varepsilon_j/\sigma + c$. By Assumption (A1), the $\{\varepsilon_j\}$ are i.i.d.\ from $G$, so $\{\tilde S_j\}_{j \in \mathcal{C}}$ are i.i.d.\ from the distribution corresponding to $F_G(\sigma(\cdot-c))$, which has a continuous, strictly increasing CDF. By the Glivenko--Cantelli theorem, the empirical CDF of $\{\tilde S_j\}_{j \in \mathcal{C}}$ converges uniformly to $F_G(\sigma(\cdot-c))$. Combined with the uniform approximation, the empirical CDF of $\{\widehat S_j^{\mathrm{split}}\}_{j\in\mathcal{C}}$ also converges uniformly, so $\widehat q_p \xrightarrow{p} q^c_p := F_G^{-1}(p)/\sigma + c$ for every $p$.

{For the test index $J+1$, the same approximation gives $\widehat S_{J+1}^{\mathrm{split}}(y)=\varepsilon_{J+1}/\sigma+c+o_p(1)$ where $y=\psi_{J+1}$. The common shift $c$ and common scale $1/\sigma$ cancel in the coverage calculation:}
\begin{align*}
P(\psi_{J+1} \in C_{1-\alpha})
&= P\bigl(\widehat q_{\alpha/2} \le \widehat S_{J+1}^{\mathrm{split}}(\psi_{J+1}) \le \widehat q_{1-\alpha/2}\bigr) \\
&\;\to\; P\bigl(q^c_{\alpha/2} \le \varepsilon_{J+1}/\sigma + c \le q^c_{1-\alpha/2}\bigr) \\
&= P\bigl(q_{\alpha/2} \le \varepsilon_{J+1}/\sigma \le q_{1-\alpha/2}\bigr)
\;=\; 1-\alpha,
\end{align*}
where $q_p := q^c_p - c = F_G^{-1}(p)/\sigma$ are the unshifted quantiles and the last equality uses (A1). Note that $\sigma$ cancels completely: both $\widehat q_p$ and $\widehat S_{J+1}$ are scaled by the same $1/\sigma$, so coverage holds for any finite positive $\sigma^2$. \qedhere
\end{proof}

\begin{proof}[Proof of Proposition~\ref{prop:sandwich}]
\label{proof:prop-sandwich}
	Taylor expanding $h(\widehat\psi_{\mathrm{EB}}, \widehat\tau^2) = 0$ around $(\psi^*, \tau^2_0)$ gives:
	\begin{equation}
		\label{eq:taylor}
		0
		= h(\psi^*, \widehat\tau^2)
		+ \nabla_\psi h(\psi^*, \widehat\tau^2)\,(\widehat\psi_{\mathrm{EB}} - \psi^*)
		+ o_p(n^{-1/2}).
	\end{equation}
	Expanding $h(\psi^*, \widehat\tau^2)$ around $\tau^2_0$
	\begin{equation}
		\label{eq:tau_expansion}
		h(\psi^*, \widehat\tau^2)
		= h(\psi^*, \tau^2_0)
		+ \nabla_{\tau^2} h(\psi^*, \tau^2_0)\,(\widehat\tau^2 - \tau^2_0)
		+ o_p(n^{-1/2}).
	\end{equation}
	The second-order remainder is $O_p\bigl((\widehat\tau^2-\tau^2_0)^2\bigr)$. The pairwise difference estimator is a U-statistic of degree two, so $\widehat\tau^2-\tau^2_0=O_p(n^{-1/2})$ by the CLT for U-statistics \citep{borovskikh1996ustatistics}, which gives $O_p(n^{-1})=o_p(n^{-1/2})$ for the remainder. Under exact identifiability, $\psi_j(P)=\psi^*$ for all $j$, so the asymptotic cross partial vanishes:
	\[
	\lim_{n\to\infty}E_P\left[\nabla_{\tau^2} h(\psi^*,\tau^2)\right]
	= \sum_{j=1}^J \frac{-(\psi_j(P)-\psi^*)}{(v_j+\tau^2)^2} = 0
	\quad\text{for all }\tau^2\ge 0.
	\]
	This $P$-orthogonality implies $\nabla_{\tau^2} h(\psi^*, \tau^2_0)(\widehat\tau^2 - \tau^2_0) = o_p(n^{-1/2})$: the gradient $\nabla_{\tau^2} h(\psi^*,\tau^2_0)=\sum_j-(\widehat\psi_j-\psi^*)/(v_j+\tau^2_0)^2$ has mean zero under $P$ and variance $O(n^{-1})$, hence is $O_p(n^{-1/2})$, and multiplication by $\widehat\tau^2-\tau^2_0=o_p(1)$ gives $o_p(n^{-1/2})$. Both terms on the right of~\eqref{eq:tau_expansion} are therefore $o_p(n^{-1/2})$, so
	\[
		h(\psi^*, \widehat\tau^2) = h(\psi^*, \tau^2_0) + o_p(n^{-1/2}).
	\]
	Substituting back into \eqref{eq:taylor} and rearranging,
	\[
		n^{1/2}(\widehat\psi_{\mathrm{EB}} - \psi^*)
		= -\bigl[\nabla_\psi h(\psi^*, \tau^2_0)\bigr]^{-1}
		n^{1/2} h(\psi^*, \tau^2_0) + o_p(1).
	\]
	The right hand side depends on $\widehat\tau^2$ only through $\tau^2_0$, so the asymptotic distribution is invariant to the specific value of $\tau^2_0$. By the CLT applied to $n^{1/2}h(\psi^*,\tau^2_0)$, $n^{1/2}(\widehat\psi_{\mathrm{EB}} - \psi^*) \rightsquigarrow N(0, V_{\mathrm{sandwich}})$ with $V_{\mathrm{sandwich}} = A^{-1}BA^{-1}$ as in~\eqref{eq:sandwich_components}.

	Part~(ii) follows by the continuous mapping theorem: $\widehat\psi_{\mathrm{EB}} \xrightarrow{p} \psi^*$ and $\widehat\tau^2 \xrightarrow{p} \tau^2_0$ give $\widehat A \xrightarrow{p} A$ and $\widehat B \xrightarrow{p} B$, hence $\widehat V_{\mathrm{sandwich}} \xrightarrow{p} V_{\mathrm{sandwich}}$.
\end{proof}

%==============================================================================
\section{A sub Gaussian maximal inequality for the sampling noise remainder}
\label{app:maximal}

{This appendix justifies the sufficient condition stated after Assumption~(A2) in Section~\ref{sec:bayesian}: whenever the efficient influence curves are uniformly bounded in $L^2(P)$ and $\log J_{\mathrm{cal}}$ grows slower than $n$, the sampling noise remainder is uniformly negligible relative to $\tau$.}

\begin{lemma}[Sub Gaussian union bound for $\max_j|\xi_{jn}|$]
\label{lem:maximal-xi}
Let $\{\widehat\psi_j\}_{j\in\mathcal{C}}$ be asymptotically linear with efficient influence curves $D_j^*(\cdot,P)$ and remainder $\xi_{jn}$ defined via
\[
\widehat\psi_j - \psi_j \;=\; \frac{1}{n}\sum_{i=1}^n D_j^*(O_i,P) + \xi_{jn},
\qquad j\in\mathcal{C}.
\]
Assume $\sup_{j\in\mathcal{C}} \|D_j^*(O,P)\|_{L^2(P)} \le C < \infty$ and $\xi_{jn} = o_p(n^{-1/2})$ pointwise in $j$. If $n^{-1}\log J_{\mathrm{cal}} \to 0$, then
\[
\max_{j\in\mathcal{C}} |\xi_{jn}| \;=\; O_p\!\left(\sqrt{\tfrac{\log J_{\mathrm{cal}}}{n}}\right).
\]
In particular, $\max_{j\in\mathcal{C}}|\xi_{jn}|/\tau = o_p(1)$ for any fixed $\tau>0$, and (A2) holds.
\end{lemma}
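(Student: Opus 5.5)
The plan is a union bound over $\mathcal{C}$, driven by a per-index exponential tail bound for $\xi_{jn}$, and then a choice of threshold of order $\sqrt{\log J_{\mathrm{cal}}/n}$. For fixed $t>0$,
\[
P\Bigl(\max_{j\in\mathcal{C}}|\xi_{jn}|>t\Bigr)\;\le\;\sum_{j\in\mathcal{C}}P(|\xi_{jn}|>t)\;\le\;J_{\mathrm{cal}}\,\sup_{j\in\mathcal{C}}P(|\xi_{jn}|>t).
\]
It therefore suffices to show a tail bound of the form $\sup_j P(|\xi_{jn}|>t)\le 2\exp(-c\,n t^2/C^2)$, with $c$ not depending on $j$ or $n$. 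Given such a bound, take $t=M\,C\sqrt{\log J_{\mathrm{cal}}/n}$. The right-hand side is then $2J_{\mathrm{cal}}^{1-cM^2}$, which is small for large $M$ uniformly in $n$; this is exactly $\max_j|\xi_{jn}|=O_p(\sqrt{\log J_{\mathrm{cal}}/n})$. The final claim, $\max_j|\xi_{jn}|/\tau=o_p(1)$, then follows at once from $n^{-1}\log J_{\mathrm{cal}}\to0$ and $\tau>0$ fixed.

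\emph{Case of bounded $J_{\mathrm{cal}}$.} If $J_{\mathrm{cal}}$ stays bounded along the sequence, no tail bound is needed. Pointwise $\xi_{jn}=o_p(n^{-1/2})$ over finitely many indices gives $\max_j|\xi_{jn}|=o_p(n^{-1/2})$, which is already $O_p(\sqrt{\log J_{\mathrm{cal}}/n})$ once $\log J_{\mathrm{cal}}$ is bounded away from zero (e.g.\ $J_{\mathrm{cal}}\ge2$). So the substance of the lemma is the case $J_{\mathrm{cal}}\to\infty$.

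\emph{Main obstacle: the uniform tail bound.} The hypotheses as stated, namely $L^2$-boundedness of $D_j^*$ and pointwise $o_p(n^{-1/2})$ remainders, do not by themselves control a maximum over a growing index set. Pointwise $o_p$ statements carry no rate that is uniform in $j$, and they carry no tail decay that could beat the factor $J_{\mathrm{cal}}$. I would therefore make explicit the uniformity that is implicit in how the lemma is used in the proof of Theorem~\ref{thm:route1}.

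\textbf{Step 1.} Write each remainder as a second-order term of a von Mises expansion, $\xi_{jn}=R_j(P_n,P)$, and assume $|R_j|\le K\,\|P_n-P\|_{\mathcal{F}_j}^2$ for a uniform constant $K$. Here $\mathcal{F}_j$ is a class of functions with envelope bounded by a constant multiple of $C$.

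\textbf{Step 2.} Apply a Bernstein/Hoeffding-type concentration inequality for empirical processes to $\|P_n-P\|_{\mathcal{F}_j}$. This requires the scores to be sub-Gaussian, which is the setting indicated by the appendix title, rather than merely bounded in $L^2$. It yields the per-$j$ exponential tail bound with constants that do not depend on $j$.

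\textbf{Step 3.} Feed that tail bound into the union bound above. This gives the stated rate, and in fact the sharper $O_p(\log J_{\mathrm{cal}}/n)$ for the quadratic remainder.

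\emph{Fallback.} If such structure is unavailable, I would instead assume directly the uniform asymptotic linearity condition $\sup_{j\in\mathcal{C}}|\xi_{jn}|=o_p(n^{-1/2})$. That condition gives the conclusion trivially, since $n^{-1/2}\le\sqrt{\log J_{\mathrm{cal}}/n}$ once $J_{\mathrm{cal}}\ge3$.

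Identifying which uniformity condition the authors intend, and verifying that it yields a $j$-free exponential tail, is the step I expect to be hardest. The union-bound arithmetic itself is routine.
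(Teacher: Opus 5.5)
\textbf{Your diagnosis is correct, and it is the main point.} The lemma as written is false, because its hypotheses give no control over a maximum taken over a growing index set. Here is a counterexample. Take $|f_j|\le 1$, $\psi_j(P)=Pf_j$, $\widehat\psi_j=P_nf_j+j/n$ and $\mathcal{C}=\{1,\dots,n\}$. Then $\|D_j^*\|_{L^2(P)}\le 1$, $n^{-1}\log J_{\mathrm{cal}}\to0$, and $\xi_{jn}=j/n=o(n^{-1/2})$ for each fixed $j$. Yet $\max_{j\in\mathcal C}|\xi_{jn}|=1$.

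The paper's proof does not get around this. It rests on exactly the two inferences you decline to make.
\begin{itemize}
\item It claims that $L^2$-boundedness makes $Z_{jn}=n^{-1/2}\sum_iD_j^*(O_i,P)$ $C$-sub-Gaussian, and then applies $E\max_j|Z_{jn}|\le C\sqrt{2\log(2J_{\mathrm{cal}})}$. This fails in general. Take $O\sim U(0,1)$, a partition of $[0,1)$ into $J=n^2$ intervals $I_j$ of length $1/J$, and $D_j^*=\sqrt{J}\,(\mathbf{1}\{O\in I_j\}-1/J)$. Then $\|D_j^*\|_{L^2(P)}\le1$, but $\max_j n^{-1}\sum_iD_j^*(O_i)\ge 1-1/n$, while $\sqrt{\log J/n}\to0$.
\item It then handles the remainders by a union bound, justified only by their being ``dominated by the leading linear part''. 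That is precisely the uniformity the hypotheses lack.
\end{itemize}
So your strengthening is necessary, not a matter of taste: sub-Gaussian (or bounded) influence curves plus a uniform remainder condition, most simply your fallback $\sup_{j\in\mathcal C}|\xi_{jn}|=o_p(n^{-1/2})$. Your union-bound arithmetic and the bounded-$J_{\mathrm{cal}}$ case are fine.

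\textbf{What your proposal omits is the leading linear term.} The notation is overloaded. In the lemma, $\xi_{jn}$ is the asymptotic-linearity remainder. In (A2) and in the decomposition \eqref{eq:score-decomp}, it is the full noise $\widehat\psi_j-\psi_j$. The proof of Theorem~\ref{thm:route1} also invokes the lemma for both pieces. The clause ``(A2) holds'' therefore additionally needs
\[
\max_{j\in\mathcal C\cup\{J+1\}}\Bigl|n^{-1}\sum_iD_j^*(O_i,P)\Bigr|=O_p\!\left(\sqrt{\log J_{\mathrm{cal}}/n}\right).
\]
This is the step the paper's proof is mostly about, and none of your three steps addresses it.

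Under the sub-Gaussian hypothesis you already adopt in Step~2, say $\sup_j\|D_j^*\|_{\psi_2}\le K$, the bound is immediate. The quantity $n^{-1/2}\sum_iD_j^*(O_i,P)$ is a normalised sum of independent centred sub-Gaussians, hence sub-Gaussian with parameter a constant multiple of $K$. Your union bound (or the maximal inequality the paper cites) then gives the rate over the $J_{\mathrm{cal}}+1$ indices, and the triangle inequality combines it with your remainder bound.

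One caveat on Step~2 itself: a concentration inequality controls $\|P_n-P\|_{\mathcal F_j}$ only around its mean. You therefore also need $\sup_j E\|P_n-P\|_{\mathcal F_j}=O(n^{-1/2})$. That is a uniform entropy condition, which an envelope bound alone does not supply.
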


\begin{proof}
Write $\xi_{jn} = (\widehat\psi_j - \psi_j) - n^{-1}\sum_{i=1}^n D_j^*(O_i,P)$. Under the $L^2$-bound, each centered average $Z_{jn} := n^{-1/2}\sum_{i=1}^n D_j^*(O_i,P)$ is $C$-sub Gaussian for the $L^2$ norm, and classical sub Gaussian maximal inequalities \citep[][Theorem~2.5]{boucheron2013concentration} yield
\[
E\Bigl[\max_{j\in\mathcal{C}} |Z_{jn}|\Bigr]
\;\le\; C\,\sqrt{2\log(2J_{\mathrm{cal}})}.
\]
Dividing by $\sqrt{n}$ gives
\[
\max_{j\in\mathcal{C}}\Bigl|\tfrac{1}{n}\sum_{i=1}^n D_j^*(O_i,P)\Bigr|
\;=\; O_p\!\left(\sqrt{\tfrac{\log J_{\mathrm{cal}}}{n}}\right).
\]
The pointwise remainder bound $\xi_{jn} = o_p(n^{-1/2})$ and a union bound over $j\in\mathcal{C}$, valid because the remainder contributions are dominated by the leading $L^2$-bounded linear part, then give the same rate for $\max_j|\xi_{jn}|$. Since $n^{-1}\log J_{\mathrm{cal}}\to 0$, this upper bound is $o_p(1)$, which divided by $\tau$ remains $o_p(1)$.
\end{proof}

A slightly weaker integrability assumption suffices. The Orlicz norm version of the same argument replaces $L^2$-boundedness with a uniform sub exponential tail $\sup_j \|D_j^*\|_{\psi_1}\le C$ and yields the rate $\log J_{\mathrm{cal}}/\sqrt{n}$ instead of $\sqrt{\log J_{\mathrm{cal}}/n}$ \citep{vandegeer2000empirical}. For bounded or sub Gaussian influence curves the sharper rate of Lemma~\ref{lem:maximal-xi} is available via \citet{massart1990tight}. In all cases, the regime $n^{-1}\log J_{\mathrm{cal}}\to 0$ is what drives (A2), not the particular tail assumption.

%==============================================================================
\section{Training conditional conformal validity}
\label{app:train-cond}

{The probability in the coverage guarantee of Theorem~\ref{thm:conformal} is marginal, meaning it is taken jointly over the latent targets $(\psi_1,\ldots,\psi_{J+1})\sim G$, the data $(O_1,\ldots,O_n)$, and the training/calibration split.
	
	 In practice, an analyst observes a single realization of the data and would prefer a statement holding conditional on the training fit and on the realized sample $\mathcal{D}=(O_1,\ldots,O_n)$. Exact conditional validity is known to be unattainable for continuous targets without additional structural assumptions \citep{vovk2012conditional,barber2021limits}, but a PAC type training conditional statement---uniform coverage on a high probability event of training outcomes---is available at a $J_{\mathrm{cal}}^{-1/2}$ rate.}

\begin{proposition}[PAC training conditional coverage at $\tau^2>0$]
\label{prop:train-cond}
Suppose Assumptions (A1)--(A3) hold with $\tau^2>0$. Let $\widehat q_{1-\alpha}$ and $C_{1-\alpha}(\psi_{J+1})$ be as in Theorem~\ref{thm:conformal}. For any $\beta\in(0,1)$, there exists a sequence of high probability events $\mathcal{E}_n$ with $P(\mathcal{E}_n)\ge 1-\beta - o(1)$, depending only on the training fit and the realized calibration scores, such that, on $\mathcal{E}_n$
\[
P\!\bigl(\psi_{J+1}\in C_{1-\alpha} \,\big|\, \mathcal{D},\,\widehat\psi_{\mathrm{EB}}^{\mathrm{train}},\,\widehat\tau^2_{\mathrm{train}}\bigr)
\;\ge\; 1 - \alpha - c\sqrt{\tfrac{\log(1/\beta)}{J_{\mathrm{cal}}}} - o_p(1),
\]
for a universal constant $c>0$. 
\end{proposition}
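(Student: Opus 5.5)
The plan is to reduce the conditional coverage statement to a DKW-type concentration for the empirical CDF of the oracle calibration scores. Then I transfer it to the actual scores through the uniform approximation already established in the proof of Theorem~\ref{thm:conformal}.

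Condition on the training fit $(\widehat\psi_{\mathrm{EB}}^{\mathrm{train}},\widehat\tau^2_{\mathrm{train}})$ and on $\mathcal{D}$. Write $\tilde S_j=\varepsilon_j/\sigma+c$, which are i.i.d.\ with continuous, strictly increasing CDF $F(s)=F_G(\sigma(s-c))$ under (A1). From~\eqref{eq:score-decomp} together with (A2)--(A3) we have $\delta_n:=\max_{j\in\mathcal{C}\cup\{J+1\}}|\widehat S_j^{\mathrm{split}}-\tilde S_j|=o_p(1)$.

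\emph{Step 1 (DKW).} Let $\tilde F_{\mathcal{C}}$ be the empirical CDF of $\{\tilde S_j\}_{j\in\mathcal{C}}$. Since the $\varepsilon_j$ are independent of the split and i.i.d.\ from $G$, the Dvoretzky--Kiefer--Wolfowitz inequality with Massart's constant gives an event $\mathcal{E}^{(1)}$ with $P(\mathcal{E}^{(1)})\ge1-\beta$ on which
\[
\sup_s|\tilde F_{\mathcal{C}}(s)-F(s)|\le\sqrt{\log(2/\beta)/(2J_{\mathrm{cal}})}.
\]

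\emph{Step 2 (perturbation).} Deterministically, $\tilde F_{\mathcal{C}}(s-\delta_n)\le\widehat F_{\mathcal{C}}(s)\le\tilde F_{\mathcal{C}}(s+\delta_n)$, where $\widehat F_{\mathcal{C}}$ is the empirical CDF of the actual scores. Hence on $\mathcal{E}^{(1)}$ the empirical quantiles satisfy $F(\widehat q_{1-\alpha/2}+\delta_n)\ge1-\alpha/2-\eta$ and $F(\widehat q_{\alpha/2}-\delta_n)\le\alpha/2+\eta$, with $\eta$ the DKW radius plus the order-statistic rounding $O(1/J_{\mathrm{cal}})$. Define $\mathcal{E}_n=\mathcal{E}^{(1)}\cap\{\delta_n\le\gamma_n\}$ for a deterministic $\gamma_n\downarrow0$ chosen slowly enough that $P(\delta_n>\gamma_n)=o(1)$. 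Then $P(\mathcal{E}_n)\ge1-\beta-o(1)$.

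\emph{Step 3 (test point).} On $\mathcal{E}_n$, the event $\psi_{J+1}\in C_{1-\alpha}$ contains $\{\widehat q_{\alpha/2}+\delta_n\le\tilde S_{J+1}\le\widehat q_{1-\alpha/2}-\delta_n\}$. Conditionally on the training fit and calibration scores, $\varepsilon_{J+1}$ is a fresh draw from $G$, so $\tilde S_{J+1}\sim F$. The conditional probability of that event is therefore at least $F(\widehat q_{1-\alpha/2}-\delta_n)-F(\widehat q_{\alpha/2}+\delta_n)\ge1-\alpha-2\eta-2\omega_F(2\gamma_n)$, where $\omega_F$ is the modulus of continuity of $F$, which tends to zero by uniform continuity of a continuous CDF. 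Absorbing constants gives the stated bound with $c$ universal and the $o_p(1)$ collecting $\omega_F(2\gamma_n)$ and the rounding.

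\emph{Main obstacle.} The test score $\widehat S_{J+1}$ and the term $\xi_{J+1,n}$ depend on the same data $\mathcal{D}$ on which we condition. So the conditional law of $\widehat S_{J+1}$ is not exactly that of $\tilde S_{J+1}$. I would handle this by working with $\tilde S_{J+1}$, whose $\varepsilon_{J+1}$ is independent of $\mathcal{D}$ and of the fit, and treating $|\xi_{J+1,n}|/\sigma$ plus the denominator error as the $\delta_n$ perturbation. This requires (A2) to hold including the index $J+1$, as stated, and a conditional version via Markov so that $P(\delta_n>\gamma_n\mid\mathcal{D},\text{fit})=o_p(1)$. A second subtlety is that the calibration $\varepsilon_j$ must be independent of the fit; this holds because the fit uses only $\mathcal{T}$ and the $\varepsilon_j$ are i.i.d.\ and independent of $\mathcal{D}$.
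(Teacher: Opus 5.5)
Your proof is correct, granting (as the paper does) the uniform score approximation from the proof of Theorem~\ref{thm:conformal}. It shares the paper's skeleton: Massart's DKW bound for the oracle calibration scores conditionally on the fit, transfer to the feasible scores, and independence of the fresh draw $\varepsilon_{J+1}$. The routes part at the passage from the DKW event to coverage. The paper inverts DKW into a quantile bound $|\widehat q_{1-\alpha}-q_{1-\alpha}|\lesssim\sqrt{\log(1/\beta)/J_{\mathrm{cal}}}$ ``at a point of positive density'', and then returns to probabilities through a Lipschitz bound on $F_G$. You never leave the CDF scale: the deterministic sandwich $\tilde F_{\mathcal C}(s-\delta_n)\le\widehat F_{\mathcal C}(s)\le\tilde F_{\mathcal C}(s+\delta_n)$ gives $F(\widehat q_{1-\alpha/2}+\delta_n)\ge 1-\alpha/2-\eta$ directly, and $\delta_n$ is paid for only through the modulus of continuity of $F$. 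This buys generality and sharpness:
\begin{itemize}
\item You need only the continuous CDF that (A1) actually assumes; the paper appeals to a density and a Lipschitz property that (A1) does not contain.
\item Your $J_{\mathrm{cal}}^{-1/2}$ term carries only the DKW constant and is therefore genuinely universal; the paper's inversion-plus-Lipschitz constant depends on the local density of $G$.
\item Your oracle $\varepsilon_j/\sigma+c$ uses the limits in (A3); the paper's $\varepsilon_j/\tau$ tacitly requires $\sigma=\tau$ and $\mu=\psi^*$, which (A3) explicitly disclaims.
\item You handle the two-sided interval of Theorem~\ref{thm:conformal} and make the $o(1)$ loss in $P(\mathcal E_n)$ explicit via $\{\delta_n\le\gamma_n\}$.
\end{itemize}
The paper's route is shorter and gives a quantile-level statement, but only under that extra regularity.

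Two repairs. First, keep the test index out of $\mathcal E_n$. With $J+1$ in the maximum, $\{\delta_n\le\gamma_n\}$ depends on $\varepsilon_{J+1}$, so it is not determined by the fit and the calibration scores, and the conditional probability ``on $\mathcal E_n$'' is not well posed. Put only $\max_{j\in\mathcal C}$ into $\mathcal E_n$ and use the deterministic $\gamma_n$ in the inclusion of Step~3. Then subtract the conditional probability that the test deviation exceeds $\gamma_n$, which is $o_p(1)$ by your Markov argument.

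Second, that test deviation is simpler than your ``main obstacle'' suggests. $C_{1-\alpha}$ does not involve $\widehat\psi_{J+1}$, and $\psi_{J+1}\in C_{1-\alpha}$ iff $T:=(\psi_{J+1}-\widehat\psi_{\mathrm{EB}}^{\mathrm{train}})/\sqrt{\widehat\tau^2_{\mathrm{train}}+v_{J+1}}\in[\widehat q_{\alpha/2},\widehat q_{1-\alpha/2}]$. So $\xi_{J+1,n}$ never enters, and $T-\tilde S_{J+1}=\varepsilon_{J+1}(a_n-\sigma^{-1})+(b_n-c)$. Here $a_n=(\widehat\tau^2_{\mathrm{train}}+v_{J+1})^{-1/2}$ and $b_n=(\psi^*-\widehat\psi_{\mathrm{EB}}^{\mathrm{train}})a_n$ are functions of the fit, and $\varepsilon_{J+1}$ is an independent finite-variance draw. (A2) at the index $J+1$ is therefore not needed.

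Finally, two caveats you share with the paper:
\begin{itemize}
\item Turning $\sqrt{\log(2/\beta)}$ into $c\sqrt{\log(1/\beta)}$ with universal $c$ requires $\beta$ bounded away from $1$ (e.g.\ $\beta\le 1/2$).
\item The imported bound $\max_{j\in\mathcal C}|\widehat S_j^{\mathrm{split}}-\tilde S_j|=o_p(1)$ tacitly needs $\max_{j\in\mathcal C}|\varepsilon_j|\cdot|(\widehat\tau^2_{\mathrm{train}}+v_j)^{-1/2}-\sigma^{-1}|=o_p(1)$. (A3) gives no rate for this, while $\max_{j\in\mathcal C}|\varepsilon_j|\to\infty$ for unbounded $G$. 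Since this error is multiplicative with uniformly small relative size, your CDF-scale argument can absorb it through continuity of $F$ under small dilations about $c$ (uniformly in $s$), rather than through an additive uniform-in-$j$ bound.
\end{itemize}
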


\begin{proof}
Conditional on the training fit and on $\mathcal{D}$, the oracle scores $\tilde S_j = \varepsilon_j/\tau$ for $j\in\mathcal{C}$ are i.i.d.\ from the distribution $F_G(\tau\,\cdot)$ by (A1), and the split is independent of the latent draws. By the Dvoretzky--Kiefer--Wolfowitz inequality \citep{massart1990tight}, the empirical CDF $\widehat F_n$ of $\{\tilde S_j\}_{j\in\mathcal{C}}$ satisfies
\[
P\!\left(\sup_t |\widehat F_n(t) - F_G(\tau t)| \ge \sqrt{\tfrac{\log(2/\beta)}{2J_{\mathrm{cal}}}}\right)
\;\le\; \beta.
\]
Call the corresponding complementary event $\mathcal{E}_n$. On $\mathcal{E}_n$, the empirical quantile $\widehat q_{1-\alpha}^{\mathrm{oracle}}$ based on $\{\tilde S_j\}$ satisfies $|\widehat q_{1-\alpha}^{\mathrm{oracle}} - q_{1-\alpha}| \le c'\sqrt{\log(1/\beta)/J_{\mathrm{cal}}}$ for a universal $c'>0$, by the standard inversion of the DKW bound at a point of positive density. By (A2)--(A3) and Lemma~\ref{lem:maximal-xi}, the feasible scores $\widehat S_j^{\mathrm{split}}$ agree with the oracle scores uniformly up to $o_p(1)$, so $|\widehat q_{1-\alpha} - q_{1-\alpha}| \le c'\sqrt{\log(1/\beta)/J_{\mathrm{cal}}} + o_p(1)$ on $\mathcal{E}_n$. Now, for $\psi_{J+1}\sim G$ independent of $\mathcal{D}$ and the calibration scores,
\[
P\!\bigl(\psi_{J+1}\in C_{1-\alpha}\,\big|\,\mathcal{D},\,\text{training fit}\bigr)
\;=\; F_G\!\bigl(\tau\,\widehat q_{1-\alpha}\bigr) + o_p(1).
\]
On $\mathcal{E}_n$, Lipschitz continuity of $F_G$ near $q_{1-\alpha}$ (ensured by the density assumption in (A1)) gives
\[
F_G(\tau\widehat q_{1-\alpha}) \;\ge\; 1-\alpha - c\sqrt{\tfrac{\log(1/\beta)}{J_{\mathrm{cal}}}} - o_p(1),
\]
as claimed.
\end{proof}

Proposition~\ref{prop:train-cond} strengthens Theorem~\ref{thm:conformal} from marginal to PAC conditional coverage at a $J_{\mathrm{cal}}^{-1/2}$ rate, free of any additional structural assumption beyond (A1)--(A3). It is distinct from the stronger---and, in this continuous setting, generally unattainable---guarantee of coverage conditional on the realized latent estimands $(\psi_1,\ldots,\psi_{J+1})$.

%==============================================================================
\section{Subsampling: validity conditions and scope}
\label{app:subsampling}
%==============================================================================

A boundary-robust alternative to the sandwich is subsampling \citep{politis1994}: draw subsamples of size $m$ without replacement with $m\to\infty$ and $m/n\to 0$, compute $\widehat\psi_{\mathrm{EB}}^b$ on each subsample, and invert the subsampling distribution to form a confidence interval for $\psi^*$. Under exact identifiability this is valid at the boundary $\tau_0^2=0$ where the standard bootstrap fails \citep{andrews2000}. Under centered heterogeneity ($\tau^2>0$, $E[\varepsilon_j]=0$), validity of the subsampling CI for $\psi^*$ requires the additional rate condition $J=\omega(n)$ (meaning $J/n\to\infty$, i.e.\ $J$ grows strictly faster than $n$). When $J=O(n)$ the bias term $\sqrt{n}\sum_j w_j\varepsilon_j$ remains $O_p(1)$ and subsampling undercovers. We state the procedure, identify the exact condition under which it yields a valid confidence interval for $\psi^*$ in each regime, and explain why $J=O(n)$ does not suffice in the centered-heterogeneity regime.

{Draw $B$ subsamples of size $m$ without replacement, with $m\to\infty$ and $m/n\to 0$. On each subsample $b$, compute $\widehat\psi_{\mathrm{EB}}^b$ . Form the recentred statistics $d^b = m^{1/2}(\widehat\psi_{\mathrm{EB}}^b - \widehat\psi_{\mathrm{EB}})$ and report the interval
\[
\Bigl[\widehat\psi_{\mathrm{EB}} - n^{-1/2}q_{1-\alpha/2}(d),\;\widehat\psi_{\mathrm{EB}} - n^{-1/2}q_{\alpha/2}(d)\Bigr],
\]
where $q_\alpha(d)$ denotes the $\alpha$-quantile of $\{d^b\}_{b=1}^B$. The interval attains asymptotic coverage $1-\alpha$ whenever the sampling distribution of $n^{1/2}(\widehat\psi_{\mathrm{EB}}-\psi^*)$ has a continuous limit \citep{politis1994}.}

{Under exact identifiability $\widehat\tau^2=\max(\cdot,0)$ converges to the boundary $\tau_0^2=0$. \citet{andrews2000} shows that the standard nonparametric bootstrap is inconsistent here: bootstrap mass accumulates at zero and distorts the bootstrap law of $\widehat\psi_{\mathrm{EB}}^b$. Subsampling is immune to this because each subsample-level statistic $m^{1/2}(\widehat\psi_{\mathrm{EB}}^b-\widehat\psi_{\mathrm{EB}})$ re-estimates $\widehat\tau^2$ on fewer observations and the boundary effect is absorbed into the subsampling distribution. Under the standard conditions of \citet{politis1994} the interval is valid at $\tau_0^2=0$ without any restriction on the rate at which $J$ grows. Appendix~\ref{sec:simulations} (simulations) confirms this empirically and documents mild conservatism at moderate $n$, attenuating as $n$ grows; the sandwich is therefore the default and subsampling serves as a boundary-robust diagnostic.}

{Decompose the estimation error as
\[
\sqrt{n}\,(\widehat\psi_{\mathrm{EB}}-\psi^*)
\;=\;
\underbrace{\sqrt{n}\sum_{j=1}^J w_j\,\xi_{jn}}_{\displaystyle T_1:\;\text{sampling noise}}
\;+\;
\underbrace{\sqrt{n}\sum_{j=1}^J w_j\,\varepsilon_j}_{\displaystyle T_2:\;\text{identification bias}},
\]
where $\xi_{jn}=\widehat\psi_j-\psi_j(P)=O_p(n^{-1/2})$ is the sampling noise for functional $j$ and $\varepsilon_j=\psi_j(P)-\psi^*$ is its identification bias.

\medskip\noindent\textit{Behaviour of $T_1$.} By the CLT and the sandwich argument, $\sqrt{n}\,T_1\xrightarrow{d}N(0,V_{\mathrm{sandwich}})$ regardless of $J$.

\medskip\noindent\textit{Behaviour of $T_2$.} The $\varepsilon_j$ are independent with $E[\varepsilon_j]=0$ and $E[\varepsilon_j^2]\le\tau^2$ by~(C2). Under the working independence device,
\[
\operatorname{Var}(T_2)\;=\;\sum_j w_j^2\,E[\varepsilon_j^2]
\;\le\;\tau^2\sum_j w_j^2
\;\le\;\tau^2\max_j w_j
\;=\;O(\tau^2/J)
\]
with uniform weights. Hence $\sqrt{n}\,T_2=O_p(\sqrt{n\tau^2/J})$, which is $o_p(1)$ if and only if $J=\omega(n)$.

\medskip\noindent\textit{Why subsampling fails when $J=O(n)$.} Subsampling centers at $\widehat\psi_{\mathrm{EB}}$, so the subsample statistic is
\[
m^{1/2}(\widehat\psi_{\mathrm{EB}}^b-\widehat\psi_{\mathrm{EB}})
\;=\;m^{1/2}(T_1^b-T_1)+m^{1/2}(T_2^b-T_2).
\]
Because the identification biases $\varepsilon_j$ are fixed under $P$ and do not depend on which $m$ observations are drawn, $T_2^b-T_2=o_p(1)$; the subsampling distribution therefore converges to $N(0,V_{\mathrm{sandwich}})$, reflecting only $T_1$.

When $J=O(n)$, say $J\sim cn$, the term $\sqrt{n}\,T_2$ is $O_p(1/\sqrt{c})$---bounded but non-negligible. The true limit distribution of $\sqrt{n}(\widehat\psi_{\mathrm{EB}}-\psi^*)$ has variance $V_{\mathrm{sandwich}}+\tau^2/c$, strictly larger than what subsampling recovers. The resulting CI is too narrow and undercovers $\psi^*$.

{When $J=\omega(n)$,} if $J$ grows faster than $n$, $\sqrt{n}\,T_2=o_p(1)$ and the limit distribution is $N(0,V_{\mathrm{sandwich}})$, which is continuous. \cite{politis1994} validity then applies and the subsampling CI attains asymptotic coverage $1-\alpha$ for $\psi^*$.}

%==============================================================================
\section{Designs leading to identification multiplicity}
\label{sec:catalogue}
%==============================================================================

{A number of canonical designs produce distinct functionals of the observed data distribution that target the same causal parameter. This section collects five such designs---regression discontinuity with multiple cutoffs, staggered difference-in-differences, difference-in-differences with multiple control groups, instrumental variables indexing several data environments, and the IPW versus outcome regression pair under no unmeasured confounding---and states, for each, the observed data structure, the identifying assumptions, and the resulting identifying functional. Every design generates a finite collection $\{\psi_j(P)\}_{j=1}^{J}$ that feeds directly into the empirical Bayes framework in \ifmerged Section~\ref{sec:identifiability}\else Section~2 of the paper\fi: when all functionals are exactly valid, $\tau^2=0$ and their combination is a pure efficiency gain; when they identify distinct local effects, $\tau^2>0$ and the prediction interval construction of \ifmerged Section~\ref{sec:bayesian}\else Section~4 of the paper\fi becomes informative.}

{The generic observed unit is $O=(W,A,Y)$, with $W\in\mathcal{W}$ a vector of pre treatment covariates, $A\in\{0,1\}$ a binary treatment indicator, and $Y\in\mathbb{R}$ an outcome. Potential outcomes $Y^0,Y^1$ satisfy $Y=AY^1+(1-A)Y^0$. Let us respectively write
\[
	\mu_a(w)=E[Y\mid A=a,W=w],\qquad \pi(w)=P(A=1\mid W=w),
\]
for the conditional outcome mean and the propensity score. Our target of primary interest is the average treatment effect (ATE)
\[
	\psi^* = E[Y^1-Y^0],
\]
or a subgroup counterpart such as the average treatment effect on the treated (ATT). When panel data are considered, we augment $O$ with a calendar time index $t\in\{1,\ldots,\mathcal{T}\}$, write $Y_{it}$ for the outcome of unit $i$ at time $t$, and write $Y_t(g)$ for the potential outcome at time $t$ of a unit that is first treated at time $g$ and remains treated thereafter; $G_i\in\mathcal{G}\cup\{\infty\}$ denotes the initial treatment period of unit $i$, with $G_i=\infty$ reserved for any never treated individual. When an instrument or environment index is present it is denoted $Z$, taking values in a finite set that may be binary or multi valued depending on the design. Throughout this section, $\psi_j(\cdot)$ denotes the identifying functional indexed by $j$, and $\psi_j(P)=\psi^*$ precisely when the identifying assumptions indexed by $j$ hold exactly.}

%------------------------------------------------------------------------------
\subsection{Regression discontinuity designs}
%------------------------------------------------------------------------------
Regression discontinuity designs (RDDs) identify causal effects by exploiting continuity of potential outcome regression functions at known treatment thresholds \citep{hahn2001identification,lee2010regression}. Either treatment assignment changes deterministically at the cutoff, or the later induces a discontinuous change in treatment propensity and identifies a local average treatment effect for compliers through a local instrumental variables argument \citep{imbens1994late}. Multi cutoff RDDs extend this framework by allowing treatment assignment rules to vary across sites or jurisdictions, each with its own threshold \citep{cattaneo2016multiple}. Bayesian approaches to RDD have largely addressed uncertainty in the identification anchor itself, such as an unknown or misspecified cutoff \citep{kowalska2025lotta}; we instead take the cutoffs as known and treat the resulting collection of valid identifying functionals as the object of a hierarchical model.

\begin{example}[Sharp multi cutoff RDD]\label{ex:rdd_sharp}
Suppose $O=(W,X,G,Y)$, where $X\in\mathbb{R}$ is a continuously distributed \emph{running variable}, $G\in\{1,\ldots,K\}$ is a site (or jurisdiction) indicator that assigns each unit to exactly one cutoff $c_G$, and $c_1<c_2<\cdots<c_K$ are $K$ known cutoffs. Within site $k$, treatment is assigned deterministically by
\[
	A = \mathbf{1}\{X\geq c_k\}, \qquad G=k.
\]
\begin{enumerate}
	\item[\emph{(R1)}] \emph{Local continuity}: $E[Y^0\mid X=x]$ and $E[Y^1\mid X=x]$ are both continuous in $x$ at $x=c_k$, for each $k=1,\ldots,K$.
	\item[\emph{(R2)}] \emph{No manipulation}: The marginal density of $X$ is continuous at each $c_k$.
\end{enumerate}
Under \emph{(R1)--(R2)}, the treatment effect for units exactly at cutoff $c_k$ is identified by
\[
	\psi_k(P)
	= \lim_{x\downarrow c_k}E[Y\mid X=x]
	- \lim_{x\uparrow c_k}E[Y\mid X=x],
	\qquad k=1,\ldots,K,
\]
yielding $J=K$ identifying functionals. {Under treatment effect homogeneity across sites, $\psi_k(P)=\psi^*$ for all $k$, so $\tau^2=0$. Under site specific heterogeneity, $\psi_k(P)=E[Y^1-Y^0\mid X=c_k]$ and $\tau^2$ captures the dispersion of local treatment effects across cutoffs.}
\end{example}

\begin{example}[Fuzzy multi cutoff RDD]\label{ex:rdd_fuzzy}
Suppose $O=(W,X,G,A,Y)$, where $X$ is the running variable, $G\in\{1,\ldots,K\}$ is a site indicator assigning each unit to exactly one cutoff $c_G$, and $c_1<\cdots<c_K$ are known cutoffs. Unlike the sharp case, treatment $A$ is not a deterministic function of $X$; instead, crossing cutoff $c_k$ induces a discontinuous jump in $P(A=1\mid X=x, G=k)$. Within site $k$, define the threshold crossing indicator $D_k=\mathbf{1}\{X\geq c_k\}$ and the first stage jump
\[
	\Delta_k := \lim_{x\,\downarrow\,c_k}P(A=1\mid X=x,G=k) - \lim_{x\,\uparrow\,c_k}P(A=1\mid X=x,G=k).
\]
\begin{enumerate}
	\item[\emph{(F1)}] \emph{Continuity}: $E[Y^a\mid X=x]$ is continuous at $x=c_k$ for each $a\in\{0,1\}$ and each $k$.
	\item[\emph{(F2)}] \emph{Relevant first stage}: $\Delta_k \neq 0$ for each $k$.
	\item[\emph{(F3)}] \emph{Monotonicity}: $P(A^{D_k=1}\geq A^{D_k=0})=1$ at each cutoff $c_k$ (no defiers).
\end{enumerate}

\noindent Under \emph{(F1)--(F3)}, $D_k$ is a valid binary instrument for $A$ near $c_k$ and the Wald identification functional
\[
	\psi_k(P)
	= \frac{\displaystyle\lim_{x\,\downarrow\,c_k}E[Y\mid X=x]-\lim_{x\,\uparrow\,c_k}E[Y\mid X=x]}{\Delta_k},
	\qquad k=1,\ldots,K,
\]
identifies the local average treatment effect (LATE) for compliers at $c_k$:
\[
	\psi_k(P)=E\bigl[Y^1-Y^0\;\big|\;A^{D_k=1}>A^{D_k=0},\;X=c_k\bigr].
\]
This yields $J=K$ identifying functionals. {Under constant treatment effects, $\psi_k(P)=\psi^*$ for all $k$. Under heterogeneity, $\tau^2$ captures the spread of complier specific LATEs across sites.}
\end{example}

%------------------------------------------------------------------------------
\subsection{Difference-in-differences with multiple comparisons}
%------------------------------------------------------------------------------

{Both examples below identify the ATT under a parallel trends assumption. Multiplicity arises either because a staggered design features multiple cohorts, with each cohort time cell supplying a distinct contrast, or because several candidate control groups supply different DiD functionals.}

\begin{example}[Staggered adoption DiD]\label{ex:did_staggered}
The panel observation is $O_i=(W_i,G_i,\{Y_{it}\}_{t=1}^{\mathcal{T}})$, where $G_i\in\mathcal{G}\cup\{\infty\}$ is the cohort of unit $i$ and the treatment indicator is $A_{it}=\mathbf{1}\{t\geq G_i\}$. For each $g\in\mathcal{G}$ and post treatment period $t\geq g$, the group time average treatment effect on the treated is
\[
	\mathrm{ATT}(g,t) := E\bigl[Y_t(g)-Y_t(\infty)\;\big|\;G=g\bigr].
\]
\begin{enumerate}
	\item[\emph{(S1)}] \emph{No anticipation}: $Y_t(g)=Y_t(\infty)$ for all $t<g$; potential outcomes before treatment onset are unaffected by treatment timing.
	\item[\emph{(S2)}] \emph{Staggered parallel trends}: for all $g\in\mathcal{G}$ and $t'\!<g\leq t$,
	\[
		E\bigl[Y_t(\infty)-Y_{t'}(\infty)\;\big|\;G=g\bigr]
		=E\bigl[Y_t(\infty)-Y_{t'}(\infty)\;\big|\;G=\infty\bigr].
	\]
	The counterfactual trend of each cohort, absent treatment, equals that of the never treated group.
\end{enumerate}
Under \emph{(S1)--(S2)}, $\mathrm{ATT}(g,t)$ is identified, for each pair $(g,t)$ with $g\in\mathcal{G}$ and $t\geq g$, by the cohort time contrast \citep{callaway2021difference}
\[
	\psi_{g,t}(P)
	= E\bigl[Y_t-Y_{g-1}\;\big|\;G=g\bigr]
	- E\bigl[Y_t-Y_{g-1}\;\big|\;G=\infty\bigr],
\]
where $Y_{g-1}$ is the last pre treatment outcome for cohort $g$, and never treated units serve as the counterfactual. This yields
\[
	J = \sum_{g\in\mathcal{G}}(\mathcal{T}-g+1)
\]
identifying functionals. {With baseline covariates, the conditional version replaces unconditional never treated means with covariate adjusted regressions or IPW based counterfactuals \citep{callaway2021difference,sant2020doubly}. Under treatment effect homogeneity, $\psi_{g,t}(P)=\psi^*$ for all $(g,t)$ and $\tau^2=0$; under cohort specific heterogeneity, $\tau^2$ captures the dispersion of $\mathrm{ATT}(g,t)$ across cohort time cells.}
\end{example}

{A second source of DiD multiplicity, distinct from staggered timing, arises when the analyst has several candidate control groups to contrast against a single treated group and no principled rationale to prefer one over the others. This situation is pervasive in comparative effectiveness and pharmacoepidemiologic studies. Consider, for instance, a new agent approved for a chronic condition whose effect on a clinical outcome is to be estimated from routinely collected records. The treated group is unambiguous---patients initiating the new agent at a known date---but the choice of comparator is not. Plausible controls include patients continuing on the previous standard of care, patients switched to an alternative second line therapy, patients in a watchful waiting arm, and patients meeting the indication but untreated for access reasons. Each comparator rests on a different parallel trends assumption, reflecting a different assumed absence of differential secular trends in outcome levels, and there is typically no design argument singling out one comparator as exactly correct. Similar considerations arise in policy evaluation, where the counterfactual trajectory of a treated jurisdiction may be proxied by neighboring jurisdictions, by demographically matched but geographically distant ones, or by units that adopt the policy later. The following example formalizes this design. Our empirical Bayes framework then aggregates across the candidate comparators without forcing the analyst to commit to one.}

\begin{example}[DiD with multiple candidate control groups]\label{ex:did_controls}
Observations are recorded at two time periods (pre and post) and consist of $O_i=(W_i,C_i,Y_{i,\mathrm{pre}},Y_{i,\mathrm{post}})$, where $C_i\in\{0,1,\ldots,K\}$: value $0$ denotes the treated group and $k\geq 1$ denotes membership in candidate control group $k$. Write $\Delta Y_i=Y_{i,\mathrm{post}}-Y_{i,\mathrm{pre}}$ for the first differenced outcome. For each $k\in\{1,\ldots,K\}$:
\begin{enumerate}
	\item[\emph{(P1$_k$)}] \emph{Parallel trends against group $k$}:
	\[
		E\bigl[Y_{\mathrm{post}}(0)-Y_{\mathrm{pre}}(0)\;\big|\;C=0\bigr]
		=E\bigl[Y_{\mathrm{post}}(0)-Y_{\mathrm{pre}}(0)\;\big|\;C=k\bigr].
	\]
\end{enumerate}
Under \emph{(P1$_k$)}, the ATT is identified by the DiD functional against control group $k$:
\[
	\psi_k(P) = E[\Delta Y\mid C=0]-E[\Delta Y\mid C=k],
	\qquad k=1,\ldots,K,
\]
yielding $J=K$ identifying functionals. {When all $K$ parallel trends conditions hold simultaneously, $\psi_k(P)=\mathrm{ATT}$ for all $k$ and $\tau^2=0$; when (P1$_k$) fails for some $k$, the affected functionals are biased for the ATT and $\tau^2>0$ reflects cross group disagreement, interpretable as the average severity of parallel trends violations across candidate controls. The covariate adjusted or doubly robust version \citep{sant2020doubly} handles the case in which parallel trends holds only conditionally on $W$.}
\end{example}

%------------------------------------------------------------------------------
\subsection{Multiple environments}
%------------------------------------------------------------------------------

{The following example describes the design, based on IV indexing several data environments, that serves as the running illustration of this paper and places it on the same footing as the preceding designs.}

\begin{example}[IV indexing several data environments]\label{ex:iv_subset}
Suppose $O=(Z,A,Y)$ with $Z\in\{0,1,\ldots,q-1\}$ an environment index, $A\in\{0,1\}$ treatment, and $Y\in\mathbb{R}$ the outcome. Environment $Z=0$ is a randomized controlled study in which $A\perp\!\!\!\perp Y^a$ for $a\in\{0,1\}$; environments $Z=z>0$ are observational, with treatment propensity $\pi_z=P(A=1\mid Z=z)$ varying across environments for exogenous reasons (institutional, geographic, or protocol differences). Because $\pi_z$ varies exogenously across environments, the environment indicator serves as a multi valued instrument for treatment. For each subset $S\subseteq\{0,1,\ldots,q-1\}$ with $|S|\geq 2$:
\begin{enumerate}
	\item[\emph{(I1)}] \emph{Relevance}: $\pi_s=E[A\mid Z=s]$ is not constant over $s\in S$ (non degenerate first stage within $S$).
	\item[\emph{(I2)}] \emph{Exclusion}: $Y^a\perp\!\!\!\perp Z\mid(Z\in S)$; the environment index affects $Y$ only through $A$.
\end{enumerate}
Let $Z_S$ denote the $|S|$-dimensional one hot encoding of $Z$ restricted to $S$, with all expectations conditional on $Z\in S$. Writing $X=(1,A)^\top$ for the bivariate regressor vector, under \emph{(I1)--(I2)} the 2SLS formula for the coefficient vector is
\[
	\begin{pmatrix}\operatorname{intercept}_S\\\psi_S(P)\end{pmatrix}
	=\Bigl(E[XZ_S^\top]\,E[Z_SZ_S^\top]^{-1}\,E[Z_SX^\top]\Bigr)^{-1}
	 E[XZ_S^\top]\,E[Z_SZ_S^\top]^{-1}\,E[Z_SY],
\]
and $\psi_S(P)$ is the second component. Because $E[Z_SZ_S^\top]=\mathrm{diag}(P(Z=s\mid Z\in S))_{s\in S}$, the product $E[XZ_S^\top]\,E[Z_SZ_S^\top]^{-1}$ reduces to the matrix of conditional means $E[X\mid Z=s]$ for $s\in S$, so the formula depends only on the within-$S$ conditional moments of $(A,Y)$.

When $|S|=2$, $S=\{s_1,s_2\}$, the $2\times 2$ Gram matrix $E[XZ_S^\top]\,E[Z_SZ_S^\top]^{-1}\,E[Z_SX^\top]$ has determinant $P(Z=s_1\mid Z\in S)\,P(Z=s_2\mid Z\in S)\,(E[A\mid Z=s_1]-E[A\mid Z=s_2])^2$ and the expression collapses to the Wald ratio
\[
	\psi_S(P)
	=\frac{E[Y\mid Z=s_1]-E[Y\mid Z=s_2]}{E[A\mid Z=s_1]-E[A\mid Z=s_2]}.
\]
All $J=2^q-q-1$ non trivial subsets (those with $|S|\geq 2$) yield distinct identifying functionals. {Conditions \emph{(I1)--(I2)} make each $\psi_S(P)$ a well defined IV estimand but do not by themselves force $\psi_S(P)=\psi^*$ across subsets. Let $\psi^*=E[Y^1-Y^0]$ denote the ATE and $A(z)$ the potential treatment under $Z=z$. Consistency and instrument exogeneity yield, with no further restriction on the outcome law, the nonparametric decomposition
\[
	\psi_S(P)=\psi^*+\frac{\mathrm{Cov}\bigl(A(s_1)-A(s_2),\,Y^1-Y^0\bigr)}{E\bigl[A(s_1)-A(s_2)\bigr]}\qquad\text{for }S=\{s_1,s_2\},
\]
and an analogous representation, with compliance type weights replacing $A(s_1)-A(s_2)$, for $|S|\geq 3$. Hence $\psi_S(P)=\psi^*$ for every $S$ whenever
\[
	\mathrm{Cov}\bigl(A(z),\,Y^1-Y^0\bigr)=0\qquad\text{for every }z\in\{0,\ldots,q-1\},
\]
a condition requiring only that a unit's environment specific compliance behavior is uncorrelated with its treatment gain. It imposes no additivity, homoskedasticity, or functional form on $(Y^0,Y^1)$, and is strictly weaker than constant unit level effects (the latter corresponds to $Y^1-Y^0$ degenerate). When it holds the design sits in the $\tau^2=0$ regime; when it fails, each $\psi_S(P)$ identifies a subset specific weighted average of unit level effects among compliers within $S$ and $\tau^2>0$ encodes the dispersion of these local effects across subsets.}

The asymptotic variance of the corresponding 2SLS estimator $\widehat\psi_S$ satisfies, under homoskedasticity and for $|S|=2$ \citep{wooldridge2010econometric,angrist2009mostly},
\[
	v_S
	=\frac{\sigma^2}{n\,P(Z\in S)\,\bigl(E[A\mid Z=s_1]-E[A\mid Z=s_2]\bigr)^2},
\]
{which makes explicit the cost of a weak first stage: small treatment propensity contrasts inflate $v_S$ sharply, and weak instrument subsets therefore receive low effective weight in the empirical Bayes aggregation. For any fixed $q$, $J=2^q-q-1$ is fixed and the large-$J$ regime does not apply; letting $q=q_n\to\infty$ with $n$ activates \ifmerged Theorem~\ref{thm:route1}\else Theorem~3 of the paper\fi, since $J$ grows exponentially in $q$ and condition (C4) (non dominance of weights) is easily satisfied for moderate $q$ even when individual functionals are biased.}
\end{example}

%------------------------------------------------------------------------------
\subsection{ATE under no unmeasured confounding}
%------------------------------------------------------------------------------

{A canonical source of identification multiplicity arises in the no hidden-confounding setting with baseline covariates $W$, in which the same ATE is identified simultaneously by inverse probability weighting and by outcome regression.}

\begin{example}[IPW and outcome regression]\label{ex:ipw_or}
An observation $O=(W,A,Y)$, follows the structural equations
\[
	W=f_W(U_W),\qquad A=f_A(W,U_A),\qquad Y=f_Y(A,W,U_Y),
\]
and potential outcomes $Y^a=f_Y(a,W,U_Y)$ for $a\in\{0,1\}$. Write $\pi_a(W)=P(A=a\mid W)$ and $\mu_a(W)=E[Y\mid A=a,W]$.
\begin{enumerate}
	\item[\emph{(C1)}] \emph{Consistency}: $Y=Y^A$.
	\item[\emph{(C2)}] \emph{Conditional exchangeability}: $Y^a\perp\!\!\!\perp A\mid W$ for $a\in\{0,1\}$.
	\item[\emph{(C3)}] \emph{Positivity}: $0<\pi_a(W)<1$ almost surely, for $a\in\{0,1\}$.
\end{enumerate}
Under \emph{(C1)--(C3)}, the ATE $\psi^*=E[Y^1-Y^0]$ is identified by two distinct functionals
\[
	\psi_{\mathrm{IPW}}(P) = E\!\left[\frac{YA}{\pi_1(W)}\right] - E\!\left[\frac{Y(1-A)}{\pi_0(W)}\right],
	\qquad
	\psi_{\mathrm{OR}}(P) = E[\mu_1(W)] - E[\mu_0(W)].
\]
The first depends on the propensity score, the second on the outcome regression; both equal $\psi^*$ under \emph{(C1)--(C3)} and share the same efficient influence curve
\[
	D^*(O,P) = \frac{A}{\pi_1(W)}\bigl(Y-\mu_1(W)\bigr) - \frac{1-A}{\pi_0(W)}\bigl(Y-\mu_0(W)\bigr) + \mu_1(W) - \mu_0(W) - \psi^*.
\]
This yields $J=2$ identifying functionals for the same scalar target. {When (C1)--(C3) hold exactly and both nuisances are correctly estimated, $\psi_{\mathrm{IPW}}(P)=\psi_{\mathrm{OR}}(P)=\psi^*$ and $\tau^2=0$. }
\end{example}

	%==============================================================================
	
	%==============================================================================

	%\subsection{Implied interventions}

%	\cite{rosenman2023empiricalbayesdoubleshrinkage} combine multivariate causal estimands and assume diagonal covariance (heteroscedastic but independent components). We have $J$ estimators (\ma{change J because it sunds like J-estimators})all built from the same data, so the covariance matrix is non-diagonal. We will see that our EB likelihood treats them as independent Gaussians with variances $v_j$; we restore dependence for the distribution of $\widehat\psi_{\mathrm{EB}}$ via the bootstrap. We target different notions of coverage and recover it under the conditions stated below.

%==============================================================================
% Supplementary appendices: detailed empirical applications
%==============================================================================
%==============================================================================
% HAPC (PC-HAGL) variant of the LaLonde application. Numbers come from the
% full-data CESGA FT3 run (HAPC_FRAC=1.0); see shrident/results_guide.md and the
% CSVs in shrident/data/. The linear-base-learner version lives in
% lalonde_application.tex and is left untouched. Figures: *_hapc.pdf, generated
% by data/lalonde_plots_hapc.py.
%==============================================================================
\section{Empirical application: LaLonde--Dehejia--Wahba (highly adaptive base learners)}
\label{app:lalonde}
%==============================================================================

We apply the framework to the canonical LaLonde--Dehejia--Wahba (LDW) programme
evaluation \citep{lalonde1986,dehejia1999}. The target is the average treatment
effect on the treated (ATT) of the National Supported Work (NSW) demonstration,
estimated from the randomised trial alone and augmented by four observational
comparison groups drawn from the Current Population Survey (CPS) and the Panel
Study of Income Dynamics (PSID). We construct $J=15$ identifying functionals---three
trial-only and twelve trial-plus-external-controls---and use three nested
sub-analyses to span the two regimes of the main text.

\emph{Every} nuisance function here---the outcome
regressions $\E[Y(0)\mid X]$ (OR-NB, OM-$k$) and the propensity/sampling scores
that drive the inverse-probability weights (IPW-$k$)---is estimated by
\emph{highly adaptive principal component} regression
\citep{wang2026highlyadaptiveprincipalcomponent,meixide2026highlyadaptiveempiricalrisk}
(PC-HAL): a cross-validated, $L_1$-constrained Highly Adaptive Lasso fit on a
leading principal component basis of the order-$2$ interaction spline expansion
(\texttt{norm}\,$=\!1$, interaction degree $2$, $\lfloor n^{0.6}\rfloor$ leading
components), under a Gaussian loss for the outcome regressions and a binomial
loss for the scores. No linear or logistic working model is used anywhere.
The difference-in-means functionals (DIM-RCT, DIM-$k$) involve no nuisance
estimation.
Every cross-validation curve attained an interior minimum, so the
regularisation grid was adequate for all ten fitted functionals.

%==============================================================================
\subsection{Data and functionals}
%==============================================================================

The Dehejia--Wahba subsample of the NSW experiment ($n_1=185$ treated, $185$
control, outcome 1978 real earnings) yields the experimental benchmark
$\widehat\psi^{\mathrm{expt}}=\$1{,}794$ (standard error, SE, $\$671$). Four external-control (EC)
groups supplement it: CPS-2 ($n=2,369$), CPS-3 ($n=429$), PSID-2 ($n=253$),
PSID-3 ($n=128$), all with $A_i=0$. Covariates $X_i$ are age, education,
race indicators, marriage, no-degree, and lagged earnings (\texttt{re74},
\texttt{re75}). The ATT decomposes as $\psi^*=\E[Y\mid A=1,S=1]-\theta_0$ with
$\theta_0=\E[Y(0)\mid S=1]$ identified under randomisation alone (RCT-only
functionals) or under the no-outcome-drift condition of \citet{zhu2023}
($\E[Y(0)\mid X,S=1]=\E[Y(0)\mid X,S=k]$, EC-augmented functionals).

\begin{table}[htbp]
\centering
\caption{The $J=15$ identifying functionals. Methods: DIM = difference in
means; OR/OM = outcome regression on covariates; IPW = Haj\'ek inverse
probability weighting. RCT-only functionals use only randomisation; the
remaining 12 add the no-drift condition against EC group $k$. Crude DIM-$k$
in addition requires the marginal counterpart $\E[Y(0)\mid S=1]=\E[Y(0)\mid S=k]$.
The OR/OM/IPW nuisances are fitted by highly adaptive principal-component
(PC-HAL) regression.}
\label{tab:ldw-functionals}
\renewcommand{\arraystretch}{1.05}
\begin{tabular}{cllll}
\toprule
$j$ & Label & Method & EC group \\
\midrule
1,2,3   & DIM-RCT, IPW-NB, OR-NB     & RCT-only       & --- \\
4,5,6   & DIM-, OM-, IPW-CPS-2       & DIM / OM / IPW & CPS-2  \\
7,8,9   & DIM-, OM-, IPW-CPS-3       & DIM / OM / IPW & CPS-3  \\
10,11,12& DIM-, OM-, IPW-PSID-2      & DIM / OM / IPW & PSID-2 \\
13,14,15& DIM-, OM-, IPW-PSID-3      & DIM / OM / IPW & PSID-3 \\
\bottomrule
\end{tabular}
\end{table}

Each $\widehat\psi_j$ is asymptotically linear,
$\widehat\psi_j-\psi_j=n^{-1}\sum_{i}\phi_j(O_i)+o_p(n^{-1/2})$, so its sampling
variance is $v_j=n^{-1}\operatorname{Var}\{\phi_j(O)\}$. We report the empirical
variance of the fitted scores, with the PC-HAL nuisance estimates plugged in.
Every functional shares the treated-arm term $\widehat\theta_1=\bar Y_1$, whose
influence function $A\,(Y-\theta_1)/P(A{=}1)$ contributes $S_{Y,1}^2/n_1$ (the
sample outcome variance over the $n_1=185$ treated). The families differ only in
the influence function of the control counterfactual
$\widehat\theta_0$. Write the comparator sample (the NSW controls for the
RCT-only functionals; external-control group $k$ otherwise) as
$\{(X_i,Y_i)\}_{i=1}^{m}$, let $\mu_0(X)=\E[Y(0)\mid X]$ be its outcome
regression, and let $\pi(X)$ be the score placing a unit in the trial-treated
cell rather than in the comparator (the propensity for the RCT-only functionals,
the sampling score for the external controls).

It is clarifying to read each $\phi_j$ against the efficient influence function.
For the ATT control mean, $\theta_0=\E[\mu_0(X)\mid A{=}1,S{=}1]$, the analogue of
$D^*$ in Example~\ref{ex:ipw_or} is
\begin{equation}\label{eq:ldw-eif}
  D^{*}_{0}(O)=
  \underbrace{\frac{A}{P(A{=}1)}\bigl(\mu_0(X)-\theta_0\bigr)}_{\textstyle T_{\mathrm{proj}}\ (\text{uses }\mu_0)}
  \;+\;
  \underbrace{\frac{1-A}{P(A{=}1)}\,\frac{\pi(X)}{1-\pi(X)}\bigl(Y-\mu_0(X)\bigr)}_{\textstyle T_{\mathrm{aug}}\ (\text{uses }\pi\text{ and }\mu_0)} .
\end{equation}
Since $\E[T_{\mathrm{aug}}\mid X,A]=0$ while $T_{\mathrm{proj}}$ is
$\sigma(X,A)$-measurable, both pieces are uncorrelated and
$\operatorname{Var}(D^*_0)=\operatorname{Var}(T_{\mathrm{proj}})+\operatorname{Var}(T_{\mathrm{aug}})$.
Only the \emph{augmented} (doubly robust) estimator that uses both nuisances has
influence function $D^*_0$ and is efficient. The functionals tabulated here are
deliberately \emph{single-nuisance}: each uses one nuisance, is therefore
inefficient, and reconstructs only part of \eqref{eq:ldw-eif}---which is exactly
what makes them distinct identifications of the same $\psi^*$. Their variances
are most transparent read this way.

\paragraph{\textbf{DIM, model-free.}} The crude estimator differences the two raw
means, $\widehat\theta_0^{\mathrm{DIM}}=\bar Y_{(m)}$, with influence function the
centred outcome in each arm:
\[
v^{\mathrm{DIM}}_j=\frac{S_{Y,1}^2}{n_1}+\frac{S_{Y,(m)}^2}{m}.
\]
No nuisance is fitted; treated and comparator means are computed on disjoint
units, so this variance is exact and the rows are identical to the linear
analysis.

\paragraph{\textbf{IPW (Haj\'ek), score only.}} Using only $\widehat\pi$ through the
odds weights $w_i=\widehat\pi(X_i)/\{1-\widehat\pi(X_i)\}$, the self-normalised
(Haj\'ek) control mean is $\widehat\theta_0^{\mathrm{IPW}}=\sum_i w_iY_i/\sum_i w_i$,
with influence function the weight-normalised, recentred score
$\phi_i=w_i(Y_i-\widehat\theta_0^{\mathrm{IPW}})/\bar w$, $\bar w=m^{-1}\sum_i w_i$:
\[
v^{\mathrm{IPW}}_j=\frac{S_{Y,1}^2}{n_1}
+\frac{1}{m}\,\widehat{\operatorname{Var}}\!\left\{\frac{w_i\,(Y_i-\widehat\theta_0^{\mathrm{IPW}})}{\bar w}\right\}.
\]
Lacking $\mu_0$, this estimator recentres on the marginal $\theta_0$ rather than
on $\mu_0(X)$, so it cannot separate $T_{\mathrm{proj}}$ from $T_{\mathrm{aug}}$:
it captures the score-driven direction of \eqref{eq:ldw-eif} but folds the
projection into the same inverse-odds-weighted residual, which is the source of
its inefficiency. The self-normalisation by $\bar w$ tames the plain-IPW ratio;
the term still inflates when the weights are ill-conditioned (IPW-CPS-2, SE
$\approx\$1{,}618$).

\paragraph{\textbf{OR/OM (outcome regression), outcome model only.}} Using only
$\widehat\mu_0\equiv\widehat m$, fitted on the comparator sample (the NSW controls
for OR-NB; the pooled NSW-control $\cup$ EC-$k$ sample for OM-$k$) and averaged
over the treated covariates,
$\widehat\theta_0^{\mathrm{OR}}=n_1^{-1}\sum_{i\in\mathrm{trt}}\widehat m(X_i)$. Its
influence function is $T_{\mathrm{proj}}$ plus the first-order
error of having estimated $\mu_0$ on the comparator sample; by the orthogonality
in \eqref{eq:ldw-eif} the variance is a sum with no cross term,
\[
v^{\mathrm{OR}}_j=\frac{S_{Y,1}^2}{n_1}
+\underbrace{\frac{1}{n_1}\,\widehat{\operatorname{Var}}\bigl\{\widehat m(X_i):i\in\mathrm{trt}\bigr\}}_{\text{prediction}}
+\underbrace{\frac{1}{m}\,\widehat{\operatorname{Var}}\bigl\{Y_i-\widehat m(X_i)\bigr\}}_{\text{residual}} .
\]
The \emph{prediction} term is $\operatorname{Var}(T_{\mathrm{proj}})$ exactly: it
equals $n_1^{-1}\operatorname{Var}\{\mu_0(X)\mid A{=}1\}$, the sampling variance of
averaging the fitted surface $\widehat m$ over the finite, random treated
covariates. It is not slack from over-fitting but the irreducible cost of
estimating an average over the target covariate law, and it is precisely the
projection half $\mu_0(X)-\theta_0$ of \eqref{eq:ldw-eif}. The \emph{residual}
term is an unweighted approximation for $\operatorname{Var}(T_{\mathrm{aug}})$, exact when $\pi$ is constant: a direct calculation
gives that the unweighted comparator-residual variance equals
$\operatorname{Var}(T_{\mathrm{aug}})$ when $\pi(X)$ is constant---the RCT-only
case, where randomisation fixes the propensity---and departs from it only through
the omitted odds reweighting $\pi/(1-\pi)$ when $\pi(X)$ varies (the
external-control functionals). Actually

$$\operatorname{Var}(T_{\rm aug})
=
\frac1{P(A=1)^2}
E\!\left[
(1-A)
\Bigl(\frac{\pi(X)}{1-\pi(X)}\Bigr)^2
(Y-\mu_0(X))^2
\right]. $$

\noindent After simplifying,

$$=
\frac1{P(A=1)^2}
E\!\left[
\frac{\pi(X)^2}{1-\pi(X)}
\sigma_0^2(X)
\right].$$

\noindent While our residual variance estimates $E[\sigma_0^2(X)\mid A=0]$ instead.

Two approximations therefore enter the model-based $v_j$. First, for the
RCT-only OR-NB and IPW-NB functionals the trial-treated units enter both
$\widehat\theta_1$ and $\widehat\theta_0$ (directly in the OR average over treated
covariates, through $\widehat\pi$ for IPW), so treating the two as independent neglects a covariance term induced by the reuse of treated observations in both components; in practice this covariance is often positive, making the variance estimate mildly conservative. Second, for the
external-control OR/OM functionals the residual term omits the $\pi/(1-\pi)$
reweighting of $T_{\mathrm{aug}}$: exact under a constant score, first order
otherwise. The OM standard errors are stable across comparators
(Table~\ref{tab:ldw-estimates}).

%==============================================================================
\subsection{Results}
%==============================================================================

\begin{table}[htbp]
\centering
\caption{LDW with PC-HAL base learners: 15 functionals (point estimate,
influence-function SE) and EB results across three nested sub-analyses.
Experimental benchmark $\$1{,}794$.}
\label{tab:ldw-estimates}
\renewcommand{\arraystretch}{1.1}
\begin{tabular}{clrr}
\toprule
$j$ & Functional & $\widehat\psi_j$ & SE \\
\midrule
\multicolumn{4}{l}{\textit{RCT-only}}\\
1 & DIM-RCT & 1{,}794 & 671 \\
2 & IPW-NB  & 1{,}751 & 679 \\
3 & OR-NB   & 1{,}765 & 670 \\
\midrule
\multicolumn{4}{l}{\textit{EC covariate-adjusted}}\\
5 & OM-CPS-2   & 1{,}384 & 620 \\
6 & IPW-CPS-2  & 1{,}451 & 1{,}618 \\
8 & OM-CPS-3   & 1{,}839 & 637 \\
9 & IPW-CPS-3  & 427     & 959 \\
11& OM-PSID-2  & 1{,}474 & 672 \\
12& IPW-PSID-2 & 1{,}506 & 1{,}060 \\
14& OM-PSID-3  & 1{,}550 & 659 \\
15& IPW-PSID-3 & 1{,}220 & 1{,}209 \\
\midrule
\multicolumn{4}{l}{\textit{EC crude (marginal no-confounding required)}}\\
4 & DIM-CPS-2  & $-$3{,}822 & 606 \\
7 & DIM-CPS-3  & $-$635     & 677 \\
10& DIM-PSID-2 & $-$3{,}647 & 910 \\
13& DIM-PSID-3 & 1{,}070    & 897 \\
\bottomrule
\end{tabular}
\hspace{0.8cm}
\begin{tabular}{lccc}
\toprule
Sub-analysis & $J$ & $\widehat\tau$ & $\widehat\psi_{\eb}$ (SE) \\
\midrule
RCT-only           & 3  & \$0       & \$1{,}770 (\$389) \\
Covariate adjusted & 11 & \$0       & \$1{,}552 (\$228) \\
Full pool          & 15 & \$1{,}852 & \$589   (\$525) \\
\bottomrule
\end{tabular}
\end{table}

\begin{figure}[htbp]
\centering
\includegraphics[width=0.85\textwidth]{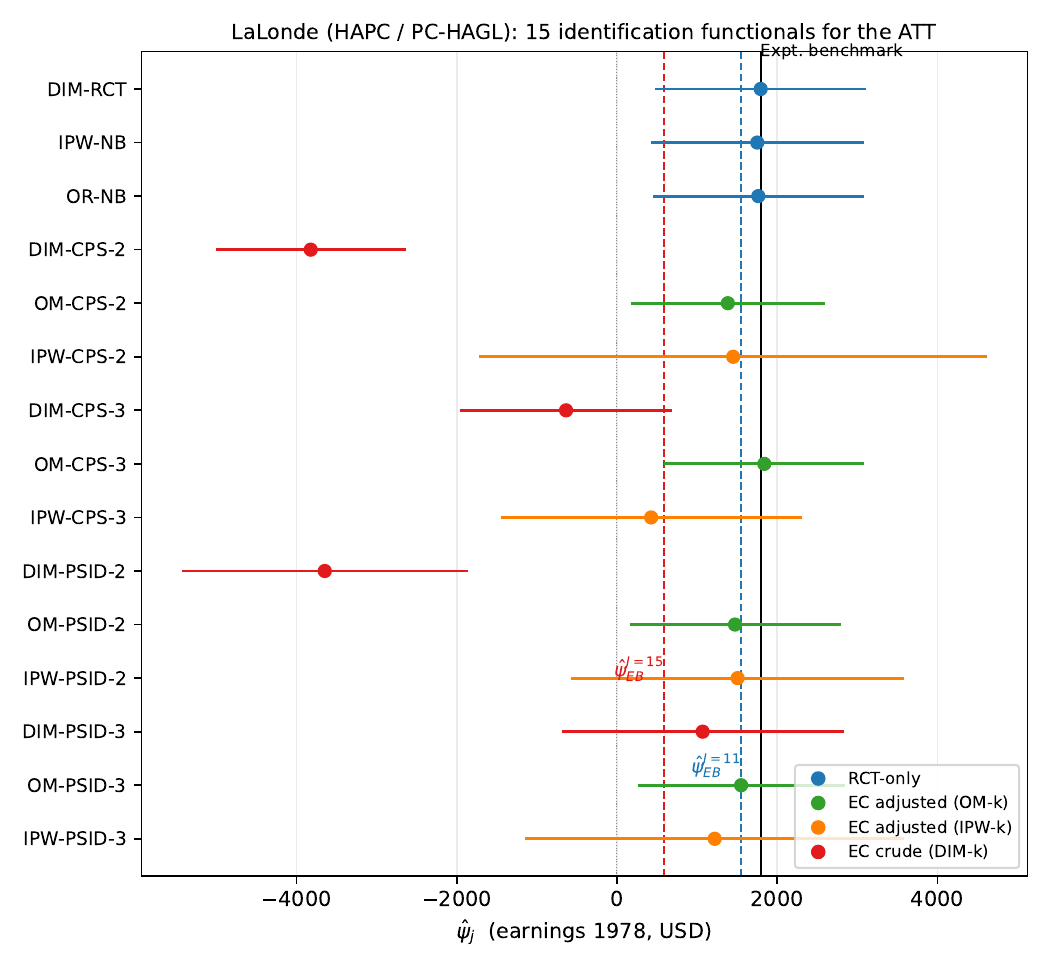}
\caption{LDW forest plot of the 15 functionals with PC-HAL base learners
(95\% influence-function CIs).
Vertical solid: experimental benchmark \$1{,}794. Dashed blue: $\widehat\psi_{\eb}$
under $J=11$ (\$1{,}552). Dashed red: $\widehat\psi_{\eb}$ under $J=15$ (\$589).}
\label{fig:ldw-forest}
\end{figure}

The three sub-analyses span the two regimes anticipated by the main text:

\paragraph{\textbf{RCT-only ($J=3$)}.} The pairwise differences across DIM, IPW and OR
are an order of magnitude smaller than the typical $\sqrt{v_j}\approx\$670$. The EB pooling
reduces the SE from \$671 to \$389, a 42\% gain from combining three
asymptotically equivalent but statistically complementary estimators of the
same target.

\paragraph{\textbf{Covariate-adjusted ($J=11$)}.} Adding the eight EC-augmented
estimates again yields $\widehat\tau=\$0$: the spread of the 11 covariate-adjusted
estimates---including the comparatively low IPW-CPS-3 (\$427)---is still within
sampling variability. The EB estimate $\$1{,}552$ is
within $14\%$ of the experimental benchmark, with SE \$228---a $66\%$
reduction over the trial-only DIM. This is the precision payoff from valid
external augmentation.

\paragraph{\textbf{Full pool ($J=15$)}.} Including the four crude DIM-$k$ functionals
drives $\widehat\tau$ to \$1{,}852, an order of magnitude above the typical
$\sqrt{v_j}$. Three of the four crude estimates are strongly negative
($-\$3{,}822$, $-\$3{,}647$, $-\$635$), pulling the pooled mean to \$589,
far from the benchmark. The biases are uncentered, so the consistency
guarantee of \ifmerged Theorem~\ref{thm:route1}\else Theorem~3 of the paper\fi does not apply. The magnitude of
$\widehat\tau$ relative to the individual $\sqrt{v_j}$ flags the inclusion of
invalid functionals.

%==============================================================================
\subsection{Conformal prediction in the $\widehat\tau^2>0$ regime}
%==============================================================================

In the full pool, $\widehat\tau^2>0$ and the appropriate inferential output for a
new EC-based functional is a prediction interval. Following
\ifmerged Section~\ref{sec:bayesian}\else Section~4 of the paper\fi, we report split conformal and leave-one-out
(LOO) variants. With $J_{\mathrm{cal}}=8$ the finest two-sided level is
$1-2/9\approx 77.8\%$, and with $J_{\mathrm{cal}}=15$ (LOO) it is $87.5\%$.

\begin{table}[htbp]
\centering
\caption{Prediction intervals for a new functional $\widehat\psi_{16}$ with
$v_{16}=\bar v_{15}$ ($J=15$, $\widehat\tau=\$1{,}656$, $\widehat\psi_{\eb}=\$587$;
pairwise variant). Target: $\psi_{16}\sim N(\psi^*,\tau^2)$.}
\label{tab:ldw-pi}
\renewcommand{\arraystretch}{1.1}
\begin{tabular}{llrrr}
\toprule
Method & Level & Lower & Upper & Width \\
\midrule
Normal (parametric)   & 90\%   & $-$\$2{,}497 & $+$\$3{,}671 & \$6{,}168 \\
Split conformal ($B=2{,}000$) & 77.8\% & $-$\$4{,}052 & $+$\$1{,}898 & \$5{,}950 \\
LOO conformal         & 87.5\% & $-$\$6{,}218 & $+$\$1{,}987 & \$8{,}205 \\
\bottomrule
\end{tabular}
\end{table}

All three intervals are strongly left-skewed: the conformal upper bounds
cluster near $\$1{,}900$--$\$1{,}990$ while the lower bound reaches down to
$-\$6{,}218$. This reflects the uncentered DIM-$k$ scores: the two extreme LOO
scores $\widehat S_{\text{DIM-CPS-2}}^{\mathrm{LOO}}=-3.63$ and
$\widehat S_{\text{DIM-PSID-2}}^{\mathrm{LOO}}=-2.94$ pull the lower quantile
far from zero. The parametric normal interval misses this asymmetry by
construction. The standard deviation of the lower split-conformal endpoint
across $B=2{,}000$ random 50/50 splits is \$2{,}005---larger than
$\widehat\tau$ itself---reflecting the well-known instability of split conformal
at small $J_{\mathrm{cal}}$; the LOO variant uses all 15 calibration scores at
the cost of slight self-contamination of the hyperparameter fit.

\begin{table}[htbp]
\centering
\caption{LOO conformal scores $\widehat S_j^{\mathrm{LOO}}=(\widehat\psi_j-\widehat\psi_{\eb}^{(-j)})/\sqrt{\widehat\tau^2_{(-j)}+v_j}$ for the 15 functionals (PC-HAL base learners). Empirical mean $-0.103$, SD $1.306$; only DIM-CPS-2 and DIM-PSID-2 fall outside $[-2,2]$.}
\label{tab:ldw-loo}
\renewcommand{\arraystretch}{1.05}
\begin{tabular}{cllclll}
\toprule
$j$ & Func.\ & $\widehat S_j^{\mathrm{LOO}}$ & \quad & $j$ & Func.\ & $\widehat S_j^{\mathrm{LOO}}$ \\
\midrule
1 & DIM-RCT          & $+0.71$ & & 9  & IPW-CPS-3 & $-0.09$ \\
2 & IPW-NB           & $+0.69$ & & 10 & \textbf{DIM-PSID-2} & $\mathbf{-2.94}$ \\
3 & OR-NB            & $+0.70$ & & 11 & OM-PSID-2 & $+0.52$ \\
4 & \textbf{DIM-CPS-2} & $\mathbf{-3.63}$ & & 12 & IPW-PSID-2& $+0.48$ \\
5 & OM-CPS-2         & $+0.47$ & & 13 & DIM-PSID-3& $+0.26$ \\
6 & IPW-CPS-2        & $+0.38$ & & 14 & OM-PSID-3 & $+0.57$ \\
7 & DIM-CPS-3        & $-0.72$ & & 15 & IPW-PSID-3& $+0.32$ \\
8 & OM-CPS-3         & $+0.75$ & &    &           &        \\
\bottomrule
\end{tabular}
\end{table}

The two outlying LOO scores are exactly the functionals whose raw EC means
($\$10{,}171$ for CPS-2, $\$9{,}996$ for PSID-2) most differ from the NSW
control mean (\$4{,}555). Holding either of them out moves $\widehat\psi_{\eb}$
of the remaining 14 functionals up from \$589 to \$863--\$955. Removing all
four crude DIM-$k$ scores recovers the covariate-adjusted \$1{,}552. The LOO
score is therefore a per-functional diagnostic of incompatibility with the
working model, complementary to the aggregate magnitude of $\widehat\tau$.

When attention is restricted to $J=11$ (covariate-adjusted), $\widehat\tau^2=0$
and conformal prediction is not applicable: there is no i.i.d.\ latent
signal for the procedure to exploit, and the appropriate output is the
sandwich confidence interval $\widehat\psi_{\eb}\pm 1.96\,\widehat{\mathrm{SE}}=
[\$1{,}106,\$1{,}999]$, which covers the experimental benchmark.

The two phase transitions in the same dataset are
diagnostic, and are robust to replacing the linear/logistic base learners with
flexible PC-HAL nuisance estimators. In the RCT-only and covariate-adjusted
sub-analyses, $\widehat\tau$ collapses to the boundary and the sandwich CI
delivers a substantial precision gain over the trial-only DIM. In the full pool
the crude DIM-$k$ functionals introduce uncentered bias, $\widehat\tau$ jumps by
an order of magnitude, and the conformal prediction interval honestly absorbs
the resulting between-functional spread.

%==============================================================================
\section{Empirical application: Card and Krueger (1994)}
\label{app:ck94}
%==============================================================================

We apply the framework to the \citep{card1994} minimum-wage
study. The target is the average treatment effect on the treated (ATT) of the
1992 New Jersey (NJ) minimum-wage increase on fast-food employment, estimated
by difference-in-differences (DiD) using multiple Pennsylvania (PA) subgroups
as alternative controls. We construct $J=25$ identifying functionals---nine
difference-in-means (DIM) and sixteen covariate-adjusted estimators based on
outcome-model regression (OM) and inverse probability weighting (IPW)---and
document strong
between-functional heterogeneity driven by the choice of control
subpopulation rather than by the estimation method. The contrast with
Appendix~\ref{app:lalonde} is sharp: there, heterogeneity arose from a
quality dimension (crude vs covariate-adjusted estimators); here it arises
from the control \emph{subpopulation}, persisting within every estimation
method.

%==============================================================================
\subsection{Data and functionals}
%==============================================================================

The dataset contains $n_1=309$ NJ restaurants (treated) and 75 PA restaurants
(untreated) surveyed before and after the NJ increase from \$4.25 to \$5.05
in April 1992. The outcome is the change in full-time-equivalent (FTE)
employment between waves,
$\Delta Y=\text{empft2}+0.5\,\text{emppt2}+\text{nmgrs2}-\text{empft}-0.5\,\text{emppt}-\text{nmgrs}$.
The NJ treated mean is $\bar\Delta Y_{\mathrm{NJ}}=0.47$ FTE; the Card--Krueger
benchmark estimate using all PA as control is DIM-PA-all $=2.75$ FTE.

The 75 PA restaurants admit nine natural subgroupings: \emph{region}
(PA-all $n=75$, PA-north $n=34$, PA-east $n=41$), \emph{chain} (PA-BK $n=33$,
PA-KFC $n=12$, PA-RR $n=17$, PA-Wendy's $n=13$), and \emph{ownership}
(PA-corp $n=26$, PA-fran $n=49$). For each $\mathcal{C}_k$ we impose
parallel trends (unconditional for DIM, conditional on $X_i\in\{\text{chain},\text{ownership}\}$ for OM and IPW (region is being used to define the control subgroup, not as a covariate available on both sides of the comparison)):
\[
\psi_k^{\mathrm{DIM}}=\E[\Delta Y\mid A=1]-\E[\Delta Y\mid i\in\mathcal{C}_k],
\]
\[
\psi_k^{\mathrm{OM}}=\E[\Delta Y\mid A=1]-\E[\mu_0^k(X)\mid A=1],\qquad
\mu_0^k(X)=\E[\Delta Y\mid X,i\in\mathcal{C}_k],
\]
\[
\psi_k^{\mathrm{IPW}}=\E[\Delta Y\mid A=1]-\frac{\E[e^k(X)\Delta Y/(1-e^k(X))\,\mathbf{1}(i\in\mathcal{C}_k)]}{\E[e^k(X)/(1-e^k(X))\,\mathbf{1}(i\in\mathcal{C}_k)]},
\]
with $e^k(x)=P(A=1\mid X=x,i\in\mathcal{C}_k\cup\{A=1\})$. We have nine DIM
functionals (one per subgroup); OM and IPW contribute eight each (PA-BK is
excluded because all its restaurants share chain and ownership, leaving no
covariate variation), giving $J=25$.

Sampling variances $v_j$ use the heteroskedasticity-consistent estimators
HC2 for DIM and an HC1-based influence-function formula for OM, and the
Haj\'ek variance for IPW. The hyperparameters
$(\widehat\psi_{\eb},\widehat\tau^2)$ follow the pairwise moment construction of
\ifmerged Section~\ref{sec:identifiability}\else Section~2 of the main paper\fi.

%==============================================================================
\subsection{Results}
%==============================================================================

\begin{table}[htbp]
\centering
\caption{The $J=25$ functionals: estimate, HC standard error, EB posterior
mean. Footer: heterogeneity summary. Asterisked rows are detailed in
Section~\ref{subsec:ck94-conformal}.}
\label{tab:ck94-functionals}
\renewcommand{\arraystretch}{1.05}
\begin{tabular}{llccc}
\toprule
Label & Control group & $\widehat\psi_j$ & SE & $\widehat\psi_j^{\eb}$ \\
\midrule
DIM-PA-all    & All PA         & $2.750$ & $1.342$ & $2.700$ \\
OM-PA-all     & All PA         & $2.465$ & $0.486$ & $2.468$ \\
IPW-PA-all    & All PA         & $2.515$ & $0.501$ & $2.516$ \\
\addlinespace
DIM-PA-north  & PA northern    & $4.334$ & $1.555$ & $3.799$ \\
OM-PA-north$^\ast$  & PA northern & $6.852$ & $0.484$ & $6.682$ \\
IPW-PA-north$^\ast$ & PA northern & $7.100$ & $0.695$ & $6.743$ \\
\addlinespace
DIM-PA-east   & PA eastern     & $1.436$ & $1.989$ & $1.890$ \\
OM-PA-east    & PA eastern     & $0.867$ & $0.497$ & $0.937$ \\
IPW-PA-east   & PA eastern     & $0.921$ & $0.580$ & $1.012$ \\
\addlinespace
DIM-PA-BK     & Burger King    & $3.512$ & $2.461$ & $3.011$ \\
\addlinespace
DIM-PA-KFC$^\ast$  & KFC       & $-1.825$& $0.883$ & $-1.298$ \\
OM-PA-KFC$^\ast$   & KFC       & $-1.775$& $0.485$ & $-1.604$ \\
IPW-PA-KFC$^\ast$  & KFC       & $-1.775$& $0.560$ & $-1.549$ \\
\addlinespace
DIM-PA-RR     & Roy Rogers     & $4.393$ & $2.078$ & $3.593$ \\
OM-PA-RR      & Roy Rogers     & $4.481$ & $0.527$ & $4.390$ \\
IPW-PA-RR     & Roy Rogers     & $4.481$ & $0.987$ & $4.197$ \\
\addlinespace
DIM-PA-Wendy's & Wendy's        & $2.890$ & $2.660$ & $2.697$ \\
OM-PA-Wendy's  & Wendy's        & $2.501$ & $0.504$ & $2.503$ \\
IPW-PA-Wendy's & Wendy's        & $2.501$ & $0.835$ & $2.505$ \\
\addlinespace
DIM-PA-corp   & Corporate      & $2.515$ & $1.453$ & $2.522$ \\
OM-PA-corp    & Corporate      & $3.211$ & $0.499$ & $3.183$ \\
IPW-PA-corp   & Corporate      & $3.211$ & $0.652$ & $3.164$ \\
\addlinespace
DIM-PA-fran   & Franchise      & $2.875$ & $1.849$ & $2.750$ \\
OM-PA-fran    & Franchise      & $2.360$ & $0.492$ & $2.367$ \\
IPW-PA-fran   & Franchise      & $2.360$ & $0.546$ & $2.369$ \\
\midrule
\multicolumn{5}{l}{$\widehat\tau=2.38$ FTE;\quad
$\widehat\psi_{\eb}=2.54$ FTE;\quad
$\widehat\tau/\overline{\sqrt{v_j}}\approx 2.7$ across the pool} \\
\bottomrule
\end{tabular}
\end{table}

\begin{figure}[htbp]
\centering
\includegraphics[width=0.85\textwidth]{plots/ck94_forest.pdf}
\caption{Card--Krueger forest plot of 25 DiD functionals (95\% CIs).
Vertical black: DIM-PA-all (benchmark $2.75$ FTE). Dashed blue:
$\widehat\psi_{\eb}=2.54$ FTE. Colours indicate estimation method.}
\label{fig:ck94-forest}
\end{figure}

The estimates span $-1.83$ FTE (PA-KFC) to $+7.10$ FTE (IPW-PA-north), a
range of nearly nine FTE units. Sub-analyses by estimation method give
$\widehat\tau=2.14$ FTE for DIM-only ($J=9$), $2.51$ FTE for OM-only ($J=8$) and
$2.46$ FTE for IPW-only ($J=8$): each sits well above the typical
$\sqrt{v_j}$, and within every method the between-functional spread exceeds
the individual sampling error. The driver of heterogeneity is therefore not
the choice among DIM, OM and IPW, but the choice of control
\emph{subpopulation}: different PA subsets violate parallel trends in
different directions, exactly as the literature has documented \citep{neumark2000minimum,dube2010minimum}.
 
Because $\widehat\tau=2.38$ FTE is comparable to or larger than the typical
covariate-adjusted SE ($0.49$--$0.99$ FTE), the EB shrinkage factor
$B_j=\widehat\tau^2/(v_j+\widehat\tau^2)$ is close to one for all OM and IPW
functionals: posterior means are within $10\%$ of the raw estimates. DIM
functionals carry larger SEs and shrink more visibly. The EB pooled value
$\widehat\psi_{\eb}=2.54$ FTE is driven by the cluster of estimates near $2.5$
FTE (PA-all, PA-fran, PA-east, PA-Wendy's); the outlying PA-KFC and PA-north
groups receive low weight, not because of dispersion but because of their comparable precision.

%==============================================================================
\subsection{Conformal prediction intervals}
\label{subsec:ck94-conformal}
%==============================================================================

The $\widehat\tau^2>0$ regime calls for a prediction interval for a new
PA-based functional rather than a confidence interval for $\psi^*$. We
implement the split conformal construction of \ifmerged Section~\ref{sec:bayesian}\else Section~4 of the main paper\fi.
With $J_{\mathrm{cal}}=J=25$ in the leave-one-out (LOO) variant, the finest two-sided level
is $1-2/26\approx 92.3\%$.

\begin{table}[htbp]
\centering
\caption{Selected LOO conformal scores $S_j^{\mathrm{LOO}}=(\widehat\psi_j-\widehat\psi_{\eb}^{(-j)})/\sqrt{\widehat\tau^2_{(-j)}+v_j}$. The PA-KFC and PA-north scores form the tails.}
\label{tab:ck94-conformal}
\renewcommand{\arraystretch}{1.05}
\begin{tabular}{lcrr}
\toprule
Label & Method & $\widehat\psi_j$ & $S_j^{\mathrm{LOO}}$ \\
\midrule
DIM-PA-KFC   & DIM & $-1.83$ & $-2.415$ \\
OM-PA-KFC    & OM  & $-1.78$ & $-2.619$ \\
IPW-PA-KFC   & IPW & $-1.78$ & $-2.580$ \\
OM-PA-east   & OM  & $\phantom{-}0.87$ & $-0.902$ \\
DIM-PA-all   & DIM & $\phantom{-}2.75$ & $\phantom{-}0.101$ \\
OM-PA-Wendy's & OM  & $\phantom{-}2.50$ & $-0.011$ \\
OM-PA-RR     & OM  & $\phantom{-}4.48$ & $\phantom{-}1.069$ \\
OM-PA-north  & OM  & $\phantom{-}6.85$ & $\phantom{-}2.633$ \\
IPW-PA-north & IPW & $\phantom{-}7.10$ & $\phantom{-}2.702$ \\
\midrule
\multicolumn{4}{l}{LOO PI ($1-\alpha\approx 92.3\%$, $v_{J+1}=$ median): $[-2.60,\,7.85]$ FTE}\\
\bottomrule
\end{tabular}
\end{table}

\begin{figure}[htbp]
\centering
\includegraphics[width=0.65\textwidth]{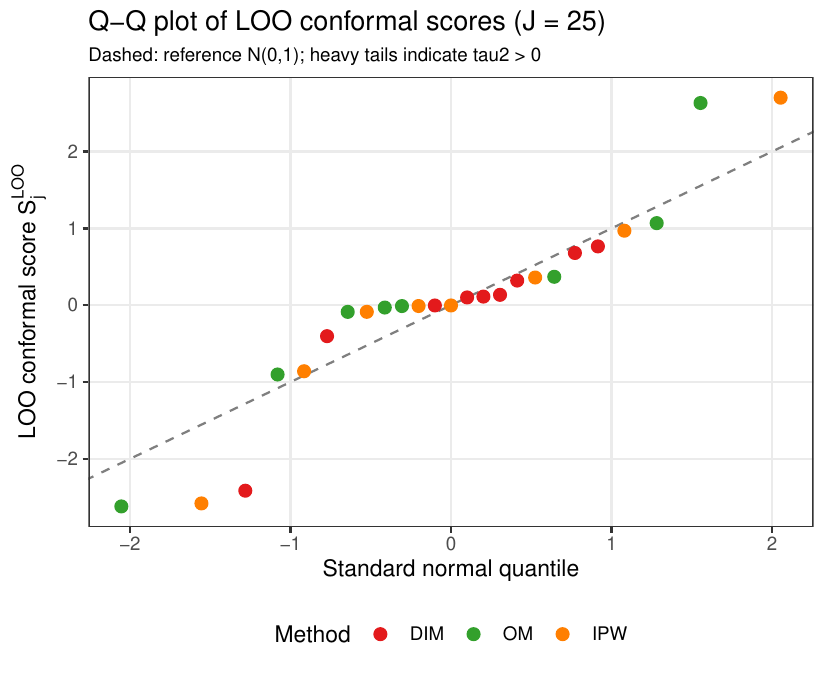}
\caption{Q-Q plot of LOO conformal scores against $N(0,1)$. The heavy
left tail (PA-KFC) and right tail (PA-north) confirm $\tau^2>0$ and
validate the non-parametric conformal approach.}
\label{fig:ck94-qq}
\end{figure}

Three observations follow.

\begin{enumerate}
	
\item 	The $92.3\%$ LOO PI
$[-2.60,7.85]$ honestly absorbs the disagreement across control groups.
Under the working model $\psi_j=\psi^*+\varepsilon_j$, a new PA-based
functional from the same heterogeneous distribution is expected to lie in
this range with at least the nominal probability, regardless of whether
$\widehat\psi_{\eb}$ is consistent for $\psi^*$.

\item The three KFC functionals
($\widehat\psi\approx -1.8$) carry LOO scores near $-2.6$, signalling that
PA-KFC restaurants had qualitatively different employment dynamics from
NJ restaurants. The structural reason is likely corporate restructuring at
KFC in 1992 unrelated to the wage policy. The conformal score flags the
subgroup without forcing the analyst to drop it.

\item The covariate-adjusted
PA-north estimates ($\approx 6.9$--$7.1$ FTE) are roughly double the DIM
value, suggesting that after conditioning on chain and ownership the NJ and
PA-north restaurants appear even more different---consistent with shared
labour markets along the NJ--PA border making border-county restaurants
poor controls.

\end{enumerate}

In contrast to a sandwich CI tied to a single chosen control group, the
conformal interval does not require ex-ante selection: it quantifies the
range a new Pennsylvania-based functional would plausibly produce. This is
the relevant uncertainty for policy conclusions when parallel trends is
contested, as the $30$-year literature on the Card--Krueger study
attests.

%==============================================================================
% Simulation studies (moved from the paper)
%==============================================================================
% Simulation studies (moved from arxiv.tex)
\section{Simulation studies}
\label{sec:simulations}

\subsection{A single-run illustration}
\label{sec:illustrative}

\begin{figure}[htbp]
	\centering
	\includegraphics[width=\textwidth]{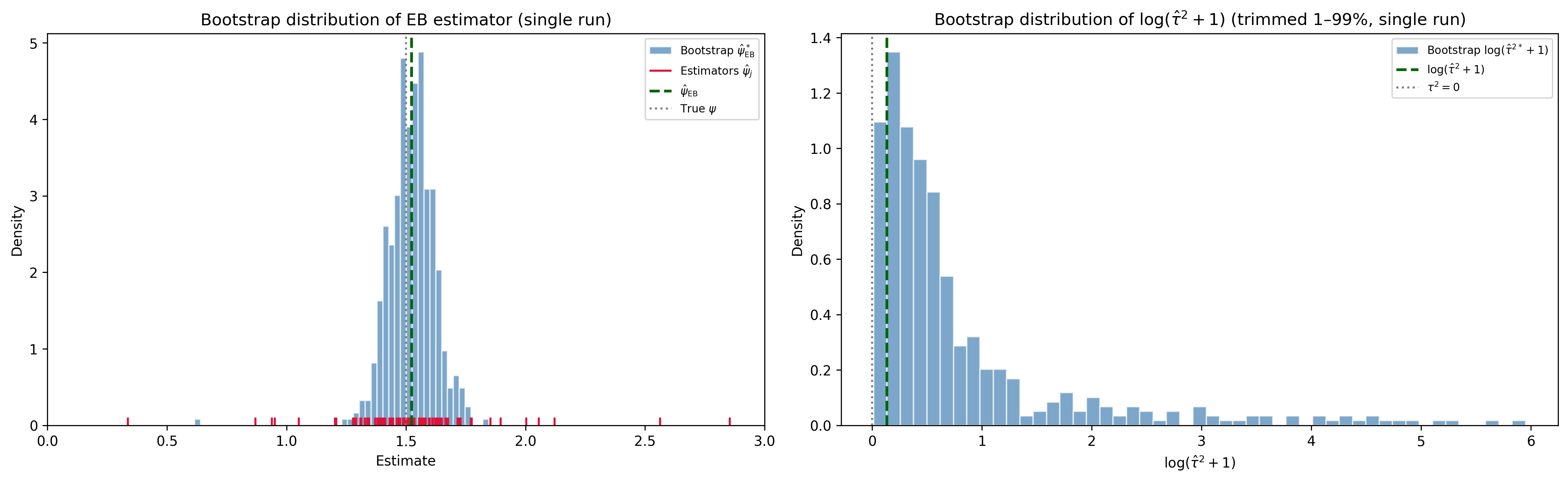}
	\caption{{Single-run output. Left panel: bootstrap distribution of $\widehat{\psi}_{\mathrm{EB}}^b$ with a rug of the $J$ individual estimators and vertical lines at $\widehat{\psi}_{\mathrm{EB}}$ and the true $\psi$. Right panel: bootstrap distribution of $\log(\widehat{\tau}^{2b}+1)$ (trimmed at the $1$st and $99$th percentiles) with $\log(\widehat{\tau}^2+1)$ and the line $\tau^2=0$. Setup: $q=7$, $J=121$, $n_{\mathrm{rct}}=50$, $n_{\mathrm{obs}}=1000$, $B=500$, and nominal coverage $90\%$.}}
	\label{fig:single_run}
\end{figure}

%==============================================================================

{Figure~\ref{fig:single_run} reports the output of a single run of the procedure on simulated data from the design of \ifmerged Example~\ref{ex:iv_subset}\else Example~5\fi (Appendix~\ref{sec:catalogue}), in which instrumental variables index several data environments. The intended nominal coverage is $1-\alpha=90\%$. With $q=7$ environments, the number of identifying functionals is $J=121$, comprising one RCT functional plus $2^{7}-7 -1 =120$ IV subset functionals. Environment-specific propensity scores were drawn i.i.d.\ from a $U(0.2,0.8)$ prior. The RCT sample size is $n_{\mathrm{rct}}=50$, each observational environment has $n_{\mathrm{obs}}=1000$ units, and the bootstrap uses $B=500$ resamples.}

{In the left panel of Figure 4, the histogram is the empirical density of the bootstrap replicates $\widehat{\psi}_{\mathrm{EB}}^b$; the rug marks the $J$ individual estimates $\widehat{\psi}_j$; the dashed vertical line is the empirical Bayes point estimate $\widehat{\psi}_{\mathrm{EB}}$; and the dotted vertical line is the true $\psi^*$. The bootstrap distribution of the combined estimator is concentrated relative to the dispersion of the individual estimators. In the right panel, the histogram displays the empirical density of $\log(\widehat{\tau}^{2b}+1)$. Mass near zero indicates that many bootstrap replicates return a negligible heterogeneity estimate.}

{Under standard asymptotics, the length of a $(1-\alpha)$ confidence interval for a location parameter scales as $n^{-1/2}$, so the squared length scales as $1/n$. The squared ratio between the length of the shrunk confidence interval and that arising from using just the only randomized environment was $0.067$. This translates, on this single draw, into an effective sample size gain of $n_{\mathrm{eff}}/n_{\mathrm{RCT}} \approx 1/0.067 \approx 15 \mathrm{times}$.}

\subsection{Exact identifiability and robustness to cross-functional correlation}
\label{subsec:sim1}

{This study isolates the two claims that are most exposed under exact identifiability: (i) the point estimator $\widehat\psi_{\mathrm{EB}}$ is consistent for $\psi^*$ even though the working model \ifmerged in~\eqref{eq:working-model}\else in~(1) of the paper\fi discards the cross-functional covariance (\ifmerged Theorem~\ref{thm:consistency}\else Theorem~2\fi); and (ii) the sandwich and subsampling intervals of \ifmerged Section~\ref{sec:frequentist}\else Section~5\fi attain their nominal level. }

{To target these claims without conflating them with the estimation of nuisance functions, we simulate the asymptotically linear representation of \ifmerged Section~\ref{sec:identifiability}\else Section~2 of the main paper\fi directly. For a sample of size $n$ and $J$ functionals we draw, independently across $i=1,\dots,n$, an influence vector $D_i=(D_{i1},\dots,D_{iJ})^\top$ from the one-factor Gaussian law
\[
D_{ij}=\sigma_j\bigl(\sqrt{\rho}\,F_i+\sqrt{1-\rho}\,E_{ij}\bigr),
\qquad F_i,\,E_{ij}\stackrel{\mathrm{iid}}{\sim}N(0,1),
\]
and set $\widehat\psi_j=\psi^*+n^{-1}\sum_{i=1}^n D_{ij}$. This yields $v_j=\sigma_j^2/n$ and $\operatorname{Cov}(\widehat\psi_j,\widehat\psi_k)=\rho\,\sigma_j\sigma_k/n$ for $j\neq k$, so every functional is valid, $\psi_j(P)=\psi^*$, and $\rho\in[0,1)$ is exactly the cross-functional correlation that the working model assumes to be zero. The marginal scales $\sigma_j$ are fixed and unequal (evenly spaced on $[0.6,2.0]$); $v_j$ is estimated from the realized influence contributions rather than treated as known.}

{At each replication we compute $\widehat\psi_{\mathrm{EB}}$ and the pairwise-difference $\widehat\tau^2$ of \ifmerged Section~\ref{sec:identifiability}\else Section~2 of the main paper\fi, and three nominal $90\%$ intervals for $\psi^*$: the canonical inverse-Fisher interval $\widehat\psi_{\mathrm{EB}}\pm z_{1-\alpha/2}\,(\sum_j \widehat w_j)^{-1/2}$ with $\widehat w_j=(v_j+\widehat\tau^2)^{-1}$; the observation-level sandwich interval of \ifmerged Section~\ref{subsec:sandwich-impl}\else Section~5.2 of the main paper\fi; and the subsampling interval of \ifmerged Appendix~\ref{app:subsampling}\else Section~5.3 of the main paper\fi with block size $m=\lfloor n^{0.7}\rfloor$.}

{We fix $J=20$, $\psi^*=1.5$ and nominal level $1-\alpha=0.9$, using $R=600$ Monte Carlo replications and $B=250$ subsamples. Consistency is assessed on $n\in\{50,100,200,400,800,1600\}$ for $\rho\in\{0,0.4,0.8\}$; coverage is assessed at $n=400$ for $\rho\in\{0,0.2,0.4,0.6,0.8,0.9\}$. The Monte Carlo standard error of a coverage probability evaluated at the nominal level is $\{0.9\cdot 0.1/600\}^{1/2}\approx 0.012$.}

\begin{figure}[htbp]
\centering
\includegraphics[width=\textwidth]{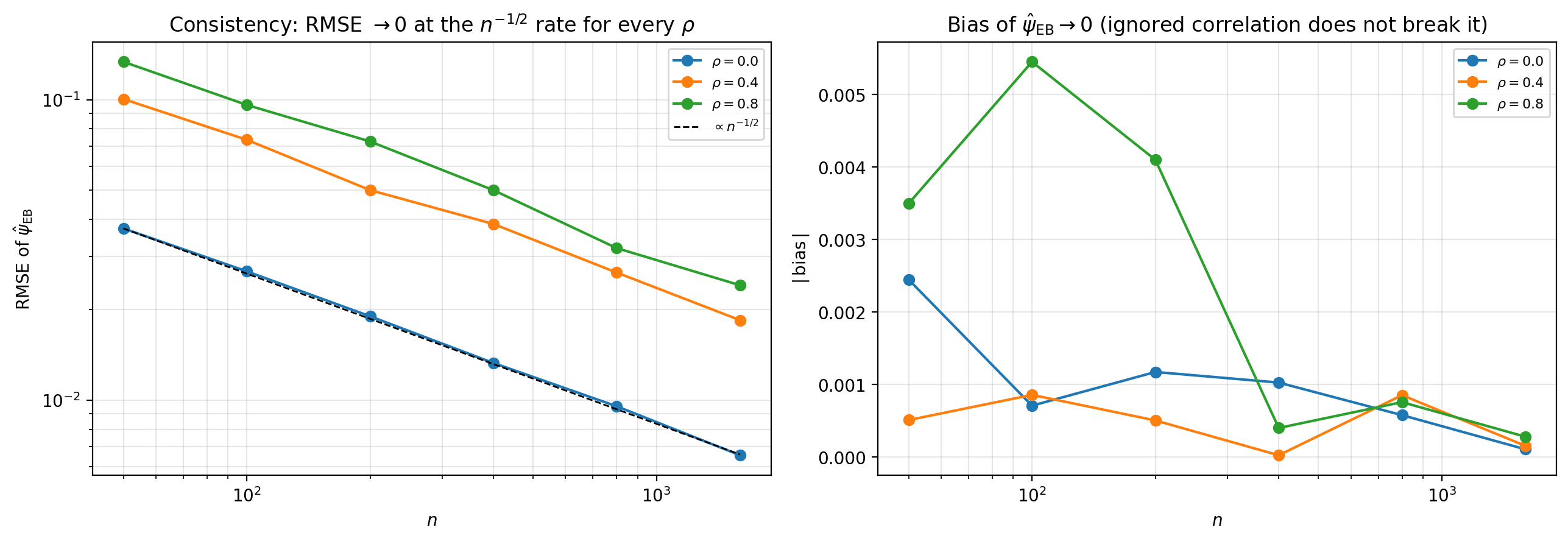}
\caption{{Root mean squared error (left, log--log) and absolute bias (right) of $\widehat\psi_{\mathrm{EB}}$ against $n$, for $\rho\in\{0,0.4,0.8\}$, under exact identifiability with $J=20$ fixed. The dashed reference line has slope $-1/2$ and is drawn to coincide with the $\rho=0$ curve at the largest $n$ (a visual anchor for the $n^{-1/2}$ rate).}}
\label{fig:sim1-consistency}
\end{figure}

{Figure~\ref{fig:sim1-consistency} shows that the RMSE decreases at the parametric rate for every $\rho$: the slope of $\log\mathrm{RMSE}$ on $\log n$ is $-0.50$, $-0.49$ and $-0.50$ at $\rho=0,0.4,0.8$ respectively, and the absolute bias at $n=1600$ never exceeds $3\times10^{-4}$ against an RMSE of order $10^{-2}$. Stronger correlation inflates the constant---RMSE at $n=50$ rises from $0.037$ at $\rho=0$ to $0.134$ at $\rho=0.8$---but not the rate. Discarding the covariance therefore costs efficiency, not consistency, exactly as \ifmerged Theorem~\ref{thm:consistency}\else Theorem~2 of the main paper\fi predicts.}

\begin{table}[htbp]
\centering
\caption{{Coverage at $n=400$ ($R=600$, Monte Carlo s.e.\ $\approx 0.012$). $\mathrm{sd}(\widehat\psi_{\mathrm{EB}})$ is the Monte Carlo standard deviation of the point estimator. The canonical inverse-Fisher interval undercovers severely once $\rho>0$; the sandwich holds the nominal level with length tracking $\mathrm{sd}(\widehat\psi_{\mathrm{EB}})$.}}
\label{tab:sim1}
\begin{tabular}{ccccccccc}
\toprule
& \multicolumn{3}{c}{coverage} & \multicolumn{3}{c}{mean length} & \\
\cmidrule(lr){2-4}\cmidrule(lr){5-7}
$\rho$ & inv.-Fisher & sandwich & subsampling & inv.-Fisher & sandwich & subsampling & $\mathrm{sd}(\widehat\psi_{\mathrm{EB}})$\\
\midrule
0.00 & 0.988 & 0.895 & 0.860 & 0.066 & 0.043 & 0.040 & 0.013\\
0.20 & 0.710 & 0.887 & 0.858 & 0.059 & 0.091 & 0.084 & 0.029\\
0.40 & 0.552 & 0.902 & 0.883 & 0.054 & 0.119 & 0.111 & 0.035\\
0.60 & 0.422 & 0.913 & 0.892 & 0.048 & 0.138 & 0.129 & 0.042\\
0.80 & 0.337 & 0.907 & 0.885 & 0.042 & 0.152 & 0.143 & 0.047\\
0.90 & 0.332 & 0.902 & 0.888 & 0.041 & 0.158 & 0.147 & 0.049\\
\bottomrule
\end{tabular}
\end{table}

\begin{figure}[htbp]
\centering
\includegraphics[width=\textwidth]{sim1_coverage.png}
\caption{{Coverage at $n=400$. Left: empirical coverage of the three $90\%$ intervals against the discarded cross-functional correlation $\rho$; the shaded band is $\pm 2$ Monte Carlo standard errors around the nominal level. Right: the mean canonical inverse-Fisher standard error collapses below the true sampling standard deviation of $\widehat\psi_{\mathrm{EB}}$ as $\rho$ grows, while the sandwich standard error tracks it.}}
\label{fig:sim1-coverage}
\end{figure}

{Table~\ref{tab:sim1} and Figure~\ref{fig:sim1-coverage} report the central finding. The canonical inverse-Fisher interval is severely anti-conservative once $\rho>0$: its coverage falls monotonically from a spuriously conservative $0.988$ at $\rho=0$ to $0.33$ at $\rho=0.9$, and its mean length \emph{decreases} (from $0.066$ to $0.041$) even as the true sampling standard deviation of $\widehat\psi_{\mathrm{EB}}$ more than triples (from $0.013$ to $0.049$)---the interval becomes more confident precisely when it should become less so. The sandwich interval holds its nominal level across the entire range, with coverage in $[0.887,0.913]$ (all within $1.1$ Monte Carlo standard errors of $0.90$) and mean length increasing in step with the true standard deviation. The subsampling interval can be mildly conservative, covering between $0.86$ and $0.89$; the small undercoverage at $n=400$ reflects the modest subsample size $m=\lfloor 400^{0.7}\rfloor=69$ and attenuates as $n$ grows, consistent with the boundary-robust-but-rougher role assigned to subsampling in \ifmerged Appendix~\ref{app:subsampling}\else Section~5.3 of the paper\fi.}

{The $\rho=0$ column is a correctly specified control: with no cross-functional covariance the working model is exact, so the inverse-Fisher interval is not expected to fail there. Its mild over-coverage ($0.988$) is the finite-sample boundary behaviour of the pairwise $\widehat\tau^2$, which is positive on roughly half of the replications and thereby inflates the canonical inverse-Fisher variance; it is not evidence against the method and, crucially, does not contaminate the sandwich, whose validity is invariant to the limit of $\widehat\tau^2$ by \ifmerged Proposition~\ref{prop:sandwich}(i)\else Proposition~5(i) of the paper\fi. The simulation confirms that invariance empirically: the sandwich attains $0.895$ at $\rho=0$ and remains at the nominal level for all $\rho$.}

The centered-identification-bias regime of \ifmerged Theorem~\ref{thm:route1}\else Theorem~3 of the paper\fi and the split-conformal prediction intervals of \ifmerged Section~\ref{sec:bayesian}\else Section~4 of the paper\fi are examined in Section~\ref{subsec:sim2}.

%==============================================================================

\subsection{Prediction intervals for the target of a new functional}
\label{subsec:sim2}

{This study checks the exact guarantee of \ifmerged Theorem~\ref{thm:conformal}\else Theorem~4 of the paper\fi: marginal coverage of the latent target $\psi_{J+1}=\psi^*+\varepsilon_{J+1}$ of a \emph{new} identifying functional, $\liminf_{n,J_{\mathrm{cal}}\to\infty}P(\psi_{J+1}\in C_{1-\alpha})\ge 1-\alpha$, where the probability is over the latent draws, the data, and the random training/calibration split, and $\tau^2>0$ by Assumption~(A1). It also verifies the necessity of the strict inequality in~(A1) (the phase transition) and the distribution-freeness in $G$.}

{As in Section~\ref{subsec:sim1} we simulate the asymptotically linear representation directly, now over $J+1$ functionals evaluated on the same sample. For $i=1,\dots,n$ we draw $D_{ij}=\sigma_j(\sqrt{\rho}\,F_i+\sqrt{1-\rho}\,E_{ij})$, $F_i,E_{ij}\stackrel{\mathrm{iid}}{\sim}N(0,1)$, $j=1,\dots,J+1$; latent deviations $\varepsilon_j\stackrel{\mathrm{iid}}{\sim}G$ with mean zero and variance $\tau^2$; and $\widehat\psi_j=\psi^*+\varepsilon_j+n^{-1}\sum_i D_{ij}$, so $\psi_j=\psi^*+\varepsilon_j$ and the new functional $J+1$ shares the sample (its sampling noise is correlated with the calibration noise through $F_i$). We take $G$ as Gaussian and, to probe distribution-freeness, a centered exponential with matched variance.}

{We evaluate three prediction sets for $\psi_{J+1}$ at nominal level $1-\alpha=0.9$: the one-sided region of \ifmerged Section~\ref{sec:bayesian}\else Section~4 of the paper\fi (the literal object of \ifmerged Theorem~\ref{thm:conformal}\else Theorem~4 of the paper\fi); its two-sided counterpart (the reported prediction interval); and the parametric Gaussian--Bayes interval $\widehat\psi_{\mathrm{EB}}^{\mathrm{train}}\pm z_{1-\alpha/2}\sqrt{\widehat\tau^2_{\mathrm{train}}+v_{J+1}}$, a foil that is valid only when $G$ is Gaussian.}

{We fix $J=400$, $n=800$, $\psi^*=1.5$, nominal level $0.9$, and $R=1500$ Monte Carlo replications, sweeping $\tau\in\{0,0.05,0.1,0.25,0.5,1,2\}$ and $\rho\in\{0,0.5,0.9\}$. The Monte Carlo standard error of a coverage probability at the nominal level is $\approx 0.008$.}

\begin{figure}[htbp]
\centering
\includegraphics[width=\textwidth]{sim2_phase.png}
\caption{{Visualization of the phase transition. Left: coverage of $\psi_{J+1}$ against the latent heterogeneity $\tau$ at $\rho=0.5$ for the one- and two-sided conformal regions; the guarantee holds for $\tau>0$ and fails at $\tau=0$ (red shading). Right: the conformal interval is robust to the discarded shared-data correlation $\rho$ once $\tau>0$ (Assumption~(A2)). The grey band in both panels is $\pm 2$ Monte Carlo standard errors around the nominal level $1-\alpha=0.9$.}}
\label{fig:sim2-phase}
\end{figure}

{Table~\ref{tab:sim2} and Figure~\ref{fig:sim2-phase} report coverage at $\rho=0.5$ under Gaussian $G$. For every $\tau>0$, the two-sided interval covers between $0.887$ and $0.942$ and the one-sided region between $0.895$ and $0.935$. Both satisfy the $\liminf\ge 1-\alpha$ statement, mildly conservative at small $\tau$ and settling at the nominal level as $\tau$ grows. The guarantee is robust to the shared-data correlation the working model discards: at $\tau=0.25$ the two-sided coverage is $0.909$, $0.897$ and $0.880$ at $\rho=0,0.5,0.9$, in line with the uniform-negligibility condition~(A2). At $\tau=0$ the i.i.d.\ signal vanishes, the calibration scores reduce to correlated sampling noise, and the guarantee fails---the conformal interval over-covers ($0.985$) rather than attaining the nominal level. This confirms that the strict inequality $\tau^2>0$ in~(A1) is the content of the phase transition, not a technical artifact.}

\begin{table}[htbp]
\centering
\caption{{Coverage of $\psi_{J+1}$ at $\rho=0.5$, Gaussian $G$ ($n=800$, $J=400$, $R=1500$, Monte Carlo s.e.\ $\approx 0.008$). The guarantee holds for every $\tau>0$ and fails at $\tau=0$ (the phase transition); the parametric Gaussian--Bayes interval over-covers throughout.}}
\label{tab:sim2}
\begin{tabular}{ccccc}
\toprule
$\tau$ & one-sided & two-sided & parametric & length (of the two-sided)\\
\midrule
0.00 & 0.983 & 0.985 & 1.000 & 0.155\\
0.05 & 0.935 & 0.942 & 0.995 & 0.238\\
0.10 & 0.915 & 0.919 & 0.983 & 0.370\\
0.25 & 0.913 & 0.897 & 0.977 & 0.839\\
0.50 & 0.897 & 0.895 & 0.978 & 1.655\\
1.00 & 0.905 & 0.888 & 0.967 & 3.285\\
2.00 & 0.895 & 0.887 & 0.973 & 6.577\\
\bottomrule
\end{tabular}
\end{table}

%==============================================================================

%==============================================================================

%==============================================================================
% Extended discussion and related work (moved from the paper)
%==============================================================================
% Extended discussion and related work (moved from the main paper)
%==============================================================================
\section{Extended discussion and related work}
\label{app:extended-discussion}
%==============================================================================

\subsection{The low-bias regime and minimax pooling}

{The question of when to pool RCT and observational data has been analyzed in a minimax framework by \citet{minimaxRCTobs}. In their setting, a single RCT estimator $\widehat\psi_{\mathrm{rct}}$ is combined with a single observational estimator $\widehat\psi_{\mathrm{obs}}$ of a possibly biased mean, and the candidate pooled class consists of convex combinations indexed by $\lambda\in[0,1]$. Let $\Delta$ denote the observational bias, with an upper bound $\bar\Delta$ assumed known. The oracle procedure is the infeasible rule that, given $\Delta$, minimizes the mean squared error. It is dichotomous: when the bias is large relative to RCT noise, $\bar\Delta\gtrsim \sigma_{\mathrm{rct}}/\sqrt{n_{\mathrm{rct}}}$, the oracle sets $\lambda=0$ and uses the RCT alone. When the bias is small, it pools both sources as if $\Delta=0$. \citet{minimaxRCTobs} derive the minimax rate for mean squared error and confidence interval length and show that adaptivity to an unknown bias has a fundamental limit: without prior information placing the bias in the low regime, no uniformly valid confidence interval can improve on the RCT-only interval by more than a constant factor.}

{Our framework specializes to the low bias regime precisely when $\tau^2=0$: all identifying functionals---including the RCT and the IV based functionals indexed by subsets of environments---are asymptotically unbiased for the same $\psi^*$. There the EB combination $\widehat\psi_{\mathrm{EB}}$ collapses to a precision weighted average of $J$ valid estimators and inherits the minimax recommendation to pool. Beyond this regime, the centered bias setting of \ifmerged Theorem~\ref{thm:route1}\else Theorem~3 of the main paper\fi extends the pooling recommendation to cases in which individual functionals are biased but their biases average out across $j$, as long as $J\to\infty$. When biases are uncentered, consistency is not preserved and the minimax lower bound cautions against expecting both uniform validity and short intervals without additional prior information.}

\subsection{The phase transition as a descriptive diagnostic}

{The framework does not require a formal pre-inference test for $\tau^2=0$: the sandwich confidence interval of \ifmerged Section~\ref{sec:frequentist}\else Section~5 of the main paper\fi targets $\psi^*$ and the conformal prediction interval of \ifmerged Section~\ref{sec:bayesian}\else Section~4 of the main paper\fi targets $\psi_{J+1}$, and both outputs can be reported jointly since they quantify distinct forms of uncertainty. As an informal diagnostic, the magnitude of $\widehat\tau^2$ relative to the typical $v_j$ indicates which output is more informative: when $\widehat\tau^2$ sits at the boundary, the prediction interval collapses toward the confidence interval and the two coincide asymptotically; when $\widehat\tau^2$ dominates the $v_j$, the prediction interval widens to reflect genuine identification heterogeneity and is the more honest summary of what a new functional would deliver. This descriptive use of $\widehat\tau^2$ parallels its role in the historical-controls literature, where the between-study variance plays the same boundary-detecting role for the irreducible information limit of external evidence \citep{collignon2020clustered}.}

{The $\widehat\psi_{\mathrm{EB}}$ estimator in \ifmerged Eq.~\eqref{eq:eb-estimator}\else Eq.~(2) of the main paper\fi is a precision weighted pooling of randomized and observational evidence. Conceptually it fits within the EB $g$-modeling framework of \citet{Efron2014}: a parametric model is placed on the prior $g(\psi_j)$, the marginal laws of the estimators $f(\widehat\psi_j)=\int \phi(\widehat\psi_j;\psi_j,v_j)\,\mathrm{d}g(\psi_j)$ under the \bl{working independence device}, and the hyperparameters $(\psi^*,\tau^2)$ are estimated by maximizing the resulting marginal likelihood and plugged back into the posterior mean. A conceptual caveat is that pure $g$-modeling presumes conditional independence of the $\widehat\psi_j$ given $\psi_j$, whereas here all $\widehat\psi_j$ are computed from the same sample and their true joint law has non diagonal covariance. The object we maximize is therefore the marginal likelihood of a $g$-model under the \bl{working independence device}; the consistency argument of \ifmerged Section~\ref{sec:consistency}\else Section~3 of the main paper\fi and the sandwich variance of \ifmerged Section~\ref{sec:frequentist}\else Section~5 of the main paper\fi will absorb the resulting misspecification.}

{It is important to separate estimators from their identifying targets. The $\widehat\psi_j$ are not exchangeable: conditional on the realized design, their sampling variances $v_j$ are structurally different across $j$. For Wald/IV type constructions, for instance, $v_j$ is governed by first stage strength and sample allocation (the denominator contains terms of the form $nP(Z\in S)[\pi_l - \pi_m]^2$, with $m\neq l$ indexing environments), so once the environment specific propensity profile $(\pi_z)_z$ is fixed the $v_j$ can be ranked a priori. The exchangeability assumption invoked in the hierarchical layer is imposed on the latent identifying targets $\psi_j$, not on the estimators $\widehat\psi_j$, as a symmetry statement to the effect that, before observing the data, no functional is singled out as systematically more or less biased. Symmetry arguments of this kind underlie a broader EB paradigm in which a probabilistic symmetry of the latent law delivers, via an ergodic decomposition, the hierarchical structure used for simultaneous inference \citep{wu2024bayesian}. The exchangeability we impose on the identifying targets is the instance of that paradigm appropriate to an unindexed collection of identifying functionals, and the Gaussian mixing law \ifmerged in~\eqref{eq:working-model}\else in~(1) of the main paper\fi is the parametric specialization we adopt.}

{This symmetry admits two readings that the paper keeps logically separate. If the $\psi_j$ are treated as exchangeable random quantities reflecting epistemic uncertainty over the identifying functionals \citep{french2000statistical}, de Finetti's theorem guarantees the existence of a mixing measure $G$ such that, conditional on $G$, they are i.i.d.\ draws from $G$. Specializing to $G=N(\psi^*,\tau^2)$ is then a parametric choice, and the $\tau^2>0$ regime makes a prediction interval for the latent target $\psi_{J+1}$ of a new functional meaningful (\ifmerged Section~\ref{sec:bayesian}\else Section~4 of the main paper\fi). If instead the $\psi_j$ are regarded as deterministic functionals of $P$, the hierarchical prior is a modeling device; the inferential target is then the fixed $\psi^*$, and a frequentist confidence interval via sandwich estimation or subsampling is the appropriate tool (\ifmerged Section~\ref{sec:frequentist}\else Section~5 of the main paper\fi).}

{The non exchangeability of the estimators admits a simple intuition. Conditional on the realized environment specific characteristics driving each functional---for instance the propensities $(\pi_z)_z$ that control the precision of $\widehat\psi_j$---the $v_j$ are fixed and distinct, so $(\widehat\psi_j)_j$ is not exchangeable. Were one to imagine $(\pi_z)_z$ as drawn from a permutation invariant population prior, say a symmetric Dirichlet, the joint law of $(\widehat\psi_j,v_j)_j$ would become permutation invariant.}

\subsection{Related empirical Bayes constructions and extensions}

{The proposed framework connects naturally to the minimax data fusion literature. \citet{minimaxRCTobs} show that in the low bias regime pooling strictly dominates RCT only estimation, while \citet{LiGilbertDuanLuedtke2025} handle misalignment through finite dimensional selection bias models. Our proposal complements both by absorbing between functional discrepancy into a single heterogeneity hyperparameter at the estimator level, so that precision weighted pooling is recovered in the low bias regime while the centered bias regime extends the pooling recommendation beyond strict exchangeability of populations. The price for this generality is that inference in the centered bias regime relies on resampling rather than on closed form variance formulas, and uniform confidence interval length reductions beyond a constant factor are not available without prior knowledge of the bias.}

{It is worth situating the present construction explicitly relative to the closest existing entry in the literature on combining experimental and observational evidence through empirical Bayes. \citet{wu2026illusion} model an experimental estimator $y_e\sim N(\psi^*,\sigma_e^2)$ alongside observational estimators $y_{o,j}\sim N(\psi^*+b_j,\sigma_{o,j}^2)$ with random-effects bias $b_j\sim N(\mu,\gamma^2)$, and identify $(\mu,\gamma^2)$ from a separate channel of calibration studies in which the true effect is known to be zero. The structural shape of the inference is the same one we adopt here, but the device that identifies the heterogeneity hyperparameter is different: in the present framework, the dispersion of multiple identifying functionals targeting the same effect on the same sample plays the role that calibration studies play in \citet{wu2026illusion}, and the strong exchangeability assumption their construction requires (that the bias distribution is the same in calibration and in new observational studies) is replaced by the weaker exchangeability of latent deviations across identifying functionals. Two further differences are worth noting. First, in the heterogeneous regime $\tau^2>0$, our inferential output is a conformal prediction interval that is distribution-free in the latent law $G$ (\ifmerged Theorem~\ref{thm:conformal}\else Theorem~4 of the main paper\fi), whereas the Gaussian random-effects bias model of \citet{wu2026illusion} delivers coverage only under a specific tail shape of $G$. The empirical Bayes critique tradition \citep{polson2026oldlook} argues that the choice of hyperprior tail is consequential precisely in the sparse, heavy-tailed regime where shrinkage is most valuable; the conformal answer does not have to commit to it. Second, our point estimator $\widehat\psi_{\mathrm{EB}}$ is consistent under uncentered bias whenever the biases average out across functionals as $J\to\infty$ (\ifmerged Theorem~\ref{thm:route1}\else Theorem~3 of the main paper\fi), while \citet{wu2026illusion} require a known zero-mean of the bias distribution (or, equivalently, sufficient calibration studies to estimate it). The two constructions are complementary: in problems where calibration studies are available, their channel and our internal multiplicity can be combined; in problems where they are not, our framework remains operational.}

{Following the calibration device of \citet{wu2026illusion}, a natural extension of our framework is the introduction of \emph{calibration functionals}: identifying functionals applied to a contrast whose true causal effect is known \emph{a priori} to be zero, such as a placebo cohort, a pre treatment period, or a sub population subject to a known null intervention. Calibration functionals can identify the location $\mu=E[\varepsilon_j]$ of the bias distribution and remove the centering condition (C2) of \ifmerged Theorem~\ref{thm:route1}\else Theorem~3 of the main paper\fi. Concretely, the working model becomes $\psi_j=\psi^*+\mu+\varepsilon_j$ with $\varepsilon_j\sim N(0,\tau^2)$, the marginal likelihood of the calibration functionals is concave in $\mu$ given $\tau^2$, and the resulting plug in $\widehat\mu$ is subtracted from the substantive functionals before the precision weighted aggregation, restoring consistency under uncentered identification bias. The illusion return phenomenon of \citet{wu2026illusion}---in which jointly estimating $(\mu,\tau^2)$ from the substantive estimators alone collapses $\widehat\psi_{\mathrm{EB}}$ onto the minimum variance functional---explains why a separate calibration channel is essential rather than a stylistic preference. We leave a formal treatment of this calibration device within the present framework, including the resulting consistency and inferential guarantees under uncentered identification bias, as an open problem.}

\end{document}